\definecolor{newcolor}{rgb}{.8,.349,.1}
\pgfplotsset{compat=1.11}
\newtheorem{thm}{Theorem}[section]
\newtheorem{lem}[thm]{Lemma}
\numberwithin{equation}{section}
\DeclareMathOperator\supp{supp} 
\title{An arbitrary-order Cell Method with block-diagonal mass-matrices for the time-dependent 2D Maxwell equations}
\author[1]{Bernard Kapidani}
\author[2]{Lorenzo Codecasa}
\author[1]{Joachim Sch{\"o}berl}
\affil[1]{Institute for Analysis and Scientific Computing, Technische Universit{\"a}t Wien, A-1040, Vienna, Austria.\authorcr
    \tt bernard.kapidani@tuwien.ac.at, joachim.schoeberl@tuwien.ac.at }
\affil[2]{Dipartimento di Elettronica, Informatica e Bioingegneria, Politecnico di Milano, I-20133 Milano, Italy \authorcr
    \tt lorenzo.codecasa@polimi.it }
\date{\today}
\begin{document}

\maketitle
\begin{abstract}
  We introduce a new numerical method for the time-dependent Maxwell equations on unstructured meshes in two space dimensions. This relies on the introduction of a new mesh, which is the barycentric-dual cellular complex of the starting simplicial mesh, and on approximating two unknown fields with integral quantities on geometric entities of the two dual complexes. A careful choice of basis-functions yields cheaply invertible block-diagonal system matrices for the discrete time-stepping scheme.
  The main novelty of the present contribution lies in incorporating arbitrary polynomial degree in the approximating functional spaces, defined through a new reference cell. The presented method, albeit a kind of Discontinuous Galerkin approach, requires neither the introduction of user-tuned penalty parameters for the tangential jump of the fields, nor numerical dissipation to achieve stability. In fact an exact electromagnetic energy conservation law for the semi-discrete scheme is proved and it is shown on several numerical tests that the resulting algorithm provides spurious-free solutions with the expected order of convergence.
\end{abstract}

\section{Introduction}\label{sec:intro}
James Clerk Maxwell \cite{maxwell1} showed in 1861 that the electric and magnetic fields are not separate phenomena: they instead exchange energy as their amplitudes oscillate in wave patterns, which propagate through space at the speed of light. The resulting celebrated Maxwell equations have withstood the revolutions of the modern physics' world and, to the present day, are always needed to accurately describe radio-frequency devices in industry or to explain experimental findings in electromagnetism.
A hundred years after Maxwell's original theory, later succinctly recast by Heaviside \cite{heaviside1} in the language of vector calculus, Yee showed in \cite{yee} how a Maxwell initial boundary value problem (MIBVP) in 3+1 dimensions of space and time can be solved efficiently on computers, by appropriately choosing the points where fields and their derivatives are to be approximated by finite difference equations on two staggered and uniformly spaced Cartesian--orthogonal grids. Since then Yee's algorithm has slowly become ubiquitous (see \cite{taflove,meep}), yet, a plethora of other methods has consequently also been proposed, analysed and tested to account for its various shortcomings: ineffectiveness in the case of material discontinuities which cannot be aligned with the Cartesian axes and fixed $\mathcal{O}(h^2 + \tau^2)$ order of convergence, where $h$ and $\tau$ are the discrete steps in the spatial and temporal grids, respectively.
Without any pretence of being exhaustive, we mention in this introductory section some families of approaches which try to mend these drawbacks.

There are approaches based on conforming finite elements spaces (see \cite{jin,monk} and references therein), which work on unstructured space grids and present (tangentially continuous) piecewise-polynomials vector basis-functions of arbitrary degree (mainly the ones introduced by Nedelec in \cite{nedelec1980}). Unfortunately these approaches lose the efficiency inherent in the Yee algorithm, since the system matrices\footnote{usually the mass-matrices.} which need to be inverted at every time-step are banded but not (block-)diagonal. This amenable structure can be retrieved if mass-lumping techniques are employed (e.g. \cite{white}), where basis-functions are strongly tied to inexact numerical integration rules and need to be completely re-computed (or are simply unavailable) if the order of approximation needs to be increased.
Later developments led instead to the adoption of Discontinuous Galerkin (DG) Finite Element Method (FEM) approaches, which ignore the conformity constraint on the basis-functions and use orthonormal bases (which in principle lead to spectral convergence rates) compactly supported inside each finite element in the spatial discretisation of the domain. This choice, of course, destroys the geometry of the continuous Maxwell system, introducing spurious numerical solutions which do not converge to physical ones as the mesh size $h$ tends to zero, and the presence of which can be easily detected by applying the same discretisation method for solving the Maxwell eigenvalue problem (MEP) instead of the MIBVP \cite{spurious_modes}. Counter-measures can be taken, in the form of penalization terms for the tangential jumps in the approximated solutions: for example, using up-wind fluxes (as in \cite{warburton_jcp}) eliminates spurious solutions by introducing numerical energy dissipation in the scheme, which fact can become unacceptable when long-time behaviours of electromagnetic systems have to be studied. On the other hand, symmetric-interior-penalty (SIP) schemes (see \cite{grote,christophs,huang_sip}) preserve the hyperbolic nature of the system by introducing more unknowns which live on the skeleton of the mesh and do not approximate any physical quantity. Furthermore, a positive definite scalar penalty parameter, which must be tuned by the algorithm's user in accordance with $h$ and the maximum polynomial degree in the chosen bases, must also be inserted in the formulation.

There is a third class of mutually related methods which mimic more closely Yee's original algorithm: the Finite Integration Technique of \cite{weiland,matsuo}, which recasts the equations in integral form to apply the Yee algorithm to general staggered cuboidal elements but does not improve the accuracy of the original method otherwise (although we note that higher order versions of the method restricted to Cartesian-orthogonal grids do exist, e.g. \cite{chung}), the cell method (CM) of \cite{tonti,marrone,codecasa_politi,pigitd,dgatap}, which is also developed on two spatial grids in the more general setting of unstructured meshes, where a dual mesh is obtained either by barycentric subdivision (a procedure we will review in the present contribution) or by the circumcentric one of the primal mesh. These methods can be theoretically studied in a wider framework (see also \cite{Stern2015,teixeira_lattice}) of approaches particularly fitting for Maxwell's equations (since they encode the so-called De Rham complex), in which differential operators are discretised using only topological information about the input mesh and all the metric information is instead encapsulated in the mass-matrix (which is in this context much rather seen as a discrete Hodge-star operator, e.g. \cite{hiptmair2001, kettunen_hodge,bossavit_yee,auchmann}). The structure-preserving nature of these methods comes at the price of not being able to extend their convergence order to asymptotics steeper than $\mathcal{O}(h)$ (or $\mathcal{O}(h^2)$ at best if strict conditions on the mesh are imposed). This elusive higher order approximation remains a much desired property, since, far from material discontinuities, solutions of the MIBVP are smooth and oscillatory.

In the present paper we are strongly inspired by this latter framework: we start from the set of basis-functions introduced by Codecasa and co-authors in \cite{codecasa_politi,dgatap}, and more recently studied in \cite{dga_as_dg}, where an equivalence between their formulation and a peculiar DG one using two barycentric-dual unstructured meshes and piecewise-constant basis-functions was proven by some of the present authors. Building on this result, we show how to extend the method to arbitrary degree in the local polynomial basis-functions. To make the present work as self-contained as possible, we use Section \ref{sec:maxwell} to review the continuous problem and the associated notation and Section \ref{sec:mesh} to review concepts related to barycentric-dual cellular complexes. In Section \ref{sec:funspaces} the abstract setting in terms of involved functional spaces for the new algorithm is introduced (which fact also gives new and valuable mathematical background for the cell method), followed by an explicit construction of the bases for finite-dimensional (arbitrary-order) approximations of the newly introduced spaces. A proof is given for the electromagnetic energy conservation property of the ensuing semi-discrete scheme.
Section \ref{sec:basis} provides some insight on the relationship between the new arbitrary-order scheme and known lowest order ones, in so far they present the same explicit splitting of topological and geometric operators. Some details on the optimization of a possible computer implementation are also given. Section \ref{sec:num} provides numerical experiments to validate the correctness and performance of the proposed method: particular focus is devoted to showing the spectral correctness of the method (which is paramount for practical high bandwidth applications). Some general remarks, open questions, and directions for future work conclude the paper in Section \ref{sec:conclusio}.
\section{The Maxwell system of equations}\label{sec:maxwell}
In two space dimensions, the most general form for the MIBVP is
\begin{align}
  & \partial_t\bm{D}\left(\bm{r},t\right) = \bm{curl}(H\left(\bm{r},t\right)) -\bm{J}(\bm{r},t), \label{eq:ampmax}\\
  & \partial_t B\left(\bm{r},t\right) = - curl(\bm{E}\left(\bm{r},t\right)), \label{eq:faraday}\\
  & div(\bm{D}(\bm{r},t)) = \rho_c (\bm{r},t), \label{eq:egauss}\\
  & div(B(\bm{r},t)) = 0, \label{eq:mgauss}
\end{align}
\noindent to be solved $\forall t \!\in\! [0,+\infty[$ and for all $\bm{r}(x,y)$ in the bounded domain $\Omega \subset \mathbb{R}^{2}$. The fields $\bm{E}(\bm{r},t)$ and $\bm{D}(\bm{r},t)$ go by the names of electric field and electric displacement field, respectively, while the fields $H(\bm{r},t)$ and $B(\bm{r},t)$ are called magnetic field and magnetic induction field. Fields $\bm{J}(\bm{r},t)$ (the convective electric current) and $\rho_c(\bm{r},t)$ (the free electric charge) are source-terms which cause the dynamics of electromagnetic fields, i.e. they are the true right-hand side (r.h.s.) in the system of partial differential equations.

Since we set ourselves in the $\mathbb{R}^2$ ambient space, we denote only some of the unknown fields in bold-face: $\bm{E}(\bm{r},t)$ is in fact a (polar) vector field living in the Cartesian plane, while $H(\bm{r},t)$ is a pseudo-vector aligned with the $z$-axis: the true vector field would live in $\mathbb{R}^3$, with the condition $\bm{H}(\bm{r},t) = \left(0,\;0,\;H(\bm{r},t)\right)^{\mathrm{T}}$ (where the $(\cdot)^{\mathrm{T}}$ superscript denotes vector or matrix transposition).
In the applied jargon of microwave engineers this is the so-called \emph{Transverse-Magnetic} (TM) field.
We have accordingly used the appropriate $curl$ and $div$ (for divergence) operators for any vector field $\bm{v}(\bm{r},t) = ( v_x (\bm{r},t),\; v_y (\bm{r},t) )^{\mathrm{T}}$, defined (in Cartesian coordinates) as
\begin{align*}
  & curl(\bm{v}(\bm{r},t)) = \partial_x v_y(\bm{r},t)-\partial_y v_x(\bm{r},t),\\
  & div(\bm{v}(\bm{r},t)) = \partial_x v_x(\bm{r},t) + \partial_y v_y(\bm{r},t),
\end{align*}
\noindent as well as the $\bm{curl}$ and $div$ operators for any pseudo-vector $u(\bm{r},t)$, defined as
\begin{align*}
  & \bm{curl}(u(\bm{r},t)) = \left(\partial_y u(\bm{r},t), \; -\partial_x u (\bm{r},t)\right)^\mathrm{T},\\
  & div(u(\bm{r},t)) = \partial_z u(\bm{r},t) = 0,
\end{align*}
\noindent all valid for suitably differentiable components of $\bm{v}$, $u$. We will also make use of the identities
\begin{align}
  & curl\left( u\bm{v}\right) = curl(\bm{v})\,u + \bm{curl}(u)\cdot\bm{v}, \label{eq:green_id}\\
  & \int_\Omega curl\left(\bm{v}\right)\,\mathrm{d}\bm{r} =  \oint_{\partial\Omega} \bm{v}\cdot\hat{\bm{t}}(\ell)\,\mathrm{d}\ell, \label{eq:green_th}
\end{align}
\noindent namely the product rule for partial derivatives and the Green theorem, again valid for suitably differentiable functions. 
The notation $\hat{\bm{t}}(\ell)$ will denote the tangential unit vector on a directed curve (for example the boundary of $\Omega$, denoted $\partial\Omega$) for which $\ell$ is the arc-length parameter.
Furthermore, we remark that the tangent unit vector is taken to always induce a counter-clockwise circulation on contours in accordance with the well-known cork-screw rule. 
One can promptly argue that equations (\ref{eq:egauss})--(\ref{eq:mgauss}) are not dynamical constraints but rather initial conditions. It is easy to see that, if (\ref{eq:egauss})--(\ref{eq:mgauss}) hold true for $t=0$, then they are satisfied for any $t$ with $0< t < +\infty$: it suffices taking the divergence of both sides in the remaining two equations and integrating them with respect to time from zero to the chosen instant. We are therefore left with two equations and four unknowns: to make the system meaningful again, (\ref{eq:ampmax})--(\ref{eq:faraday}) must be supplemented with the phenomenological\footnote{experimentally determined.} constitutive equations 
\begin{align}
  & \bm{D}(\bm{r},t) = \varepsilon(\bm{r},t) \bm{E}\left(\bm{r},t\right),\label{eq:const_DE}\\
  & B(\bm{r},t) = \mu(\bm{r},t) H(\bm{r},t),\label{eq:const_BH}
\end{align}
\noindent where $\varepsilon = \varepsilon_0\varepsilon_r$, $\mu = \mu_0\mu_r$ are respectively called dielectric permittivity and magnetic permeability, with $\mu_0$ and $\varepsilon_0$ also being experimental constants and $c_0 = \left(\mu_0\varepsilon_0\right)^{-\frac{1}{2}}$ being the speed of light (i.e. the wave-speed of electromagnetic radiation) in a vacuum.
 
We will now make some mildly restrictive assumptions: we consider, in all that follows, time-invariant materials (for which generalization to general dispersive ones is, as for all numerical methods, more involved and will be the object of future studies). We further assume the material parameters to be symmetric positive-definite (s.p.d.) tensors (of rank two for $\varepsilon$, rank one for $\mu$) with piecewise-smooth and point-wise bounded (in space) real coefficients.
Only for simplicity of presentation, we also consider the source-free equations, i.e. $\bm{J}=\bm{0}$, $\rho_c = 0$ (where generalization of the analysis to problems with sources is straightforward and will be employed in the numerical experiments in Section \ref{sec:num}). 
We finally assume the spatial domain $\Omega$ to be a bounded polygon and allow homogeneous Dirichlet boundary conditions on either field: $\bm{E}(\bm{r},t)\cdot\hat{\bm{t}}(\ell) = \boldsymbol{0}$, $\forall \bm{r}\!\in\!\partial\Omega$, i.e. perfect electric conductor (PEC) boundary conditions in the applied jargon, or $H(\bm{r},t) = 0$, $\forall \bm{r}\!\in\!\partial\Omega$, i.e. perfect magnetic conductor (PMC). It is easy to deduce from the system of equations that Dirichlet boundary conditions for any of the two fields imply Neumann ones for the remaining unknown, and vice-versa.

\section{Barycentric-dual complexes}\label{sec:mesh}
Having as a goal the numerical solution of (\ref{eq:ampmax})--(\ref{eq:faraday}), we assume a conforming (see \ref{sec:app1}) partition of $\Omega$ into triangles (a triangular mesh) to be available, which can be easily provided from any black-box mesher (e.g. \cite{netgen,distmesh}).
Rigorously speaking, said partition is a particular kind of simplicial complex. We define a simplicial complex for $\Omega$, denoted $\mathcal{C}^\Omega$, as a sequence of sets of simplexes in various dimensions
$$\mathcal{C}^\Omega = \{\mathcal{C}_k^\Omega\}_{k=0,1,\dots,d},$$
\noindent where $d=2$ is the ambient space dimension. Keeping in mind that a $k$--simplex is the convex hull of $k\!+\!1$ affinely independent points, $\mathcal{C}_2^{\Omega}$ will denote the set of triangles (2-simplexes), $\mathcal{C}_1^\Omega$ will denote the set of edges (1-simplexes), while $\mathcal{C}_0^\Omega$ will denote the set of vertices (0-simplexes) in the mesh. We also define the skeleton of a complex:
\begin{align}
  & \mathcal{S}\left(\mathcal{C}^\Omega\right) = \bigcup_{i=0}^{i=d-1} \mathcal{C}_k^\Omega , \label{eq:skeleton_primal}
\end{align}
\noindent i.e. the set of all simplexes of dimension smaller than the maximal one.
Furthermore, we assume that the vertices in $\mathcal{C}_0^\Omega$ can be ordered by virtue of an index set $\mathcal{I}$, $i\!\in\!\mathbb{N}^+, \forall i\in \mathcal{I}$. Consequently all edges in $\mathcal{C}_1^\Omega$ possess a global (in $\mathcal{C}^\Omega$) inner orientation induced by the ordering of vertices in their boundary.

A generic simplicial complex is itself a particular type of \emph{cellular} (or cell) complex, which is the more general structure one gets if they relax the requirement on geometric entities of $\mathcal{C}^\Omega$ from being simplexes to, for example, being generic polytopes (called $k$-cells instead of $k$-simplexes). Our starting mesh, as any given simplicial complex, possesses a dual complex, which we denote (in 2D) with $\tilde{\mathcal{C}}^\Omega = \{\tilde{\mathcal{C}}_0^\Omega,\tilde{\mathcal{C}}_1^\Omega,\tilde{\mathcal{C}}_2^\Omega\}$ and which is indeed a cellular complex \emph{but not a simplicial one}. The existence of a dual cellular complex hinges on a sequence of one-to-one mappings $\{D_k\}_{k=0,1,2}$ such that
\begin{align}
  &D_k: \mathcal{C}_k^\Omega \mapsto \tilde{\mathcal{C}}_{d-k}^\Omega.\label{eq:duality_mapping}
\end{align}

This mathematical concept originally arose in solutions of algebraic-topological problems \cite{munkres}, and the geometric realization (which is non-unique) of such a dual cellular complex is very often outside of the computational needs of topologists. On the contrary, for what follows, it is a fundamental choice to construct $\tilde{\mathcal{C}}_\Omega$ via the barycentric subdivision of $\mathcal{C}^\Omega$: each vertex $\tilde{\mathbf{v}} \in \tilde{\mathcal{C}}_0^\Omega$ is the centroid of some $\mathcal{T}\in\mathcal{C}_2^\Omega$, each $\tilde{e} \in \tilde{\mathcal{C}}_1^\Omega$ is a polyline obtained by joining the centroid of some $E\in\mathcal{C}_1^\Omega$ to the centroids of neighbouring triangles, while each $\tilde{\mathcal{T}} \in \tilde{\mathcal{C}}_2^\Omega$ is a (generally non-convex) polygon bounded by dual edges (elements of $\tilde{\mathcal{C}}_1^\Omega$) and containing exactly one vertex $\mathbf{v}\in\mathcal{C}_0^\Omega$. A depiction of one simplicial complex and its barycentric-dual companion is given in the two first leftmost panels of Fig.~\ref{fig:staggered_grids}, while the whole formal procedure is more thoroughly described in \ref{sec:app1}.

\begin{figure}[!h]
    \centering
    \begin{minipage}{0.3\textwidth}
        \centering
        \begin{tikzpicture}[thick,scale=3.5, every node/.style={scale=3.5}]
        \draw (0.0,0.0) -- (0.5,0.0) -- (0.5,0.5) -- (0.0,0.0);
        \draw[fill=gray] (0.0,0.0) -- (0.5,0.5) -- (0.0,0.5) -- (0.0,0.0);
        \draw (0.0,0.5) -- (0.5,0.5) -- (0.0,1.0) -- (0.0,0.5);
        \draw (0.5,0.5) -- (0.0,1.0) -- (0.5,1.0) -- (0.5,0.5);
        \draw (0.5,0.5) -- (0.5,0.0) -- (1.0,0.0) -- (0.5,0.5);
        \draw (0.5,0.5) -- (1.0,0.0) -- (1.0,0.5) -- (0.5,0.5);
        \draw (0.5,0.5) -- (1.0,0.5) -- (1.0,1.0) -- (0.5,0.5);
        \draw (0.5,0.5) -- (1.0,1.0) -- (0.5,1.0) -- (0.5,0.5);
        \draw[color=red] (1/2,1/2) -- (1.0,1/2);
        \node[scale=0.3] at (1/2+0.05,1/2+0.1) {$\mathbf{v}$};
        \node[scale=0.3,color=red,thick] at (3/4,1/2+0.05) {$E$};
        \node[scale=0.3] at (1/6,1/3) {$\mathcal{T}$};
        \node[circle,fill,scale=0.05] at (1/2,1/2) {};
        \end{tikzpicture}\end{minipage}
    \begin{minipage}{0.3\textwidth}
        \centering
        \begin{tikzpicture}[thick,scale=3.5, every node/.style={scale=3.5}]
        \draw[color=black,fill=gray] (1/2,1/4) -- (2/3,1/6) -- (3/4,1/4) -- (5/6,1/3) -- (3/4,1/2)
        -- (5/6,2/3) -- (3/4,3/4) -- (2/3,5/6) -- (1/2,3/4) -- (1/3,5/6) -- (1/4,3/4)
        -- (1/6,2/3) -- (1/4,1/2) -- (1/6,1/3) -- (1/4,1/4) -- (1/3,1/6) -- (1/2,1/4);
        \draw[dashed] (0.0,0.0) -- (0.5,0.0) -- (0.5,0.5) -- (0.0,0.0);
        \draw[dashed] (0.0,0.0) -- (0.5,0.5) -- (0.0,0.5) -- (0.0,0.0);
        \draw[dashed] (0.0,0.5) -- (0.5,0.5) -- (0.0,1.0) -- (0.0,0.5);
        \draw[dashed] (0.5,0.5) -- (0.0,1.0) -- (0.5,1.0) -- (0.5,0.5);
        \draw[dashed] (0.5,0.5) -- (0.5,0.0) -- (1.0,0.0) -- (0.5,0.5);
        \draw[dashed] (0.5,0.5) -- (1.0,0.0) -- (1.0,0.5) -- (0.5,0.5);
        \draw[dashed] (0.5,0.5) -- (1.0,0.5) -- (1.0,1.0) -- (0.5,0.5);
        \draw[dashed] (0.5,0.5) -- (1.0,1.0) -- (0.5,1.0) -- (0.5,0.5);
        \draw[color=black] (1/4,0.0) -- (1/3,1/6);
        \draw[color=black] (3/4,0.0) -- (2/3,1/6);
        \draw[color=black] (1.0,1/4) -- (5/6,1/3);
        \draw[color=black] (1.0,3/4) -- (5/6,2/3);
        \draw[color=black] (3/4,1.0) -- (2/3,5/6);
        \draw[color=black] (1/4,1.0) -- (1/3,5/6);
        \draw[color=black] (0.0,3/4) -- (1/6,2/3);
        \draw[color=black] (0.0,1/4) -- (1/6,1/3);
        \draw[color=red] (5/6,1/3) -- (3/4,1/2) -- (5/6,2/3);
        \node[scale=0.3] at (1/2+0.05,1/2+0.1) {$\tilde{\mathcal{T}}$};
        \node[scale=0.3,thick,color=red] at (5/6,1/2+0.07) {$\tilde{E}$};
        \node[circle,fill,scale=0.05] at (1/6,1/3) {};
        \node[scale=0.3] at (1/6,1/4) {$\tilde{\mathbf{v}}$};
        \end{tikzpicture}\end{minipage}
    \begin{minipage}{0.3\textwidth}
        \centering
        \begin{tikzpicture}[thick,scale=3.5, every node/.style={scale=3.5}]
        \draw[color=black] (1/2,1/4) -- (2/3,1/6) -- (3/4,1/4) -- (5/6,1/3) -- (3/4,1/2)
        -- (5/6,2/3) -- (3/4,3/4) -- (2/3,5/6) -- (1/2,3/4) -- (1/3,5/6) -- (1/4,3/4)
        -- (1/6,2/3) -- (1/4,1/2) -- (1/6,1/3) -- (1/4,1/4) -- (1/3,1/6) -- (1/2,1/4);
        \draw[color=black] (0.0,0.0) -- (0.5,0.0) -- (0.5,0.5) -- (0.0,0.0);
        \draw[color=black] (0.0,0.0) -- (0.5,0.5) -- (0.0,0.5) -- (0.0,0.0);
        \draw[color=black] (0.0,0.5) -- (0.5,0.5) -- (0.0,1.0) -- (0.0,0.5);
        \draw[color=black] (0.5,0.5) -- (0.0,1.0) -- (0.5,1.0) -- (0.5,0.5);
        \draw[color=black] (0.5,0.5) -- (0.5,0.0) -- (1.0,0.0) -- (0.5,0.5);
        \draw[color=black] (0.5,0.5) -- (1.0,0.0) -- (1.0,0.5) -- (0.5,0.5);
        \draw[color=black] (0.5,0.5) -- (1.0,0.5) -- (1.0,1.0) -- (0.5,0.5);
        \draw[color=black] (0.5,0.5) -- (1.0,1.0) -- (0.5,1.0) -- (0.5,0.5);
        \draw[color=black] (1/4,0.0) -- (1/3,1/6);
        \draw[color=black] (3/4,0.0) -- (2/3,1/6);
        \draw[color=black] (1.0,1/4) -- (5/6,1/3);
        \draw[color=black] (1.0,3/4) -- (5/6,2/3);
        \draw[color=black] (3/4,1.0) -- (2/3,5/6);
        \draw[color=black] (1/4,1.0) -- (1/3,5/6);
        \draw[color=black] (0.0,3/4) -- (1/6,2/3);
        \draw[color=black] (0.0,1/4) -- (1/6,1/3);
        \draw[dashed,fill=gray] (1/2,0.0) -- (3/4,0.0) -- (2/3,1/6) -- (1/2,1/4) -- (1/2,0.0);
        \draw[color=red] (3/4,0.0) -- (2/3,1/6);
        \draw[color=green] (1/2,1/4) -- (1/2,0.0);
        \node[scale=0.3] at (3/5,1/9) {$K$};
        \node[scale=0.3,thick,color=red] at (3/4+0.05,1/12) {$\tilde{e}$};
        \node[scale=0.3,thick,color=green] at (1/2-0.05,1/8) {${e}$};
        \end{tikzpicture}\end{minipage}
    \caption{The primal and dual complex: a glossary. On the left we mesh the unit square $\Omega=[0,1]\times[0,1]$ with the simplicial complex $\mathcal{C}_\Omega$ and we show $\mathbf{v}\in\mathcal{C}_0^\Omega$, $E\in\mathcal{C}_1^\Omega$, $\mathcal{T}\in\mathcal{C}_2^\Omega$. In the middle, where the primal complex is shown dashed, we have constructed the barycentric-dual complex: $\tilde{\mathbf{v}}\in\tilde{\mathcal{C}}_0^\Omega$ is dual to $\mathcal{T}$, $\tilde{E}\in\tilde{\mathcal{C}}_1^\Omega$ is dual to $E$, $\tilde{\mathcal{T}}\in\tilde{\mathcal{C}}_2^\Omega$ is dual to $\mathbf{v}$. On the right, we finally draw the resulting auxiliary complex $\mathcal{K}^\Omega$ and also emphasize a quadrilateral $K\in\mathcal{K}_2^\Omega$, an edge ${e}\in\{\mathcal{K}_1^\Omega\cap\mathcal{S}(\mathcal{C}^\Omega)\}$, and an edge $\tilde{e} \in \{\mathcal{K}_1^\Omega \cap \mathcal{S}( \tilde{\mathcal{C}}^\Omega )\}$.}
    \label{fig:staggered_grids}
\end{figure}
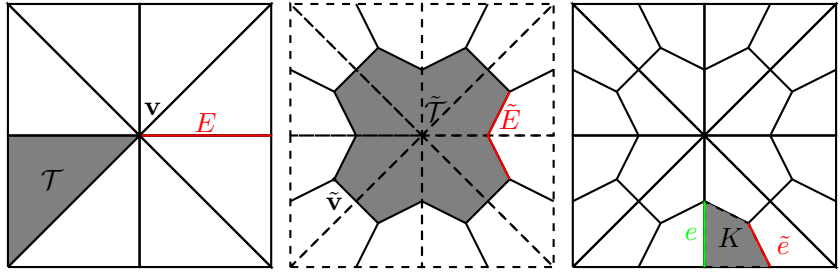

We will also need, for what lies ahead in the paper, to define an additional complex $\mathcal{K}^\Omega =\{\mathcal{K}_k^\Omega\}_{k=0}^2$, where
\begin{align}
  & \mathcal{K}_2^\Omega = \left\{ \emptyset \neq K=\mathcal{T}\cap\tilde{\mathcal{T}}, \;\; \forall\mathcal{T}\!\in\!\mathcal{C}_2^\Omega, \forall\tilde{\mathcal{T}}\!\in\!\tilde{\mathcal{C}}_2^\Omega  \right\},\label{eq:kite_complex}
\end{align}
\noindent and where we note that each $d$-dimensional simplex of the original mesh (and hence the whole of $\Omega$) is thus further partitioned into $d$+1 disjoint subsets $K\!\in\!\mathcal{K}_2^\Omega$ (see again Fig.~\ref{fig:staggered_grids}, rightmost panel). For any original triangle, we get three irregular quadrilaterals, which will be of utmost importance, and we will call \emph{fundamental 2-cells} (see the definition of \emph{micro-cell} in \cite{marrone} or see \cite{dga_as_dg}) and denote with $K$ in the rest of the paper. Definitions of lower dimensional sets $\mathcal{K}_1^\Omega$ and $\mathcal{K}_0^\Omega$ are intuitive, but we additionally provide here an explicit decomposition of $\mathcal{K}_1^\Omega$ into the two sets of segments
\begin{align*}
  & \mathcal{K}_1^\Omega\cap\mathcal{S}(\mathcal{C}^\Omega) = \{ \emptyset \neq {e} = E \cap \partial K \; s.t. \; E \in \mathcal{C}_1^\Omega, K\in\mathcal{K}_2^\Omega\},\\
  & \mathcal{K}_1^\Omega\cap\mathcal{S}(\tilde{\mathcal{C}}^\Omega) = \{ \emptyset \neq \tilde{{e}} =
                             \tilde{E} \cap \partial K \; s.t. \; \tilde{E} \in \tilde{\mathcal{C}}_1^\Omega, K\in\mathcal{K}_2^\Omega\},
\end{align*}
\noindent of which $\mathcal{K}_1^\Omega$ is the disjoint union. We furthermore note that every $K\in\mathcal{K}_2^\Omega$ is uniquely identified by a triangle--vertex pair $(\mathcal{T},\mathbf{v})$, for some triangle $\mathcal{T}\!\in\!\mathcal{C}_2^\Omega$ and some mesh vertex $\mathbf{v} \!\in\! \{\mathcal{C}_0^\Omega \cap \partial\mathcal{T}\}$.
Having constructed the appropriate dual complex we can refer hereinafter to the starting simplicial one as the primal complex.

We mention in passing the circumcentric-dual (see \cite{tonti}) as another popular construction employed in the literature, which has amenable properties for finite volumes schemes (see \cite{leveque_2002} and references therein), but requires the triangulation of $\Omega$ to be a Delaunay one, which is usually too restrictive or simply not satisfied by the meshing algorithm at hand.

We conclude the section by remarking that there is an equivalent definition of skeleton $\mathcal{S}^k(\tilde{\mathcal{C}}^\Omega)$ for the dual complex and noting that in the following the notation $|\mathcal{C}_k^\Omega|$ and $|\tilde{\mathcal{C}}_k^\Omega|$ will, as customary, denote the size of the argument set (e.g. the number of triangles in the primal complex is $|\mathcal{C}_2^\Omega|$, the number of edges in the primal complex is $|\mathcal{C}_1^\Omega|$, etc.).

\section{The new formulation: continuous and discrete}\label{sec:funspaces}
In the present section we will turn our attention to functions supported on these complexes and use ingredients from the theory of Sobolev spaces to develop a mathematical background for our method. We will thus finally jump back to the Maxwell system we want to solve and make use of all the machinery.

\subsection{Barycentric-dual discontinuous functional spaces}
For any bounded $\mathcal{D}\subset\mathbb{R}^2$, we recall the usual real Hilbert spaces:
\begin{align*}
  & L^2(\mathcal{D})  = \left\{ f:\mathcal{D}\mapsto\mathbb{R} \; s.t. \; \int_\mathcal{D} \vert f \vert^2\,\mathrm{d}\bm{r}  < +\infty \right\},\\
  & \bm{L}^2(\mathcal{D}) = \bigm\{ \bm{v} = \left(f,\;g\right)^{\mathrm{T}} : \mathcal{D}\mapsto\mathbb{R}^2 \; s.t. \; f,g \!\in\! L^2(\mathcal{D})\bigm\},
\end{align*}
\noindent from which we infer the standard inner product and its induced norm:
\begin{align*}
  & \left( f,g \right)_\mathcal{D} := \int_\mathcal{D} fg\,\mathrm{d}\bm{r},\;\;\;
    \Norm{f}[\mathcal{D}] = \left(\int_\mathcal{D} \vert f \vert^2\,\mathrm{d}\bm{r}\right)^{\frac{1}{2}} =  \left( f,f \right)_\mathcal{D}^{\frac{1}{2}}, 
\end{align*}
\noindent for all $f, g \!\in\! L^2(\mathcal{D})$. The following inner product and norm are also implied from the definition of $\bm{L}^2(\mathcal{D})$:
\begin{align*}
  & \left( \bm{v},\bm{w} \right)_\mathcal{D} := \int_\mathcal{D} \bm{v}\cdot\bm{w}\,\mathrm{d}\bm{r},\;\;\;
    \Norm{\bm{v}}[\mathcal{D}] = \left(\int_\mathcal{D} \vert \bm{v} \vert^2\,\mathrm{d}\bm{r}\right)^{\frac{1}{2}} = \left( \bm{v},\bm{v} \right)_\mathcal{D}^{\frac{1}{2}},
\end{align*}
\noindent for all $\bm{v}, \bm{w} \!\in\! \bm{L}^2(\mathcal{D})$.
To properly define the electromagnetic energy for a generic computational domain, we will often need a weighted norm which includes $\varepsilon$ and $\mu$, which we will denote by adding the appropriate symbol to the subscripts of standard $L^2$ or $\bm{L}^2$ norms: 
\begin{align*}
  & \Norm{\bm{v}}[\mathcal{D},\varepsilon] =
  \left( \varepsilon\bm{v},\bm{v}\right)_\mathcal{D}^{\frac{1}{2}},\;\;
    \Norm{u}[\mathcal{D},\mu] =
  \left( \mu u,u\right)_\mathcal{D}^{\frac{1}{2}},
\end{align*}
\noindent which are well-defined by virtue of the s.p.d. assumption on the material tensors. We finally introduce the following real Sobolev spaces
\begin{align*}
  & {\bm{H}}^{curl} (\mathcal{D}) = \left\{ \bm{v} \!\in\! \bm{L}^2(\mathcal{D}) \; s.t. \;
  curl(\bm{v}) \!\in\! L^2(\mathcal{D})\right\},\\
  & H^{\bm{curl}}(\mathcal{D})  = \left\{ u \!\in\! L^2(\mathcal{D}) \; s.t. \; \bm{curl}(u)\!\in\! \bm{L}^2(\mathcal{D})\right\},
\end{align*}
\noindent where all derivatives are now taken in the distributional sense. Since we assume that time and space are separable, the semi-weak solutions of the Maxwell system live in function spaces which are well-established in the literature:
\begin{align*}
  & \bm{E}(\bm{r},t) \in AC\!\left([0,\mathrm{T}]\right)\otimes \bm{H}^{curl}(\Omega),\\
  & {H}(\bm{r},t) \in AC\!\left([0,\mathrm{T}]\right) \otimes H^{\bm{curl}}(\Omega),
\end{align*}
for end-time $t=\mathrm{T}$ s.t. $0<\mathrm{T}<+\infty$, where $AC\left([0,\mathrm{T}]\right)$ denotes the space of absolutely continuous functions on $[0,\mathrm{T}]$. We keep the differentiability condition in the strong sense in the time variable, since we will be discretising it with finite differences (as in the Yee algorithm), and we postpone the inclusion of boundary conditions to a later point in the paper.
We define now, with reference to the complexes introduced in Section \ref{sec:mesh}, the new \emph{broken} Sobolev spaces
\begin{align}
  \bm{H}^{curl}(\tilde{\mathcal{C}}_2^\Omega) &= \left\{ \bm{v} \!\in\! \bm{L}^2(\Omega) \; s.t. \;
  \bm{v}|_{\tilde{\mathcal{T}}} \!\in\! \bm{H}^{curl}(\tilde{\mathcal{T}}),\,
  \forall\tilde{\mathcal{T}} \!\in\! \tilde{\mathcal{C}}_2^\Omega 
  \right\},\label{eq:broken_vector_valued}\\
  H^{\bm{curl}}(\mathcal{C}_2^\Omega) &= \left\{ u \!\in\! L^2(\Omega) \; s.t. \;
  u|_\mathcal{T} \!\in\! H^{\bm{curl}}(\mathcal{T}),\,
  \forall \mathcal{T} \!\in\! \mathcal{C}_2^\Omega\right\}. \label{eq:broken_scalar_valued}
\end{align}

Informally speaking, these are locally conforming spaces which are globally non-conforming on $\Omega$, yet the non-conformity has a different support for the two spaces. Our next step is now to apply \emph{local} testing in space to equations (\ref{eq:ampmax}) and (\ref{eq:faraday}) with respect to the new broken spaces, that is
\begin{align}
& \sum_{\tilde{\mathcal{T}}\in\tilde{\mathcal{C}}_2^\Omega}
\left( \varepsilon \partial_t\bm{E},\bm{v}\right)_{\tilde{\mathcal{T}}} =
\sum_{\tilde{\mathcal{T}}\in\tilde{\mathcal{C}}_2^\Omega} \left( \bm{curl}({H}), \bm{v}\right)_{\tilde{\mathcal{T}}},
\;\; & \forall \bm{v}\in \bm{H}^{curl}(\tilde{\mathcal{C}}_2^\Omega),\label{eq:local_am}\\
& \sum_{\mathcal{T}\in\mathcal{C}_2^\Omega}
\left( \mu \partial_t{H},u \right)_\mathcal{T} =
-\sum_{\mathcal{T}\in\mathcal{C}_2^\Omega} \left( curl(\bm{E}),u \right)_\mathcal{T},
\;\; &\forall u\in H^{\bm{curl}}(\mathcal{C}_2^\Omega),\label{eq:local_fa}
\end{align}
where the constitutive equations (\ref{eq:const_DE})--(\ref{eq:const_BH}) have been used and we stress the different local integration domains $\mathcal{T}$ and $\tilde{\mathcal{T}}$. The interplay of the two dual complexes can be exploited by making the r.h.s. of (\ref{eq:local_am})--(\ref{eq:local_fa}) ultra-weak, i.e. performing the following formal integration by parts
\begin{align*}
& \sum_{\tilde{\mathcal{T}}\in\tilde{\mathcal{C}}_2^\Omega}
\left( \varepsilon \partial_t\bm{E},\bm{v}\right)_{\tilde{\mathcal{T}}} =
\sum_{\tilde{\mathcal{T}}\in\tilde{\mathcal{C}}_2^\Omega} \left(
\int_{\partial\tilde{\mathcal{T}}} \hspace{-2.5mm} H\bm{v}\cdot\hat{\bm{t}}(\ell)\,\mathrm{d}\ell
-\left( {H}, curl\left(\bm{v}\right)\right)_{\tilde{\mathcal{T}}} \right),
\;\; & \forall \bm{v}\in \bm{H}_{0}^{curl}(\tilde{\mathcal{C}}_2^\Omega),\\
& \sum_{\mathcal{T}\in\mathcal{C}_2^\Omega}
\left( \mu \partial_t{H},u \right)_\mathcal{T} =
\sum_{\mathcal{T}\in\mathcal{C}_2^\Omega} \left(
\int_{\partial\mathcal{T}} \hspace{-2.5mm}u\bm{E}\cdot\hat{\bm{t}}(\ell)\,\mathrm{d}\ell
-\left( \bm{E},\bm{curl}(u) \right)_\mathcal{T}\right),
\;\; &\forall u\in H^{\bm{curl}}(\mathcal{C}_2^\Omega),
\end{align*}
\noindent where boundary terms arise from the tangential discontinuity of test-functions. This latter step proves to be a crucial part of the novel derivation: as we will show in the following subsection, the tangential traces of solutions appearing in the line-integral terms will all remain single-valued even when the trial-spaces for $\bm{E}$ and $H$ in our Galerkin approximation will be broken in the same manner as the test-spaces.

\subsection{Finite-dimensional approximation}
In non-conforming DG methods, once a mesh is available, the equations are independently tested on each triangle against some polynomial basis (or some other kind of locally smooth functions, if the DG FEM is combined with spectral or Trefftz approaches, e.g. \cite{egger}). This gives birth, once the solution is approximated within the same finite-dimensional basis, to block-diagonal (hence easily invertible) mass-matrices on the left-hand side (l.h.s) of the weak formulation of (\ref{eq:ampmax})--(\ref{eq:faraday}). We can here generate a similar block-diagonal structure by virtue of the two newly defined broken spaces. This will be done by using basis-functions of finite-dimensional subspaces for $\bm{H}^{curl}(\tilde{\mathcal{C}}_2^\Omega)$ and $H^{\bm{curl}}(\mathcal{C}_2^\Omega)$ with compact support limited to some $\tilde{\mathcal{T}}\in\tilde{\mathcal{C}}_2^\Omega$ and $\mathcal{T}\in\mathcal{C}_2^\Omega$, respectively.
The basis-functions will be as usual piecewise--polynomial (vectors) up to some fixed degree $p\geq0$.
Nevertheless, the procedure is far from equivalent to existing literature, since we have decided to use two different partitions of $\Omega$ which overlap and must be forced to exchange information. This is not a drawback, since it allows us to avoid introducing numerical fluxes (and handle all their consequences), as instead common in all popular DG approaches.

Let us start by defining local Cartesian--orthogonal coordinates $(\xi_1,\xi_2)$ and denote with $\hat{\mathcal{T}}$ the reference (or master) triangle, i.e. the convex hull of the point-set $\{(0,0)^\mathrm{T},(1,0)^\mathrm{T},(0,1)^\mathrm{T}\}$ in the given coordinates. We will denote with $\hat{\bm{r}} = \hat{\bm{r}}(\xi_1,\xi_2)$ position vectors on $\hat{\mathcal{T}}$.
This is a standard domain for FEM practitioners, as the usual procedure consists in defining local ``shape-functions'' on $\hat{\mathcal{T}}$ and subsequently using a family of continuous and invertible mappings $\varphi_\mathcal{T}$, which map $\hat{\mathcal{T}}$ to each physical triangle $\mathcal{T}\in\mathcal{C}_2^\Omega$, to ``patch-up'' global basis-functions on the whole of $\Omega$.
However, we note that there are, for each $\mathcal{T}\in\mathcal{C}_2^\Omega$, actually three different choices for affine transformations which map vertices of $\hat{\mathcal{T}}$ to vertices of $\mathcal{T}$ (up to reversal of orientation for the triangle), and they are in the form:
\begin{align*}
&\bm{r} = \varphi_{\mathcal{T},i}(\hat{\bm{r}}) := \mathbf{A}_{\mathcal{T},i} \hat{\bm{r}} + \mathbf{b}_{\mathcal{T},i},
\end{align*}
\noindent where $i\in\{1,2,3\}$, $\hat{\bm{r}}\in\hat{\mathcal{T}}$, $\bm{r}\in \mathcal{T}$, $\mathbf{A}_{\mathcal{T},i} \in \mathbb{R}^{2\!\times\!2}$, $\mathbf{b}_{\mathcal{T},i}\in\mathbb{R}^2$.  
If we take any triangle $\mathcal{T}\in\mathcal{C}_2^\Omega$, denote with $\mathbf{v}_{\mathcal{T},1}, \mathbf{v}_{\mathcal{T},2}, \mathbf{v}_{\mathcal{T},3}$ the Euclidean vectors (now in the global mesh coordinates) for the three vertices in the set $\{\partial\mathcal{T} \cap \mathcal{C}_0^\Omega\}$,
and we recall that (as already remarked) each pair $(T,\mathbf{v}_{\mathcal{T},i})$ uniquely identifies a quadrilateral $K\in\mathcal{K}_2^\Omega$, we can make the notation less cumbersome by writing $\varphi_K$ (and $\mathbf{A}_K$, $\mathbf{b}_K$ as well) instead of using two subscripts. In connection with this, the following result additionally holds:
\begin{lem}\label{thm:kmap}
  For each $\mathcal{T}\in\mathcal{C}_2^\Omega$, $\{\mathbf{v}_{\mathcal{T},i}\}_{i=1,2,3}$ (defined as above), the affine mapping $\varphi_{\mathcal{T},i} := \varphi_K$ is invertible, and the inverse $\varphi_K^{-1}$ maps $K\in\mathcal{K}_2^\Omega$ to the kite\footnote{a quadrilateral where two disjoint pairs of adjacent sides are equal.}-cell (KC), denoted with $\hat{K}$ and defined as
  \begin{align*}\hat{K}=\mathrm{Conv}\left\{\left(0,0\right)^\mathrm{T}, \left(1/2,0\right)^\mathrm{T}, \left(1/3,1/3\right)^\mathrm{T}, \left(0,1/2\right)^\mathrm{T}\right\}.\end{align*}
  \noindent where $\mathrm{Conv}\{\cdot,\dots,\cdot\}$ denotes the convex hull of its arguments.
\end{lem}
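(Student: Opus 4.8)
The plan is to argue by direct computation, exploiting that an affine map commutes with barycentric (convex) combinations of points. First I would pin down the labelling convention: among the three affine maps $\varphi_{\mathcal{T},1},\varphi_{\mathcal{T},2},\varphi_{\mathcal{T},3}$ carrying the vertices of $\hat{\mathcal{T}}$ onto those of $\mathcal{T}$, the one attached to the pair $(\mathcal{T},\mathbf{v}_{\mathcal{T},i})$ — hence to the fundamental $2$-cell $K$ — is taken to be the one with $\varphi_K\big((0,0)^{\mathrm T}\big)=\mathbf{v}_{\mathcal{T},i}$, $\varphi_K\big((1,0)^{\mathrm T}\big)=\mathbf{v}_{\mathcal{T},i+1}$, $\varphi_K\big((0,1)^{\mathrm T}\big)=\mathbf{v}_{\mathcal{T},i+2}$, with indices read cyclically mod $3$. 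Then $\mathbf{b}_K=\mathbf{v}_{\mathcal{T},i}$ and $\mathbf{A}_K=\big[\,\mathbf{v}_{\mathcal{T},i+1}-\mathbf{v}_{\mathcal{T},i}\ \big|\ \mathbf{v}_{\mathcal{T},i+2}-\mathbf{v}_{\mathcal{T},i}\,\big]$.

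Invertibility is then immediate: $|\det\mathbf{A}_K|$ equals twice the area of $\mathcal{T}$, which is strictly positive because $\mathcal{T}$ is a nondegenerate $2$-simplex of the conforming complex $\mathcal{C}^\Omega$. Hence $\varphi_K$ is an affine automorphism of $\mathbb{R}^2$, its restriction to $\hat{\mathcal{T}}$ is a bijection onto $\mathcal{T}$, and $\varphi_K^{-1}(\bm r)=\mathbf{A}_K^{-1}(\bm r-\mathbf{b}_K)$.

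For the image statement I would first recall, from the barycentric-subdivision construction of $\tilde{\mathcal{C}}^\Omega$ in Section~\ref{sec:mesh} (see also Fig.~\ref{fig:staggered_grids}), the explicit list of the four vertices of $K=\mathcal{T}\cap\tilde{\mathcal{T}}$, where $\tilde{\mathcal{T}}\in\tilde{\mathcal{C}}_2^\Omega$ is the dual $2$-cell containing $\mathbf{v}_{\mathcal{T},i}$: namely $\mathbf{v}_{\mathcal{T},i}$ itself, the two edge-midpoints $\tfrac12(\mathbf{v}_{\mathcal{T},i}+\mathbf{v}_{\mathcal{T},i+1})$ and $\tfrac12(\mathbf{v}_{\mathcal{T},i}+\mathbf{v}_{\mathcal{T},i+2})$ of the two edges of $\mathcal{T}$ incident to $\mathbf{v}_{\mathcal{T},i}$, and the centroid $\tfrac13(\mathbf{v}_{\mathcal{T},1}+\mathbf{v}_{\mathcal{T},2}+\mathbf{v}_{\mathcal{T},3})$ of $\mathcal{T}$; since $K$ is convex it is the convex hull of these four points. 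Because an affine map sends a barycentric combination $\sum_j\lambda_j\bm p_j$ (with $\sum_j\lambda_j=1$) to $\sum_j\lambda_j\varphi(\bm p_j)$ and preserves convex hulls, applying $\varphi_K^{-1}$ to this list yields $(0,0)^{\mathrm T}$, $\tfrac12\big((0,0)^{\mathrm T}+(1,0)^{\mathrm T}\big)=(1/2,0)^{\mathrm T}$, $\tfrac12\big((0,0)^{\mathrm T}+(0,1)^{\mathrm T}\big)=(0,1/2)^{\mathrm T}$, and $\tfrac13\big((0,0)^{\mathrm T}+(1,0)^{\mathrm T}+(0,1)^{\mathrm T}\big)=(1/3,1/3)^{\mathrm T}$. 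Therefore $\varphi_K^{-1}(K)=\mathrm{Conv}\{(0,0)^{\mathrm T},(1/2,0)^{\mathrm T},(1/3,1/3)^{\mathrm T},(0,1/2)^{\mathrm T}\}=\hat K$, as claimed.

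I do not expect a genuine obstacle: the argument reduces to pinning down the correct vertex-ordering convention relating the three admissible affine maps to the three pairs $(\mathcal{T},\mathbf{v}_{\mathcal{T},i})$, and then invoking that affine bijections commute with barycenters and with convex hulls. The only point demanding a little care is transcribing, from the appendix's description of the barycentric dual, exactly which two midpoints and which centroid bound the fundamental $2$-cell $K$ attached to a given triangle–vertex pair; once that is stated correctly, the remaining computation is mechanical.
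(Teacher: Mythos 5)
Your proposal is correct. The paper offers no proof at all for this lemma (it is stated with an immediate \qed), and your argument --- invertibility of $\varphi_K$ from the nondegeneracy of the $2$-simplex, followed by pushing the four vertices of $K$ (vertex, two edge-midpoints, centroid) through $\varphi_K^{-1}$ using the fact that affine bijections commute with barycentric combinations and convex hulls --- is exactly the routine verification the authors leave to the reader; the only point worth stating a touch more carefully is why $K=\mathcal{T}\cap\tilde{\mathcal{T}}$ equals the convex hull of those four points, which follows most cleanly by noting from the appendix that $K$ is the union of the two barycentric sub-triangles $\mathrm{Conv}\{\mathbf{v}_{\mathcal{T},i},\overline{\mathbf{r}}(E),\overline{\mathbf{r}}(\mathcal{T})\}$ and that $\varphi_K^{-1}$ carries this union onto the union of the two corresponding reference triangles, which is $\hat{K}$.
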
\qed

Here lies in fact our biggest departure from the classical FEM approach: we work on a proper subset of the reference triangle $\hat{\mathcal{T}}$, namely $\hat{K}$. Both $\hat{\mathcal{T}}$ and $\hat{K}$ are depicted in Fig.~\ref{fig:ref_kite}.

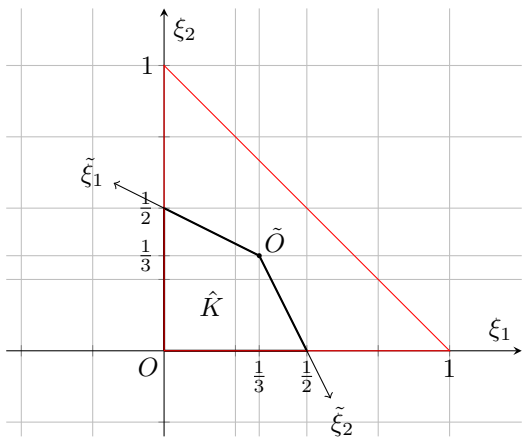
\begin{figure}[th]
\centering
\begin{tikzpicture}
  \begin{axis}[axis lines=middle,axis equal,
      grid=both,xlabel=$\xi_1$,ylabel=$\xi_2$,
      ymin=-0.3,ymax=1.2,xmax=1,xmin=-0.3,
      xtick={-1.0,-0.75,...,0.75,1.0},
      ytick={-1.0,-0.75,...,0.75,1.0},
      extra x ticks = {0.333333333333333333}, extra y ticks = {0.333333333333333333},
      xticklabel=\empty,yticklabel=\empty]
  \addplot[color=black,thick=true] coordinates{(0,0) (1/2,0) (1/3,1/3) (0,1/2) (0,0)};
  \addplot[color=red] coordinates{(0,0) (1,0) (0,1) (0,0)};
  \node[fill,circle,scale=0.2] (a) at (1/3,1/3) {};
  \node[] at (1/3+1/18,1/3+1/18) {$\tilde{O}$};
  \node[] at (-1/18,-1/18) {$O$};
  \node[] at (1/6,1/6) {$\hat{K}$};
    \node (b) at (-1/4,5/8) {$\tilde{\xi}_1$};
    \node (c) at (5/8,-1/4) {$\tilde{\xi}_2$};
    \draw[->] (a) to node {}  (b);
    \draw[->] (a) to node {}  (c);
    \node[scale=0.1,label={181:{$\frac{1}{2}$}},inner sep=2pt] at (axis cs:0,1/2) {};
    \node[scale=0.1,label={181:{$\frac{1}{3}$}},inner sep=2pt] at (axis cs:0,1/3) {};
    \node[scale=0.1,label={270:{$\frac{1}{2}$}},inner sep=2pt] at (axis cs:1/2,0) {};
    \node[scale=0.1,label={270:{$\frac{1}{3}$}},inner sep=2pt] at (axis cs:1/3,0) {};
    \node[scale=0.1,label={180:{$1$}},inner sep=2pt] at (axis cs:0,1) {};
    \node[scale=0.1,label={270:{$1$}},inner sep=2pt] at (axis cs:1,0) {};
  \end{axis}
\end{tikzpicture}
\caption{The reference kite-cell and the reference triangle, from which it is derived.} \label{fig:ref_kite}
\end{figure}

\noindent With the introduction of a reference fundamental 2-cell $\hat{K}$ we want to develop a new (semi-conforming) finite element, where we can still work with the exact same local set of coordinates $(\xi_1,\xi_2)$ and define the following shape-functions' set:
\begin{align}
  &\hat{\bm{w}}_{l}^{ij} (\hat{\bm{r}}) = C_{ijl}\,(\xi_{l})^i(\xi_{3-l})^j \hat{\nabla}\xi_{l},\;\;\;
  &l\in\{1,2\},\;i,j\geq0,\;i+j \leq p,\label{eq:ref_vecfuns}
\end{align}
\noindent where $i$,$j$ are integers, and $\hat{\nabla}$ denotes the gradient operator in the local coordinates.
The values of scaling factors $C_{ijl} \in\mathbb{R}^+$ ensure that shape-functions take all values in $[0,1]^2$ for some $\hat{\bm{r}}\in\hat{K}$.

We remark that local shape-functions defined in (\ref{eq:ref_vecfuns}) are of two kinds. For example, by setting $j=0$ we get ``edge'' functions, in the following sense: the selected shape-functions yield monomials in arc-length when their tangential trace is computed on the line $\xi_{l}=0$ and yield zero when their tangential trace is computed on $\xi_{3-l}=0$. 
This is a useful property when mapping vector-valued functions back to the \emph{physical} element $K\in\mathcal{K}_2^\Omega$. To do so we have to digress shortly on the index $l$, which is in fact a function of two additional indices: we can write (with some harmless abuse of notation in identifying sets with their indexing) $l=l(e,K)$, for any $e \in \mathcal{K}_1^\Omega \cap \mathcal{S}(\mathcal{C}^\Omega)$ and any $K\in\mathcal{K}_2^\Omega$ s.t. ${e}\subset\partial{K}$. 
This completely specifies which one of the local coordinates $(\xi_1, \xi_2)$ provides an arc-length parametrization for the image of segment ${e}$ (under the appropriate mapping $\varphi_K^{-1}$) and allows us to introduce the set of functions:
\begin{align}
  & \bm{w}_{e}^i(\bm{r}) :=
  \begin{cases}\mathbf{A}_K^{-\mathrm{T}} \hat{\bm{w}}_{l}^{i0} (\,\varphi_K^{-1}(\bm{r})\,),
  & \forall\bm{r}\in K,\,\forall K \in\mathcal{K}_2^\Omega \;\;s.t.\;\; 
  {e} \subset \{\partial{K}\},\, l = l(e,K),\\
               0 & \text{otherwise,}\end{cases} \label{eq:phys_edgefuns}
\end{align}
\noindent where $(\cdot)^{-\mathrm{T}}$ denotes the inverse-transpose matrix. In (\ref{eq:phys_edgefuns}) a (piecewise-)covariant transformation has been used, as it preserves tangential traces (see \cite{monk}) on two relevant boundary segments (while allowing fully discontinuous functions on the intersections of $\partial K$ with the skeleton $\mathcal{S}(\tilde{\mathcal{C}}^\Omega)$ of the dual complex).

For fixed polynomial order $p$, (\ref{eq:phys_edgefuns}) is not sufficient for a complete basis: we must move back to $\hat{K}$ and take also local shape-functions in (\ref{eq:ref_vecfuns}) with $j\neq 0$. These are ``bulk'' basis-functions, as their tangential component vanishes now on both local coordinate axes. To preserve this feature onto the global mesh, we again use their covariantly mapped versions:
\begin{align}
  & \bm{w}_{K}^{ijl}(\bm{r}) =
  \begin{cases}\mathbf{A}_K^{-\mathrm{T}} \hat{\bm{w}}_{l}^{ij} (\,\varphi_K^{-1}(\bm{r})\,), & \forall\bm{r}\in K, j>0,\\
               0 & \text{otherwise,}\end{cases} \label{eq:phys_bulkfuns}
\end{align}
\noindent where we note the appearance of $K$ as a subscript index, rather than ${e}$, and we note that both admissible values of $l$ now produce bulk functions.
Summarizing, by grouping the $\bm{w}_{e}^i$ (for all $i$ s.t. $0\leq i \leq p$ and all ${e}\in\{\mathcal{K}_1^\Omega\cap\mathcal{S}(\mathcal{C}^\Omega)\}$) together with the $\bm{w}_{K}^{ijl}$ (for all admissible $\{i,j,l\}$ and all $K\in\mathcal{K}_2^\Omega$) into a new sequence $\{\bm{w}_n^p\}_{n=1}^{N}$, we achieve a complete set of basis-functions for the space 
\begin{align*}\bm{W}^p := Span\{\,\{\bm{w}_n^p\}_{n=1}^{N}\,\} = \bm{H}^{curl}(\tilde{\mathcal{C}}_2^\Omega) \cap \boldsymbol{P}^p(\mathcal{K}_2^\Omega;\mathbb{R}^2),\end{align*}
\noindent where $\boldsymbol{P}^p(\mathcal{K}_2^\Omega;\mathbb{R}^2)$ denotes the space of vector-valued functions whose components are piecewise-polynomials of degree at most $p$ on each $K\in\mathcal{K}_2^\Omega$.
It is not difficult to compute the dimension of this global space for a given mesh: the $n$ index runs from 1 to $N$, with
\begin{align}
  & N = (p+1)\left(2|\mathcal{C}_1^\Omega| + 3p|\mathcal{C}_2^\Omega|\right) =
        2|\mathcal{C}_1^\Omega| + p|\mathcal{K}_1^\Omega\cap\mathcal{S}(\mathcal{C}^\Omega)| + p(p+1)|\mathcal{K}_2^\Omega|,\label{eq:dimvec}
\end{align}
\noindent where the relationships between $\mathcal{K}^\Omega$ and $\mathcal{C}^\Omega$ have been used to make the splitting into lowest order, edge and bulk basis-functions manifest.

For the finite-dimensional space which will be approximating the pseudo-vector ${H}(\bm{r},t)$ instead, we proceed by first defining a new pair of oblique local coordinates $(\tilde{\xi}_1,\tilde{\xi}_2)$ on the KC element through an additional family of affine mappings $\tilde{\varphi}_K$ (and their inverses $\tilde{\varphi}_{K}^{-1}$),
which we can construct by enforcing the origin in the associated oblique coordinates' system to coincide with the point $\tilde{O}=(1/3,1/3)$ (for its sketch, we refer the reader again to Fig.~\ref{fig:ref_kite}), and by enforcing $0\leq \tilde{\xi}_1,\tilde{\xi}_2 \leq 1$ on $\hat{K}$.
Thus, by denoting the position vector with $\tilde{\bm{r}}$ in the new coordinates' system, we can concisely give the expressions of scalar-valued local shape-functions on $\hat{K}$.
Namely, we introduce the monomials
\begin{align*}
  & \hat{\tilde{w}}_{\tilde{l}}^{ij}(\tilde{\bm{r}}) := (\tilde{\xi}_{\tilde{l}})^i(\tilde{\xi}_{3-{\tilde{l}}})^j,\;\;
  & \tilde{l}\in\{1,2\}, \;\; i>0, j\geq 0,\, i+j \leq p,
\end{align*}
\noindent where we stress the fact that $i$ is now a strictly positive integer. 
In this case, differently from the vector-valued setting, we have to consider segments $\tilde{e} \in \{\mathcal{K}_1^\Omega \cap \mathcal{S}(\tilde{\mathcal{C}}^\Omega)\}$ s.t. $\tilde{\xi}_1$ and $\tilde{\xi}_2$ provide arc-length parameters on them when moving back to any $K\in\mathcal{K}_2^\Omega$ in the physical mesh. Consequently we have introduced a different index $\tilde{l}=\tilde{l}(\tilde{e},K)$ . Setting $j=0$ yields a first subset of basis-functions for the global space
\begin{align}
  & \tilde{w}_{\tilde{e}}^{i}(\bm{r}) :=
  \begin{cases} \hat{\tilde{w}}_{\tilde{l}}^{i0} (\,\tilde{\varphi}_{K}^{-1}(\bm{r})\,) & \forall \bm{r}\in K \;\;s.t.\;\; 
  \tilde{e} \subset \partial{K}, \tilde{l}=\tilde{l}(\tilde{e},K),\\
                0 & \text{otherwise,}\end{cases} \label{eq:phys_scaledgefuns}
\end{align}
\noindent obtained by simple piecewise combinations of local shape-functions' pull-backs. The ones in (\ref{eq:phys_scaledgefuns}) are again edge functions (even if scalar-valued ones, their support being an edge-patch in $\mathcal{K}^\Omega$). A new set of bulk basis-functions is also present, 
defined by setting $\tilde{l}=1$ (without loss of generality) and requiring $i>0$ and $j>0$ to hold simultaneously. The condition on $i$ and $j$ ensures that the associated shape-functions have vanishing trace on both $\tilde{\xi}_1=0$ and $\tilde{\xi}_2=0$ lines. 
Once more via pull-backs of local shape-functions onto the generic physical fundamental cell $K\in\mathcal{K}_2^\Omega$, we get
\begin{align}
  & \tilde{w}_{K}^{ij}(\bm{r}) =
  \begin{cases} \hat{\tilde{w}}_{1}^{ij} (\,\tilde{\varphi}_{K}^{-1}(\bm{r})\,) & \forall \bm{r}\in K,\, i,j>0,\\
                0 & \text{otherwise,}\end{cases} \label{eq:phys_scalbulkfuns}
\end{align}
\noindent again using $K$ as an index.
To complete the scalar-valued basis, a third set of functions is required, namely the set
\begin{align*}\tilde{w}_\mathcal{T}:=\frac{\mathbbm{1}_\mathcal{T}}{|\mathcal{T}|},\end{align*}
\noindent for all the triangles $\mathcal{T}\in\mathcal{C}_2^\Omega$, where $|\mathcal{T}|$ denotes the measure of $\mathcal{T}$ and $\mathbbm{1}_\mathcal{T}$ is the characteristic (or indicator) function of $\mathcal{T}$, i.e. the discontinuous function which takes value one for any $\bm{r}\in \mathcal{T}$ and zero elsewhere. Since the latter are piecewise-constant, scalar-valued functions, the mapping from reference to physical elements is trivial.

We can again group all the $\tilde{w}_{\tilde{e}}^i$, $\tilde{w}_K^{ij}$ and $\tilde{w}_\mathcal{T}$ in a new sequence of basis-functions $\{\tilde{w}_m^p\}_{m=1}^M$, which provides the basis of a finite-dimensional subspace $\tilde{W}^p \subset H^{\bm{curl}}(\mathcal{C}_2^\Omega)$, where again $\tilde{W}^p := Span\{\,\{\tilde{w}_m^p\}_{m=1}^M\,\}$. Namely, we have constructed a basis for the vector space 
\begin{align*}\tilde{W}^p = H^{\bm{curl}}(\mathcal{C}_2^\Omega) \cap {P}^p(\mathcal{K}_2^\Omega;\mathbb{R}),\end{align*} 
\noindent where ${P}^p(\mathcal{K}_2^\Omega;\mathbb{R})$ is the space of piecewise-polynomials of degree at most $p$ on each $K\in\mathcal{K}_2^\Omega$.
Once more, we can easily compute the dimension of $\tilde{W}^p$, which amounts to
\begin{align}
  & M = \left(1 + 3p + \frac{3}{2}p(p-1)\right)|\mathcal{C}_2^\Omega| =
        |\mathcal{C}_2^\Omega| + p|\mathcal{K}_1^\Omega\cap\mathcal{S}(\tilde{\mathcal{C}}^\Omega)| + \frac{p}{2}(p-1)|\mathcal{K}_2^\Omega|,\label{eq:dimscal}
\end{align}
\noindent where the contributions due to the three different flavours of basis-functions have been again manifestly split.

With the aid of $\bm{W}^p$ and $\tilde{W}^p$, we can finally approximate the unknown fields with a Galerkin method: we seek $\bm{E}^{h,p}(\bm{r},t) \in AC\!\left([0,\mathrm{T}]\right)\otimes \bm{W}^p$ and ${H}^{h,p}(\bm{r},t) \in AC\!\left([0,\mathrm{T}]\right) \otimes \tilde{W}^p$ such that
\begin{align}
& \sum_{\tilde{\mathcal{T}}\in\tilde{\mathcal{C}}_2^\Omega}
\left( \varepsilon \partial_t\bm{E}^{h,p},\bm{v}\right)_{\tilde{\mathcal{T}}} =
\sum_{\tilde{\mathcal{T}}\in\tilde{\mathcal{C}}_2^\Omega} \left(
\int_{\partial\tilde{\mathcal{T}}} \hspace{-2.5mm} {H}^{h,p}\bm{v}\cdot\hat{\bm{t}}(\ell)\,\mathrm{d}\ell
-\left( {H}^{h,p}, curl\left(\bm{v}\right)\right)_{\tilde{\mathcal{T}}} \right),\label{eq:weak_ampmax}\\
& \sum_{\mathcal{T}\in\mathcal{C}_2^\Omega}
\left( \mu \partial_t{H}^{h,p},u \right)_\mathcal{T} =
\sum_{\mathcal{T}\in\mathcal{C}_2^\Omega} \left(
\int_{\partial\mathcal{T}} \hspace{-2.5mm}u\bm{E}^{h,p}\cdot\hat{\bm{t}}(\ell)\,\mathrm{d}\ell
-\left( \bm{E}^{h,p},\bm{curl}(u) \right)_\mathcal{T}\right),\label{eq:weak_faraday}
\end{align}
\noindent hold $\forall\bm{v} \in \bm{W}^p$ and $\forall u\in \tilde{W}^p$ simultaneously.
Furthermore, the following assertion holds:
\begin{thm}\label{thm:lemma1}
    \textbf{ (Consistency and stability)} 
    The semi-discrete formulation (\ref{eq:weak_ampmax})--(\ref{eq:weak_faraday}) is consistent, meaning that it is satisfied by the true (conforming) weak solution of (\ref{eq:local_am})--(\ref{eq:local_fa}) in the limit $h\rightarrow 0$. Furthermore the semi-discrete electromagnetic energy $\mathcal{E}_K^{h,p}$ stored inside each $K\in\mathcal{K}_2^\Omega$ (and therefore in the whole of $\Omega$) is conserved through time:
    \begin{align}
    & \partial_t\mathcal{E}_K^{h,p} := \partial_t \left(\frac{1}{2}\Vert\bm{E}^{h,p}\Vert_{K,\varepsilon}^2 + \frac{1}{2}\Vert {H}^{h,p} \Vert_{K,\mu}^2\right) = 0, \; \forall t \in [0,\mathrm{T}],\label{eq:stability}
    \end{align}
    \noindent where $\varepsilon$ and $\mu$ are piecewise-smooth inside each $K\in\mathcal{K}_2^\Omega$.
\end{thm}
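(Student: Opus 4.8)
I would prove consistency first, as it is the quick half. By construction $\bm{W}^p\subset\bm{H}^{curl}(\tilde{\mathcal{C}}_2^\Omega)$ and $\tilde{W}^p\subset H^{\bm{curl}}(\mathcal{C}_2^\Omega)$, so every discrete pair $(\bm{v},u)\in\bm{W}^p\times\tilde{W}^p$ is an admissible test pair in the broken weak problem (\ref{eq:local_am})--(\ref{eq:local_fa}); the globally conforming weak solution of Maxwell's equations satisfies that problem by restriction, and, because its tangential traces are single-valued, the formal integration by parts turning (\ref{eq:local_am})--(\ref{eq:local_fa}) into (\ref{eq:weak_ampmax})--(\ref{eq:weak_faraday}) introduces no interface contributions for it. Hence the conforming solution satisfies (\ref{eq:weak_ampmax})--(\ref{eq:weak_faraday}) against any $(\bm{v},u)\in\bm{W}^p\times\tilde{W}^p$, and since for fixed $p$ these spaces become dense as $h\to0$ the scheme reproduces it in the limit.

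For stability the plan is to test (\ref{eq:weak_ampmax}) with $\bm{v}=\bm{E}^{h,p}(\cdot,t)$ and (\ref{eq:weak_faraday}) with $u=H^{h,p}(\cdot,t)$ -- both admissible -- and add the resulting identities. As $\varepsilon,\mu$ are time-invariant and both $\{\tilde{\mathcal{T}}\}$ and $\{\mathcal{T}\}$ tile $\Omega$, the summed left-hand side equals $\partial_t\bigl(\tfrac12\Vert\bm{E}^{h,p}\Vert_{\Omega,\varepsilon}^2+\tfrac12\Vert H^{h,p}\Vert_{\Omega,\mu}^2\bigr)=\partial_t\mathcal{E}_\Omega^{h,p}$. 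On the right-hand side I would refine both tilings to $\mathcal{K}_2^\Omega$ (each $\tilde{\mathcal{T}}$ and each $\mathcal{T}$ being a disjoint union of fundamental $2$-cells), so that the two volume terms become $\sum_{K\in\mathcal{K}_2^\Omega}\bigl[(H^{h,p},curl(\bm{E}^{h,p}))_K+(\bm{E}^{h,p},\bm{curl}(H^{h,p}))_K\bigr]$; on each $K$ the fields are polynomials, so the product rule (\ref{eq:green_id}) and Green's theorem (\ref{eq:green_th}) collapse the bracket to $\oint_{\partial K}H^{h,p}\bm{E}^{h,p}\cdot\hat{\bm{t}}\,\mathrm{d}\ell$. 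Since $\partial K$ is made of two arcs on $\partial\mathcal{T}(K)$ (its two ``primal'' half-edges) and two on $\partial\tilde{\mathcal{T}}(K)$ (its two ``dual'' half-edges), and summing over $K$ these arcs retile $\partial\mathcal{T}$ and $\partial\tilde{\mathcal{T}}$ with matching orientations, one gets $\sum_K\oint_{\partial K}H^{h,p}\bm{E}^{h,p}\cdot\hat{\bm{t}}=\sum_{\mathcal{T}}\int_{\partial\mathcal{T}}H^{h,p}\bm{E}^{h,p}\cdot\hat{\bm{t}}+\sum_{\tilde{\mathcal{T}}}\int_{\partial\tilde{\mathcal{T}}}H^{h,p}\bm{E}^{h,p}\cdot\hat{\bm{t}}$, which is exactly the boundary part already sitting on the right-hand side of the summed equations. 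Everything then cancels, so the right-hand side is zero and $\partial_t\mathcal{E}_\Omega^{h,p}=0$. What makes this legitimate -- and is the reason no upwind/penalty flux is needed -- is precisely the single-valuedness established in the preceding subsections: on a primal half-edge $\bm{E}^{h,p}\cdot\hat{\bm{t}}$ is continuous (forced by the covariant map and by the vanishing tangential trace of the ``bulk'' shape-functions on the two primal sides of $\partial\hat{K}$), while on a dual half-edge $H^{h,p}$ is single-valued because that arc lies inside one primal triangle (and symmetrically for the other integrand), so every line integral above is unambiguous and the retiling is valid. The pieces on $\partial\Omega$ drop out because of the PEC/PMC conditions carried by $\bm{W}^p,\tilde{W}^p$.

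For the localized statement $\partial_t\mathcal{E}_K^{h,p}=0$ I would carry out the same bookkeeping one cell at a time. Testing (\ref{eq:weak_ampmax}) with the single-cell-supported ``bulk'' functions $\bm{w}_K^{ijl}$ and integrating by parts on $K$ yields the purely local identity $(\varepsilon\partial_t\bm{E}^{h,p},\bm{w}_K^{ijl})_K=(\bm{curl}(H^{h,p}),\bm{w}_K^{ijl})_K$ -- the line integral over $\partial\tilde{\mathcal{T}}$ collapses onto the dual sides of $\partial K$, which cancel those produced by the integration by parts, and $\bm{w}_K^{ijl}\cdot\hat{\bm{t}}$ vanishes on the primal sides -- and similarly for $H^{h,p}$ against the $\tilde{w}_K^{ij}$. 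Keeping the edge- and dual-edge-supported functions along and grouping the inter-cell line integrals with $\oint_{\partial K}$, one finds that the flux through $\partial K$ coming from the volume term is exactly matched, leaving $\partial_t\mathcal{E}_K^{h,p}=0$; resumming over $K$ recovers the global statement. I expect this last localization -- checking that the edge-coupling (tangential-jump) terms reorganize cell by cell rather than only globally -- to be the hard part; it is exactly where the geometry of $\hat{K}$ and the deliberate pattern of vanishing tangential traces of the edge and bulk shape-functions on $\partial\hat{K}$ must be used in full. The hypothesis that $\varepsilon,\mu$ be piecewise-smooth inside each $K$ enters only to make $\mathcal{E}_K^{h,p}$ well-defined and differentiable and the weighted norms genuine norms; it is irrelevant to the cancellations, which are of a topological/differential-geometric nature.
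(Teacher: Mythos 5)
Your global energy argument is essentially the paper's: test with $\bm{E}^{h,p}$ and $H^{h,p}$, split everything over the fundamental cells $K\in\mathcal{K}_2^\Omega$, and cancel the volume terms against the boundary terms via the product rule (\ref{eq:green_id}) and Green's theorem (\ref{eq:green_th}) applied on each $K$; the consistency half is likewise dispatched the same way. The only organizational difference is the direction of the bookkeeping: you push the volume terms out to $\oint_{\partial K}$ and then retile $\bigcup_K\partial K$ into the $\partial\mathcal{T}$ and $\partial\tilde{\mathcal{T}}$ integrals, whereas the paper first restricts the $\partial\tilde{\mathcal{T}}$ and $\partial\mathcal{T}$ line integrals to $\partial K\cap\mathcal{S}(\tilde{\mathcal{C}}^\Omega)$ and $\partial K\cap\mathcal{S}(\mathcal{C}^\Omega)$ and then observes that each $K$-summand of the right-hand side is exactly $\oint_{\partial K}H^{h,p}\bm{E}^{h,p}\cdot\hat{\bm{t}}\,\mathrm{d}\ell-\int_K curl(H^{h,p}\bm{E}^{h,p})\,\mathrm{d}\bm{r}=0$. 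That term-by-term vanishing of the flux balance on each $K$ is all the paper invokes for the ``local'' claim; it does not attempt the genuinely local identity you chase in your last paragraph. Be aware that your proposed localization does not go through as written: testing (\ref{eq:weak_ampmax}) against the bulk functions $\bm{w}_K^{ijl}$ only constrains the $L^2(K)$-projection of $\varepsilon\partial_t\bm{E}^{h,p}$ onto the bulk subspace, not the full local Amp\`ere law, and $\mathbbm{1}_K\bm{E}^{h,p}$ is not an admissible test function in $\bm{W}^p$, so no purely local weak identity is available. This extra machinery is unnecessary for reproducing the paper's proof, which rests entirely on the per-cell Green identity; dropping the third paragraph and stating that each right-hand-side summand vanishes identically would align your argument with the paper's.
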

\begin{proof}
    Consistency is trivial, we prove (\ref{eq:stability}). We start by splitting all integrals into their contributions from each fundamental cell $K\in\mathcal{K}_2^\Omega$, which is straightforward for double integrals but requires some care for boundary terms. From (\ref{eq:weak_ampmax})--(\ref{eq:weak_faraday}) it ensues
    \begin{align*}
    & \sum_{ {\color{black}K \in \mathcal{K}_2^\Omega} }
    \left( \varepsilon \partial_t\bm{E}^{h,p},\bm{v}\right)_{ {\color{black}K} } \!=\!
    \sum_{ {\color{black}K \in \mathcal{K}_2^\Omega} } \left(
    \int_{ {\color{black}\partial{K}\cap\mathcal{S}(\tilde{\mathcal{C}}^\Omega)}} \hspace{-12.5mm} {H}^{h,p}\bm{v}\cdot\hat{\bm{t}}(\ell)\,\mathrm{d}\ell
    -\left( {H}^{h,p}, curl\left(\bm{v}\right)\right)_{{\color{black}K}} \right),
    &\!\forall \bm{v}\in \bm{W}^p,\\
    & \sum_{ {\color{black}K \in \mathcal{K}_2^\Omega} }
    \left( \mu \partial_t{H}^{h,p},u \right)_{{\color{black}K}} \!=\!
    \sum_{ {\color{black}K \in \mathcal{K}_2^\Omega} } \left(
    \int_{ {\color{black}\partial{K}\cap\mathcal{S}(\mathcal{C}^\Omega)} } \hspace{-12.5mm} u\bm{E}^{h,p}\cdot\hat{\bm{t}}(\ell)\,\mathrm{d}\ell
    -\left( \bm{E}^{h,p},\bm{curl}(u) \right)_{{\color{black}K}}\right),
    &\!\forall u\in \tilde{W}^p,
    \end{align*}
    where the definitions of sets $\mathcal{K}_1^\Omega\cap\mathcal{S}(\tilde{\mathcal{C}}^\Omega)$ and $\mathcal{K}_{1}^\Omega\cap\mathcal{S}(\mathcal{C}^\Omega)$ have been used to split line-integrals along the boundary of each $\tilde{\mathcal{T}}$ and $\mathcal{T}$ into local contributions. We now use the fact that our approximate solutions ${H}^{h,p}$ and $\bm{E}^{h,p}$ are themselves admissible test-functions (being linear combinations of the basis-functions) and plug them as such in the weak formulation:
    \begin{align*}
    & \sum_{ {\color{black}K \in \mathcal{K}_2^\Omega} }
    \left( \varepsilon \partial_t\bm{E}^{h,p},\bm{E}^{h,p}\right)_{ {\color{black}K} } =
    \sum_{ {\color{black}K \in \mathcal{K}_2^\Omega} } \left(
    \int_{ {\color{black}\partial{K}\cap\mathcal{S}(\tilde{\mathcal{C}}^\Omega)}} \hspace{-12.5mm} {H}^{h,p}\bm{E}^{h,p}\cdot\hat{\bm{t}}(\ell)\,\mathrm{d}\ell
    -\left( {H}^{h,p}, curl\left(\bm{E}^{h,p}\right)\right)_{{\color{black}K}} \right),\\
    & \sum_{ {\color{black}K \in \mathcal{K}_2^\Omega} }
    \left( \mu \partial_t{H}^{h,p},{H}^{h,p} \right)_{{\color{black}K}} =
    \sum_{ {\color{black}K \in \mathcal{K}_2^\Omega} } \left(
    \int_{ {\color{black}\partial{K}\cap\mathcal{S}(\mathcal{C}^\Omega)}} \hspace{-12.5mm}
    {H}^{h,p}\bm{E}^{h,p}\cdot\hat{\bm{t}}(\ell)\,\mathrm{d}\ell
    -\left( \bm{E}^{h,p},\bm{curl}({H}^{h,p}) \right)_{{\color{black}K}}\right).
    \end{align*}
    
    By adding the two equations together side-by-side, using the product rule for time derivatives on the l.h.s., while also using the Green theorem on the r.h.s., the assertion follows locally $\forall K\in\mathcal{K}_2^\Omega$.\end{proof}

The following remarks are in order: firstly, the definition of numerical fluxes is irrelevant as predicted (in a nutshell: when discretising the (ultra-)weak curls, wherever the test-functions present tangential trace jumps, trial-functions are tangentially continuous, and vice-versa).
On the other hand, the standard practice in Finite Element analysis is to set material tensors $\varepsilon$ and $\mu$ to a constant value on each triangle, since the primal complex is the one which is usually built (by some external tool) to resolve the geometry of discontinuities between materials. In this respect, the result of Theorem \ref{thm:lemma1} accommodates the output of any standard triangular mesher and at the same time suggests that well-behaved finite-dimensional approximations of the two new broken spaces should have local approximation properties not on whole primal and dual 2-cells, but on each $K\in\mathcal{K}_2^\Omega$, as is the case in our construction.

Lastly, we comment on boundary conditions: since $\partial \Omega$ is a subset of the skeleton $\mathcal{S}({\mathcal{C}}^\Omega)$ of the primal complex, the space $H^{\bm{curl}}(\mathcal{C}_2^\Omega)$ does not have a well-defined tangential trace on the boundary of the computational domain. In the above derivation this non-conformity is only apparently ignored: instead formal \emph{natural boundary conditions} have been employed, which, as can be deduced integrating by parts the r.h.s. of (\ref{eq:local_am}) on any $K\in\mathcal{K}_2^\Omega$ for which $\partial{K}\cap\partial\Omega\neq\emptyset$, amount to weakly enforcing $H|_{\partial\Omega}=0$.
On the other hand, the definition of $\bm{H}^{curl}(\tilde{\mathcal{C}}_2^\Omega)$ and, more precisely, the definitions of $\bm{w}_n^p$ basis-functions ensure that PEC boundary condition, if sought, can be enforced in the strong sense, since $\bm{W}^p$ possesses a tangential trace nearly everywhere on $\partial\Omega$. Additionally, we remark that both proposed bases are hierarchical by construction.

We finally note that (informally speaking) one can also \emph{swap the broken spaces}, i.e. we can define 
\begin{align}
\bm{H}^{curl}(\mathcal{C}_2^\Omega) &= \left\{ \bm{v} \!\in\! \bm{L}^2(\Omega) \; s.t. \;
\bm{v}|_\mathcal{T} \!\in\! \bm{H}^{curl}(\mathcal{T}),\,\forall \mathcal{T} \!\in\! \mathcal{C}_2^\Omega\right\},\label{eq:dual_broken_vector_valued}\\
H^{\bm{curl}}(\tilde{\mathcal{C}}_2^\Omega) &= \left\{ u \!\in\! L^2(\Omega) \; s.t. \;
u|_{\tilde{\mathcal{T}}} \!\in\! H^{\bm{curl}}(\tilde{\mathcal{T}}),\,\forall\tilde{\mathcal{T}} \!\in\! \tilde{\mathcal{C}}_2^\Omega\right\}, \label{eq:dual_broken_scalar_valued}
\end{align}
\noindent in the continuous setting, where a new formulation with an analogous weak form, appropriate sets of basis-functions, and an equivalent of Theorem \ref{thm:lemma1} can be easily deduced from our previous construction. The only key difference of such a formulation would lie in boundary conditions: if (\ref{eq:dual_broken_vector_valued})--(\ref{eq:dual_broken_scalar_valued}) were to be again approximated by finite-dimensional spaces, the PMC boundary condition ${H}^{h,p}|_{\partial\Omega}= 0$ would become an essential one and be incorporated in the strong sense.

\section{Implementing the fully discrete scheme}\label{sec:basis}
Owing to the explicit construction of Section \ref{sec:funspaces}, we can expand the approximated unknown fields as
\begin{align}
  & \bm{E}^{h,p}(\bm{r},t) = \sum\limits_{n=1}^{N} u_{n} (t) \bm{w}_{n}^p(\bm{r}), \;\;
  {H}^{h,p}(\bm{r},t) = \sum\limits_{m=1}^{M} f_{m} (t) \tilde{w}_{m}^p (\bm{r}), \label{eq:solution_expansion}
\end{align}
\noindent where the space-time separation of variables assumption on the solution is incorporated via the time dependence of coefficients in the linear combinations.
The semi-discrete scheme, which we obtained by using (\ref{eq:solution_expansion}) and testing against the same basis-functions, has the following matrix-representation:
\begin{align}
& \begin{pmatrix*}[c]
    \tilde{\mathbf{M}}_p^\varepsilon & \mathbf{0} \\
    \mathbf{0} & \mathbf{M}_p^\mu
  \end{pmatrix*} \frac{\mathrm{d}}{\mathrm{d}{t}}
  \begin{pmatrix*}[c]
    \bm{u}(t) \\ \bm{f}(t)
  \end{pmatrix*} =
  \begin{pmatrix*}[c]
   \mathbf{0} & \mathbf{C}_p^\mathrm{T} \\
   -\mathbf{C}_p & \mathbf{0}
  \end{pmatrix*}
  \begin{pmatrix*}[c]
    \bm{u}(t) \\ \bm{f}(t)
  \end{pmatrix*}, \label{eq:semidiscrete_cmp}
\end{align}
\noindent where $\bm{f}$ is the column-vector containing semi-discrete magnetic field degrees of freedom (DoFs), $\bm{u}$ is the column-vector containing semi-discrete electric field DoFs, and where the proved energy conservation property is reflected by the r.h.s. skew-symmetry. We remark that the basis-functions can be appropriately re-ordered, by grouping members which have support contained into some common $\mathcal{T}\in\mathcal{C}_2^\Omega$ for the ${H}^{h,p}(\bm{r},t)$ field, and contained into a common $\tilde{\mathcal{T}}\in\tilde{\mathcal{C}}_2^\Omega$ for the $\bm{E}^{h,p}(\bm{r},t)$ field, respectively. This is not mandatory, but we thus stress the block-diagonal nature of mass-matrices $\tilde{\mathbf{M}}_p^\varepsilon$ and $\mathbf{M}_p^\mu$, which is achieved by breaking the standard Sobolev spaces. To discretise time, we use the well-known leap-frog scheme, which is the symplectic time-integrator used by Yee in his seminal paper. The search for symplectic integrators of arbitrary order which keep the time-stepping explicit is an active topic of research (see \cite{RuthForest, Titarev2002, jcp_symplectic}) which goes beyond the scope of the present contribution (yet provides also a further future research direction). For the fully discrete scheme it ensues
\begin{align}
  \begin{pmatrix*}[c]
    \mathbf{u}^{\left(n+1/2\right)\tau} \\ \mathbf{f}^{\left(n+1\right)\tau}
  \end{pmatrix*} &=
  \begin{pmatrix*}[c]
  \mathbf{u}^{\left(n-1/2\right) \tau} \\ \mathbf{f}^{n\tau}
  \end{pmatrix*} + \nonumber\\
  &+\tau\begin{pmatrix*}[c]
    (\tilde{\mathbf{M}}_p^\varepsilon)^{-1} \!&\! \mathbf{0} \\
    \mathbf{0} \!&\! (\mathbf{M}_p^\mu)^{-1}
  \end{pmatrix*} 
  \begin{pmatrix*}[c]
   \mathbf{0} & \mathbf{C}_p^\mathrm{T} \\
   -\mathbf{C}_p & \mathbf{0}
  \end{pmatrix*}
  \begin{pmatrix*}[c]
  \mathbf{u}^{\left(n+1/2\right)\tau} \\ \mathbf{f}^{n\tau}
  \end{pmatrix*}, \label{eq:discrete_cmp}
\end{align}
\noindent where $\mathbf{f}$ is the column-vector containing (now fully discrete) magnetic field DoFs, $\mathbf{u}$ is the column-vector containing electric field DoFs, $\tau\in\mathbb{R}^+$ is the discrete time-step (whose upper bound for a stable scheme can quickly be estimated by, e.g., a power-iteration algorithm) and $n=0,1,\dots,\lceil \mathrm{T}/\tau \rceil$. The inverses of mass-matrices are easily computed by solving very small local systems of equations, once-and-for-all and block-by-block (see sparsity patterns in Fig.~\ref{fig:sparsity_h_check} and Fig.~\ref{fig:sparsity_p_check}).

\begin{figure}[!h]
    \centering
    \begin{minipage}{0.1\textwidth}
        \vspace{-3cm}
        $p=0$
    \end{minipage}
    \begin{minipage}{0.3\textwidth}
        \vspace{-3cm}
        \includegraphics[width=\textwidth]{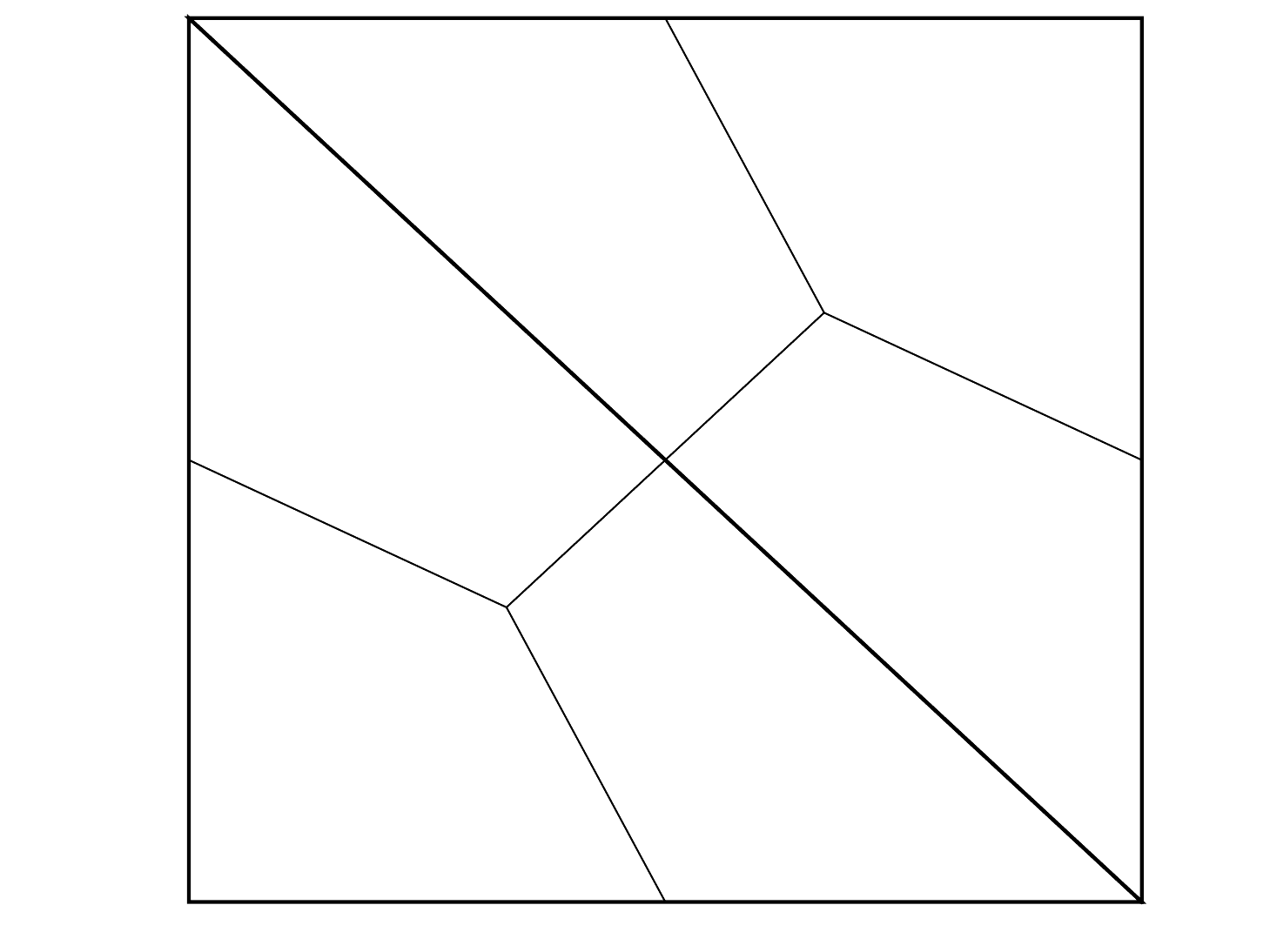}
    \end{minipage}
    \includegraphics[width=0.25\textwidth,height=0.25\textwidth]{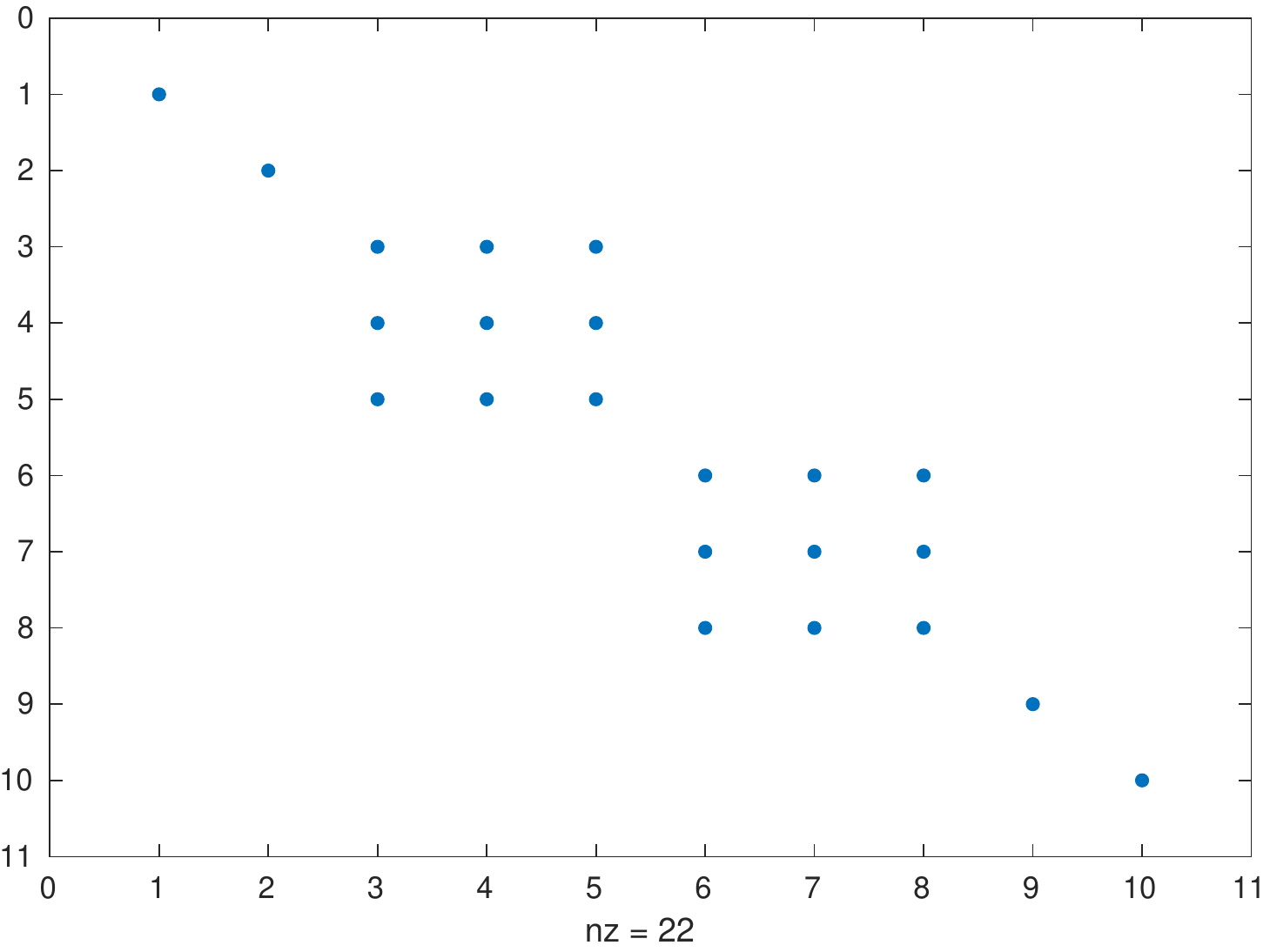}
    \includegraphics[width=0.25\textwidth,height=0.25\textwidth]{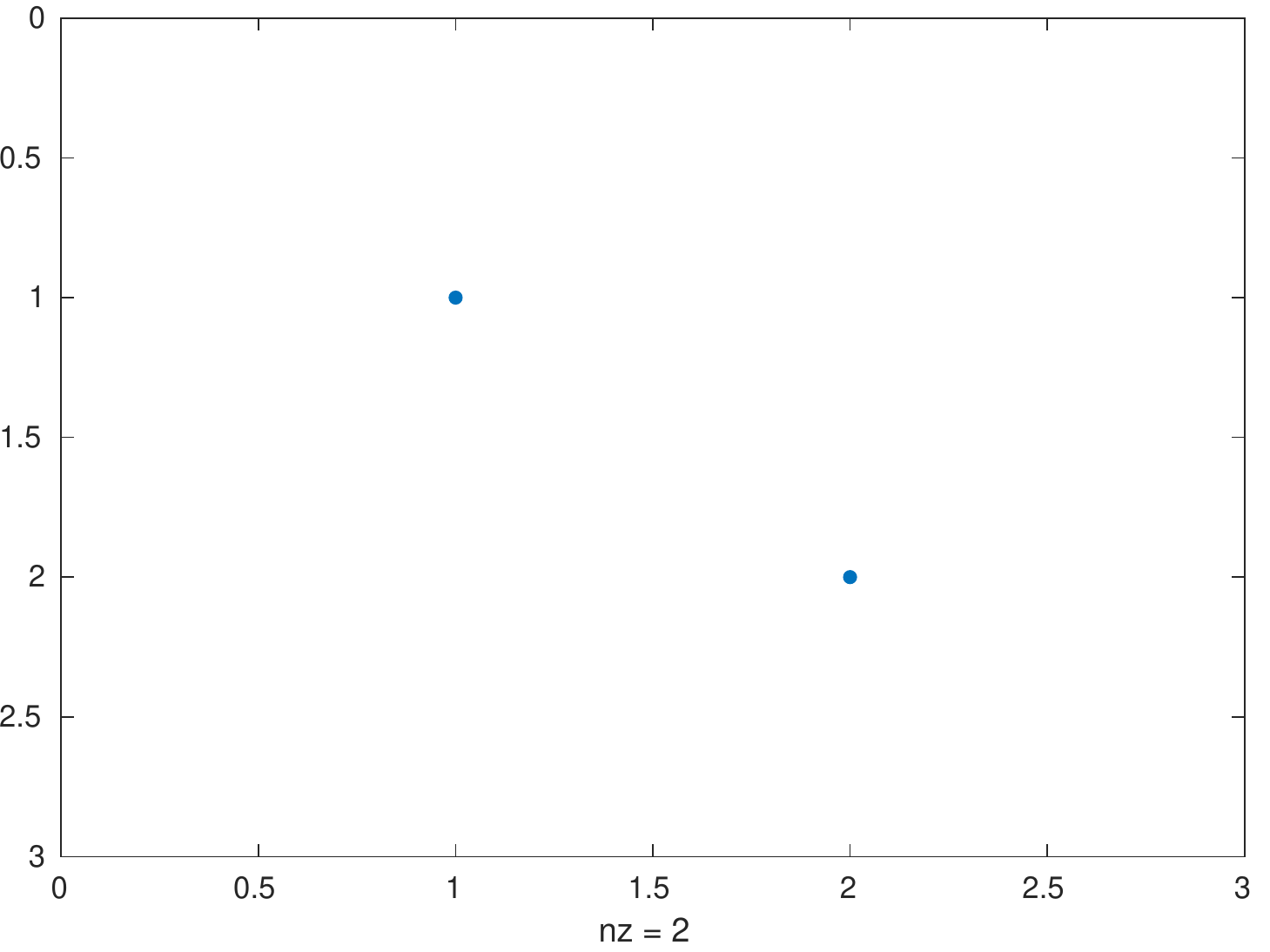}\\
    \centering
    \begin{minipage}{0.1\textwidth}
        \vspace{-3cm}
        $p=0$
    \end{minipage}
    \begin{minipage}{0.3\textwidth}
        \vspace{-3cm}
        \includegraphics[width=\textwidth]{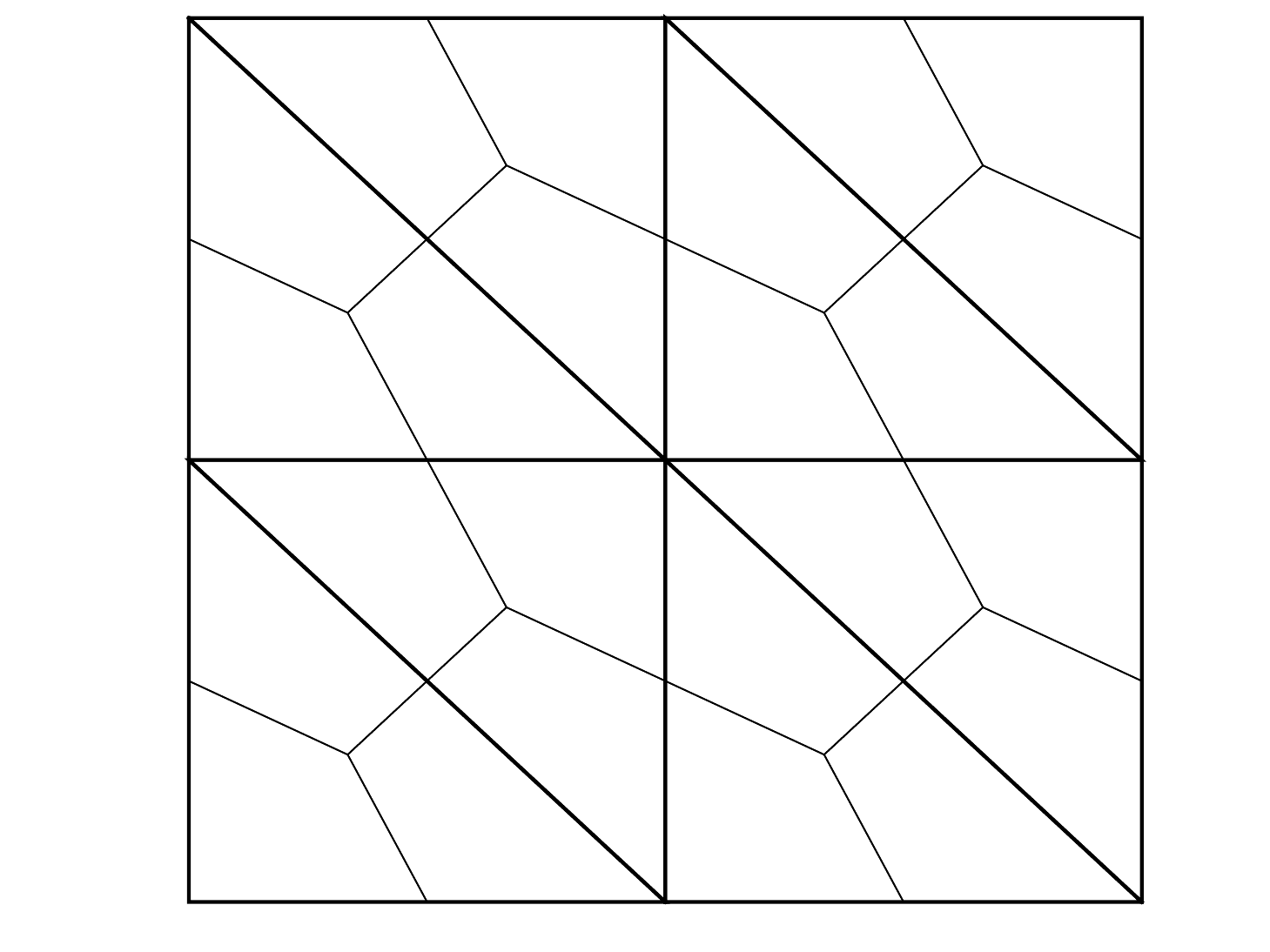}
    \end{minipage}
    \includegraphics[width=0.25\textwidth,height=0.25\textwidth]{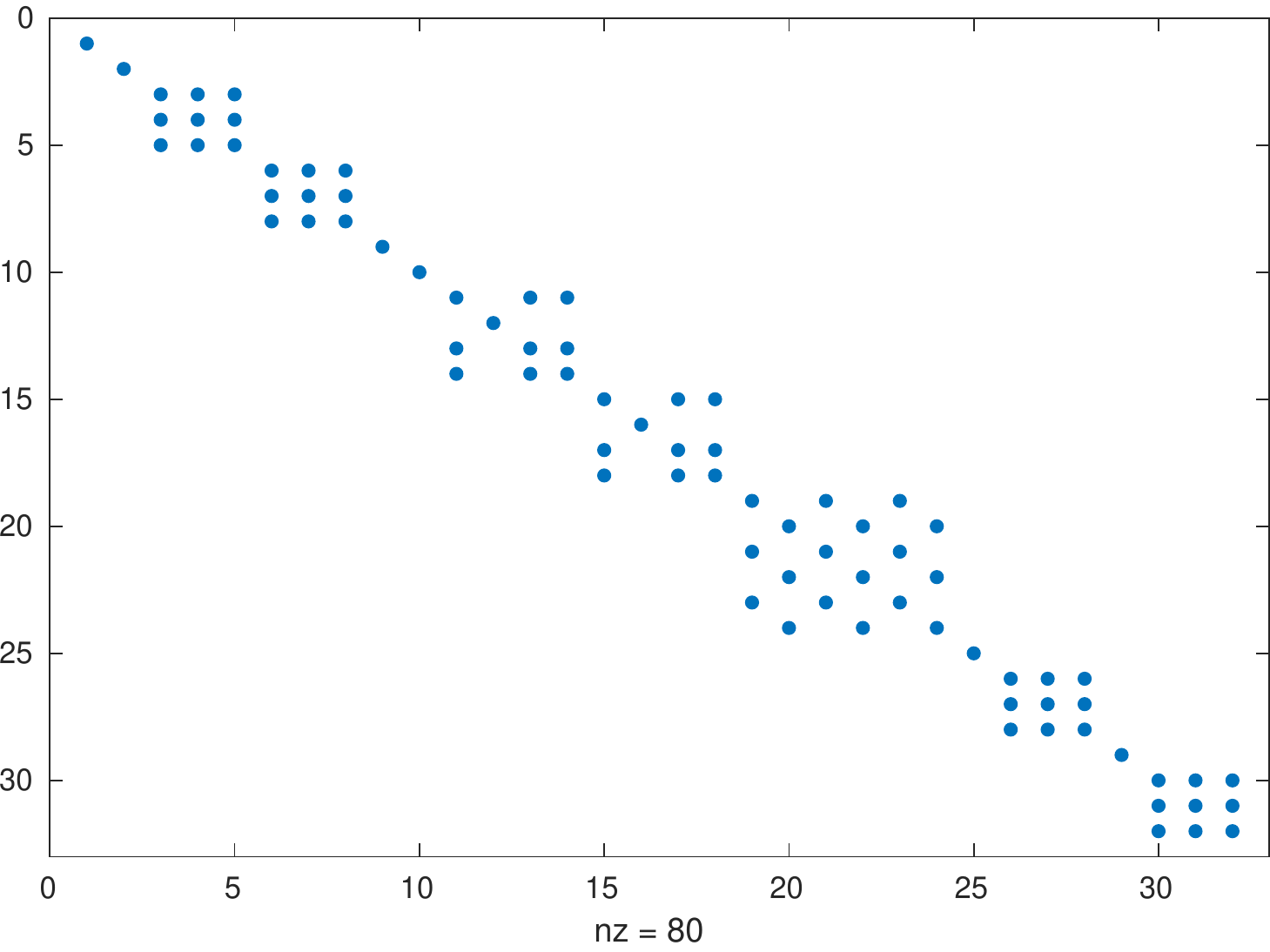}
    \includegraphics[width=0.25\textwidth,height=0.25\textwidth]{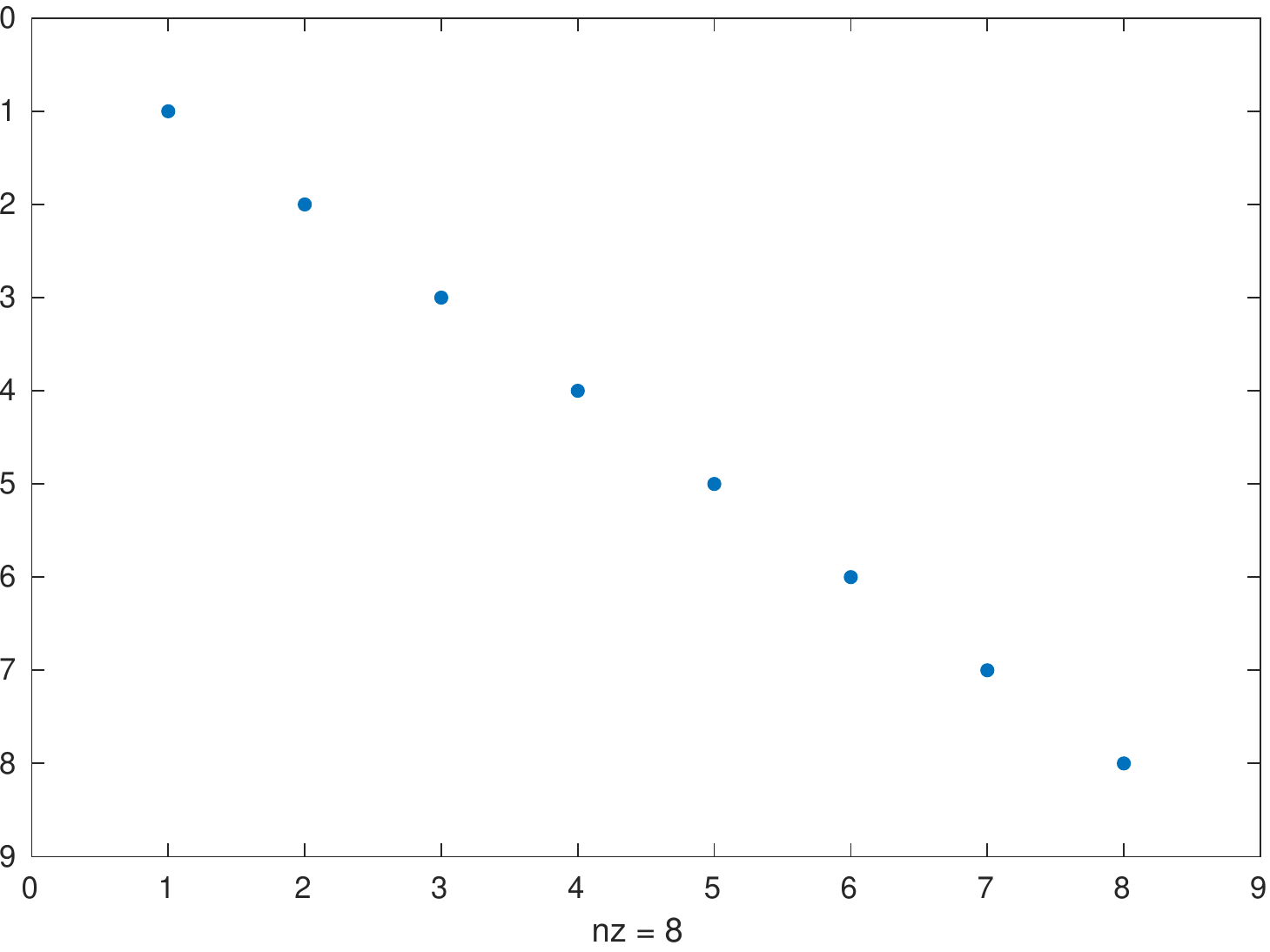}\\
    \centering
    \begin{minipage}{0.1\textwidth}
        \vspace{-3cm}
        $p=0$
    \end{minipage}
    \begin{minipage}{0.3\textwidth}
        \vspace{-3cm}
        \includegraphics[width=\textwidth]{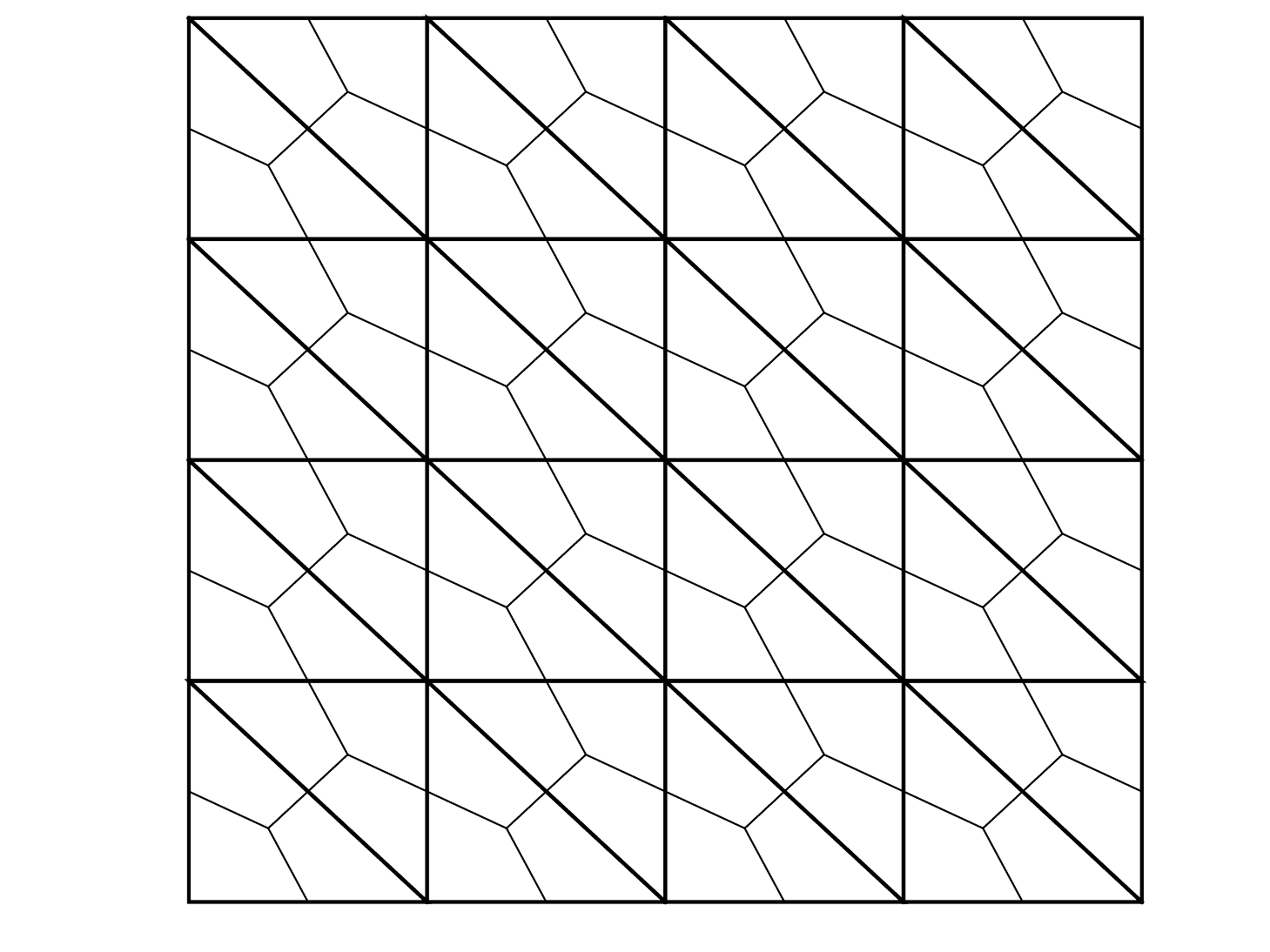}
    \end{minipage}
    \includegraphics[width=0.25\textwidth,height=0.25\textwidth]{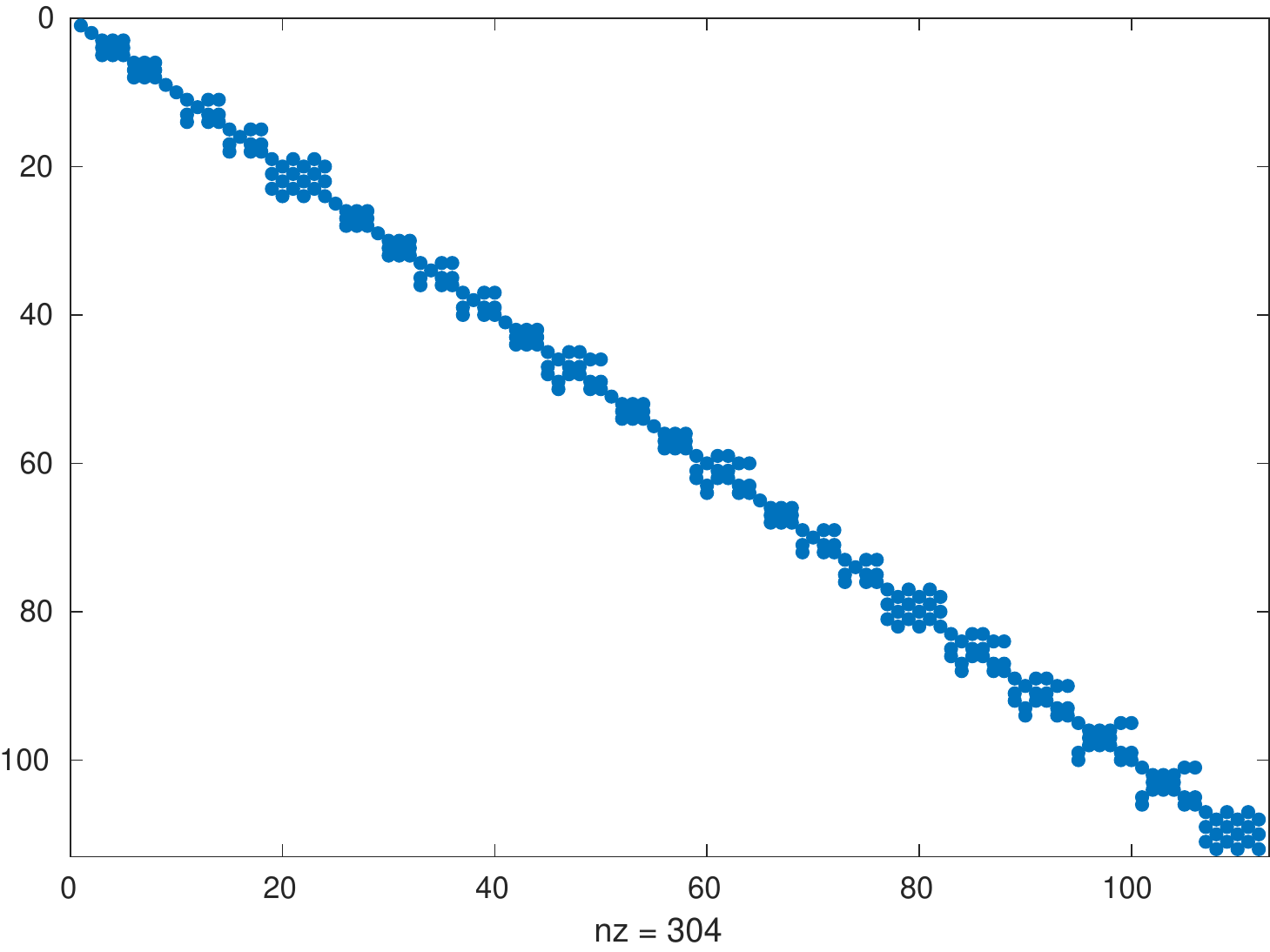}
    \includegraphics[width=0.25\textwidth,height=0.25\textwidth]{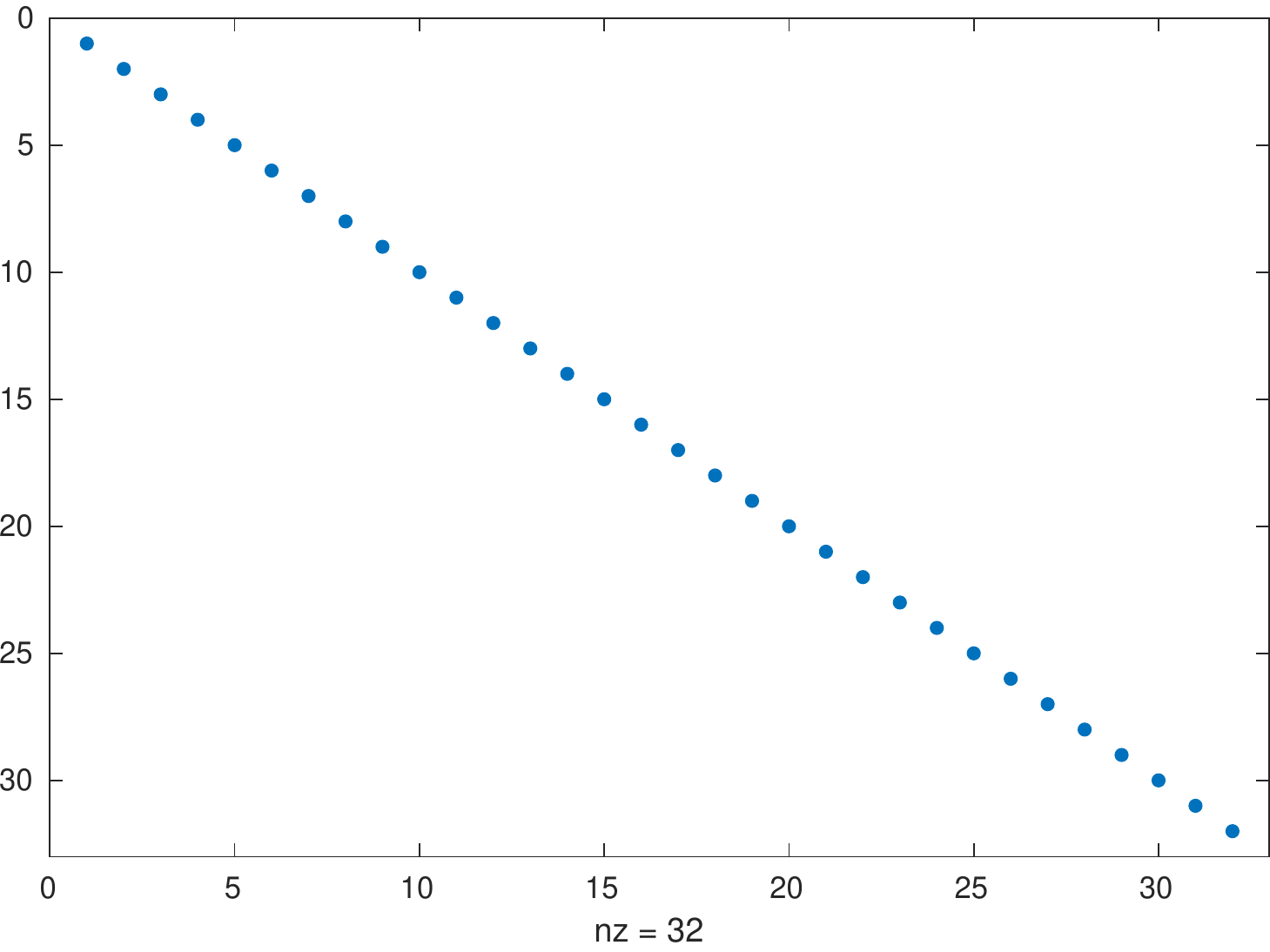}
    \caption{The sparsity pattern for the lowest order $\tilde{\mathbf{M}}_0^\varepsilon$ (second column) and $\mathbf{M}_0^\mu$ (third column) under uniform $h$-refinement, i.e. meshes in the first column are constructed by barycentric subdivision of uniform refinements (by means of edge-bisection) of the starting primal $\mathcal{C}^\Omega$ mesh. The label $\mathrm{nz}$ denotes the number of non-zero entries. Since for $p=0$ only $\tilde{w}_\mathcal{T}$ functions survive for the ${H}^{h,p}$ field, $\mathbf{M}_0^\mu$ is fully diagonal.}
   \label{fig:sparsity_h_check}
\end{figure}

Going now into greater depth with implementation-related details, we provide a procedure for the explicit computation of all matrix entries. 
It holds:
\begin{align}
\left(\tilde{\mathbf{M}}_p^\varepsilon\right)_{n,n'} &:=
\left(\varepsilon(\bm{r})\bm{w}_{n'}^p(\bm{r}),\bm{w}_{n}^p(\bm{r})\right)_\Omega
= \sum\limits_{K\in\mathcal{K}_2^\Omega} \left(\varepsilon(\bm{r})\bm{w}_{n'}^p(\bm{r}),\bm{w}_{n}^p(\bm{r})\right)_K = \nonumber \\
&=\sum\limits_{\substack{K\in \\ \{\mathcal{K}_2^\Omega \cap \,\supp(\bm{w}_n)\}}} \hspace{-5mm}
\left( J_K^{-1} \varepsilon(\varphi_K^{-1}(\hat{\bm{r}}))\mathbf{A}_K^{-\mathrm{T}}\hat{\bm{w}}_{l'}^{i'j'}(\hat{\bm{r}}),
\;\mathbf{A}_K^{-\mathrm{T}}\hat{\bm{w}}_{l}^{ij}(\hat{\bm{r}})\right)_{\hat{K}}= \nonumber\\
&=\sum\limits_{\substack{K\in \\ \{\mathcal{K}_2^\Omega\cap\,\supp(\bm{w}_n)\}}} \hspace{-5mm}
\left( J_K^{-1} \mathbf{A}_K^{-1}\varepsilon(\varphi_K^{-1}(\hat{\bm{r}}))\mathbf{A}_K^{-\mathrm{T}}\hat{\bm{w}}_{l'}^{i'j'}(\hat{\bm{r}}),
\;\hat{\bm{w}}_{l}^{ij}(\hat{\bm{r}})\right)_{\hat{K}}:= \nonumber\\
&:= \sum\limits_{\substack{K\in \\ \{\mathcal{K}_2^\Omega\cap\,\supp(\bm{w}_n)\}}} \hspace{-5mm}
\int_{\hat{K}} \left(\hat{\varepsilon}_K(\hat{\bm{r}})\hat{\bm{w}}_{l'}^{i'j'}(\hat{\bm{r}})\right) \cdot \hat{\bm{w}}_{l}^{ij}(\hat{\bm{r}})\,\mathrm{d}\hat{\bm{r}},\label{eq:mass_matrix_algebra}
\end{align}
\noindent where we reach line two (in which $\supp(\cdot)$ denotes the support of a function and $J_K$ is the Jacobian determinant of $\varphi_K$) by virtue of the transformation rules in (\ref{eq:phys_edgefuns}) and by using the local definition of shape-functions in (\ref{eq:ref_vecfuns}). We assume that local indices $i,j,l$ (in place of $n$) and $i',j',l'$ (in place of $n'$) exist such that the functional forms of some local shape-functions match the given $\bm{w}_{n'}^p$, $\bm{w}_{n}^p$. We remark that this is always true by construction of the space $\bm{W}^p$. Finally, (\ref{eq:mass_matrix_algebra}) is just a consistent re-definition where, for the sake of clarity, the modified material tensor $\hat{\varepsilon}_K := J_K^{-1} \mathbf{A}_K^{-1}\varepsilon\mathbf{A}_K^{-\mathrm{T}}$ has been introduced.

What (\ref{eq:mass_matrix_algebra}) means in practice is that, if the input mesh consists of straight-edged triangles and $\varepsilon$ is piecewise-constant on each $K$, all inner products in the mass-matrix con be computed (off-line with respect to the rest of computation) by working on the KC element, since the Jacobian (and hence $\hat{\varepsilon}_K$) is then piecewise-constant. These conditions are very often met in practical setups. A very similar procedure applies to the mass-matrix involving the scalar unknown, where a slightly different $\hat{\mu}_K:=J_K^{-1}\mu$ will arise, as the reader may also easily derive.

For the r.h.s. of (\ref{eq:semidiscrete_cmp}), we need to compute only half of the non-zero entries, as $\left(\mathbf{C}_p^{\mathrm{T}}\right)_{n,m} = \left(\mathbf{C}_p\right)_{m,n}$ (where $1\leq n \leq N$ and $1\leq m \leq M$). 
Since the involved algebra is a bit tedious, we omit in the following $\bm{r}$ and $\hat{\bm{r}}$ dependences to make the manipulations easier to read. Furthermore we assume that we are computing an entry related to a pair of trial- and test-functions which have non-empty intersection between their support (the matrix entry is trivially null otherwise). We compute
\begin{align}
  \left(\mathbf{C}_p\right)_{m,n} &:=
  \sum_{K\in \mathcal{K}_2^\Omega}\!\int_{\partial{K}\cap\mathcal{S}(\mathcal{C}^\Omega)} \hspace{-10mm}\tilde{w}_{m}^p \bm{w}_{n}^p\cdot\hat{\bm{t}}(\ell)\,\mathrm{d}\ell
    +\sum_{K\in \mathcal{K}_2^\Omega}\!\left( \bm{w}_{n}^p,\bm{curl}(\,\tilde{w}_{m}^p\,) \right)_K = \nonumber \\
  &= \int_{\substack{ {} \\ \partial{K}\cap\mathcal{S}(\mathcal{C}^\Omega)}} \hspace{-10mm}\tilde{w}_{m}^{p} \bm{w}_{n}^{p}\cdot\hat{\bm{t}}(\ell)\,\mathrm{d}\ell
    +\left( \bm{w}_{n}^{p},\bm{curl}(\,\tilde{w}_{m}^{p}\,) \right)_K = \nonumber \\
  &= \int_{\substack{ \\ \partial{\hat{K}}\cap\mathcal{S}(\hat{\mathcal{T}})}} \hspace{-10mm} 
  J_K\hat{\tilde{w}}_{\tilde{l}'}^{i'j'} \left(\mathbf{A}_K^{-\mathrm{T}}\hat{\bm{w}}_{l}^{ij}\right) \cdot \left(J_K^{-1}\mathbf{A}_K\hat{\bm{t}}(\hat{\ell})\right)\,\mathrm{d}\hat{\ell}
    \!+\nonumber\\
    &+\left( J_K \mathbf{A}_K^{-\mathrm{T}} \hat{\bm{w}}_{l}^{ij},J_K^{-1}\mathbf{A}_K\hat{\bm{curl}}(\, \hat{\tilde{w}}_{\tilde{l}'}^{i'j'}\,) \right)_{\hat{K}} = \nonumber \\
  &= \int_{\substack{ \\ \partial{\hat{K}}\cap\mathcal{S}(\hat{\mathcal{T}})}} \hspace{-10mm} \hat{\tilde{w}}_{\tilde{l}'}^{i'j'} \hat{\bm{w}}_{l}^{ij}\cdot\hat{\bm{t}}(\hat{\ell})\,\mathrm{d}\hat{\ell}
    \!+\left(\hat{\bm{w}}_{l}^{ij}, \hat{\bm{curl}}(\,\hat{\tilde{w}}_{\tilde{l}'}^{i'j'}\,) \right)_{\hat{K}} = \nonumber \\
  \begin{split}&=\!\int\limits_0^{\frac{1}{2}} \hat{\tilde{w}}_{\tilde{l}'}^{i'j'} \left(\hat{\bm{w}}_{l}^{ij}\cdot\hat{\bm{\xi_1}}\right)\!\big\rvert_{\xi_2=0}\,\mathrm{d}\xi_1
    \!-\!\int\limits_0^{\frac{1}{2}} \hat{\tilde{w}}_{\tilde{l}'}^{i'j'} \left(\hat{\bm{w}}_{l}^{ij}\cdot\hat{\bm{\xi_2}}\right)\!\big\rvert_{\xi_1=0}\,\mathrm{d}\xi_2 +\\
    &+ \left(\hat{\bm{w}}_{l}^{ij}, \hat{\bm{curl}}(\,\hat{\tilde{w}}_{\tilde{l}'}^{i'j'}\,) \right)_{\hat{K}},\end{split} \label{eq:weak_curl_algebra}
\end{align}
\noindent where the salient details are the following.
The disappearance of summation symbols on line two descends from the fact that, for any pair of basis-functions in $\tilde{W}^p \times \bm{W}^p$, there will be at most one $K\in\mathcal{K}_2^\Omega$ on which neither of the two identically vanishes, which we label precisely $K$. 
Consequently, the definition of global basis-functions again yields the existence of appropriate ``matching'' local shape-functions with respective indices $\{i,j,l\}$ and $\{i',j',\tilde{l}'\}$. 
Line three uses the chosen transformation rules for the local shape-functions, plus the fact that the $\mathbf{curl}$ of a covariant (pseudo-)vector field, under coordinate changes, transforms according to the contra-variant (also known as Piola) mapping, which is also the appropriate transformation rule for the tangent unit vector under the same change of coordinates (proofs of this standard facts can be found, e.g., in \cite{monk}).
The notation $\hat{\bm{curl}}$ is consequently introduced for the classical differential curl operator in the local (Cartesian) $\xi_1$ and $\xi_2$ coordinates. 
Line four is achieved by virtue of standard matrix-algebra simplifications. This yields the final result, in which the line- and double integrals are explicitly written in the local coordinates on $\hat{K}$.

\begin{figure}[!h]
    \centering
    \begin{minipage}{0.1\textwidth}
        \vspace{-3cm}
        $p=1$
    \end{minipage}
    \begin{minipage}{0.3\textwidth}
    \vspace{-3cm}
    \includegraphics[width=\textwidth]{spymesh1.pdf}
    \end{minipage}
    \includegraphics[width=0.25\textwidth,height=0.25\textwidth]{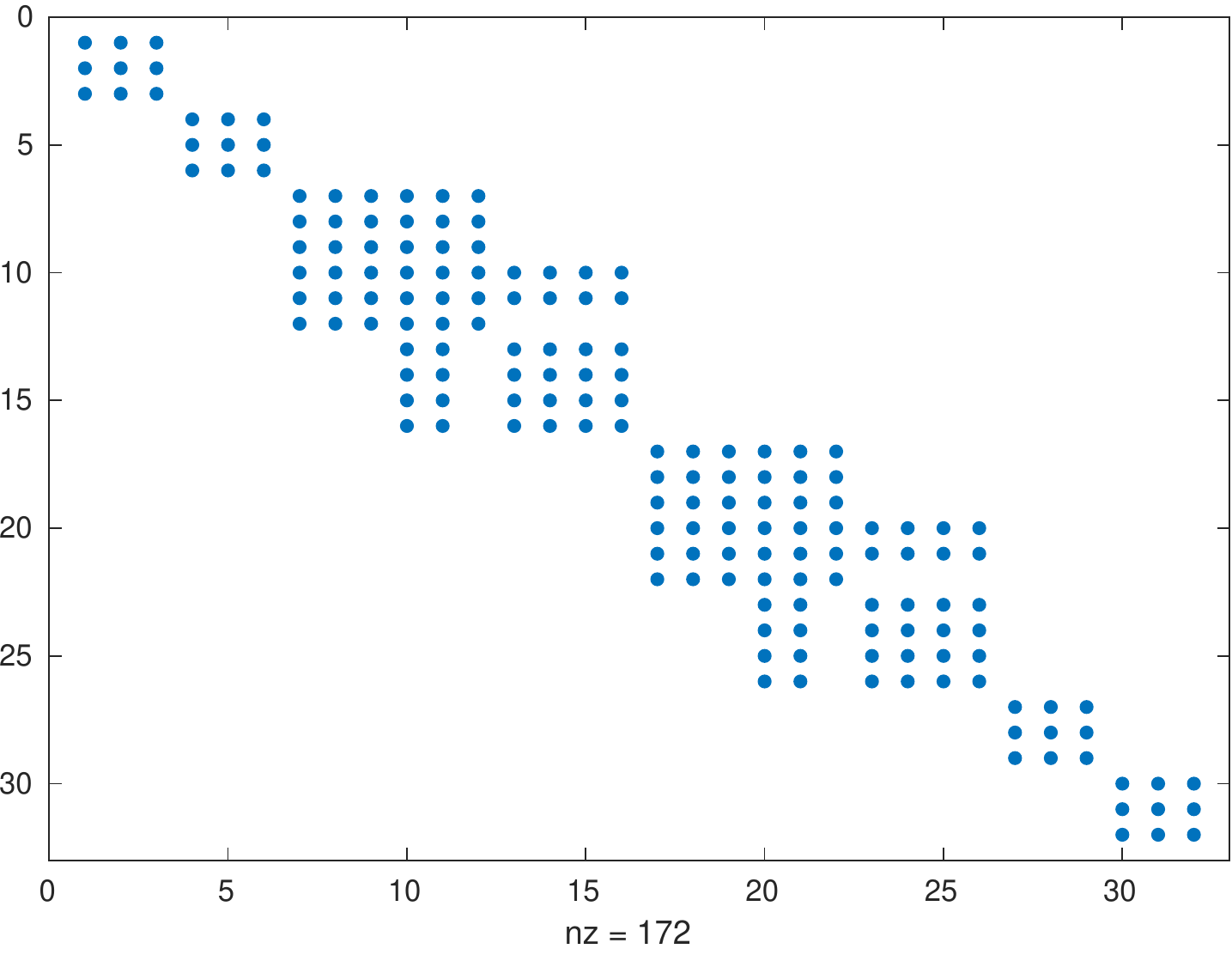}
    \includegraphics[width=0.25\textwidth,height=0.25\textwidth]{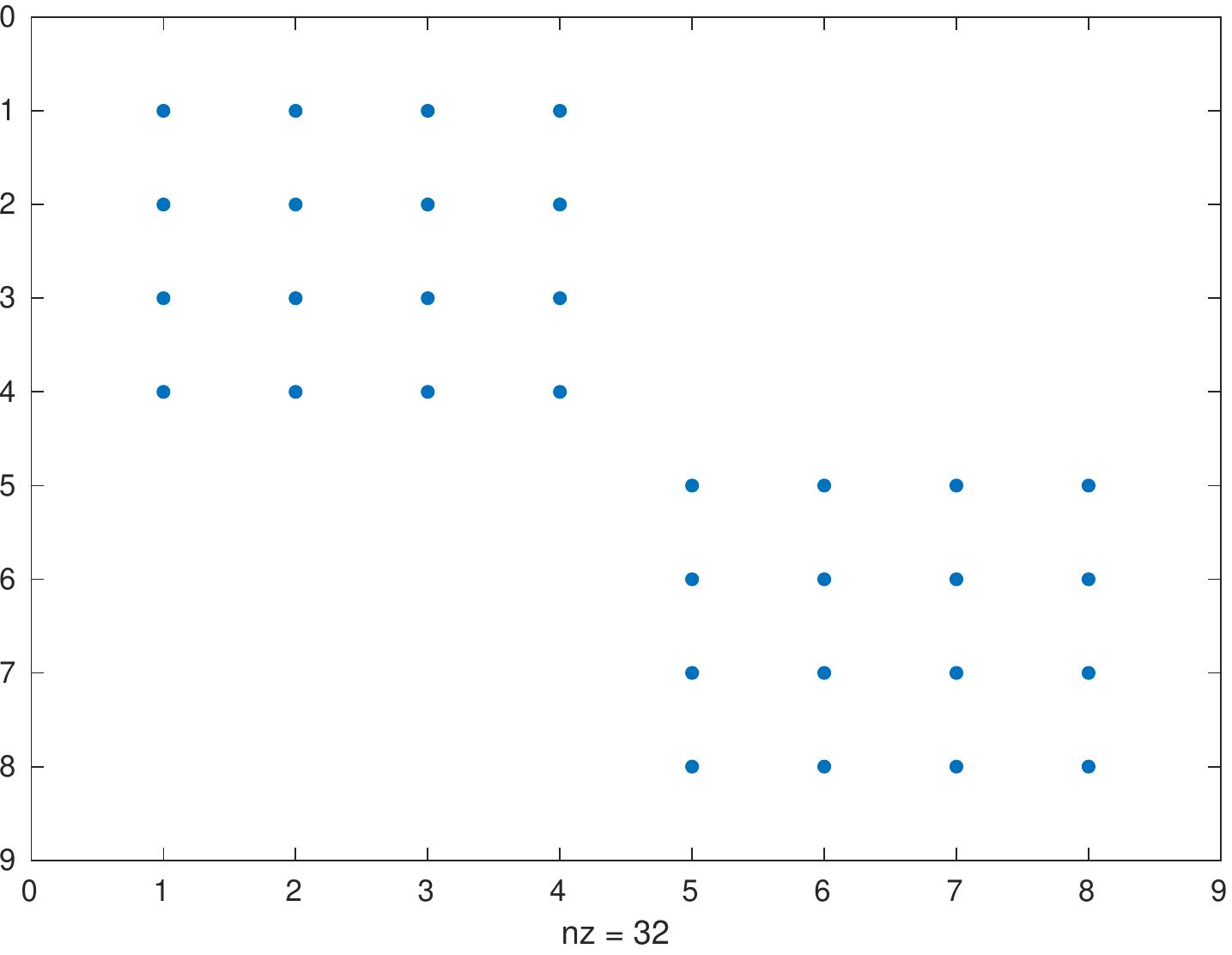}\\
    \centering
    \begin{minipage}{0.1\textwidth}
        \vspace{-3cm}
        $p=2$
    \end{minipage}
    \begin{minipage}{0.3\textwidth}
    \vspace{-3cm}
    \includegraphics[width=\textwidth]{spymesh1.pdf}
    \end{minipage}
    \includegraphics[width=0.25\textwidth,height=0.25\textwidth]{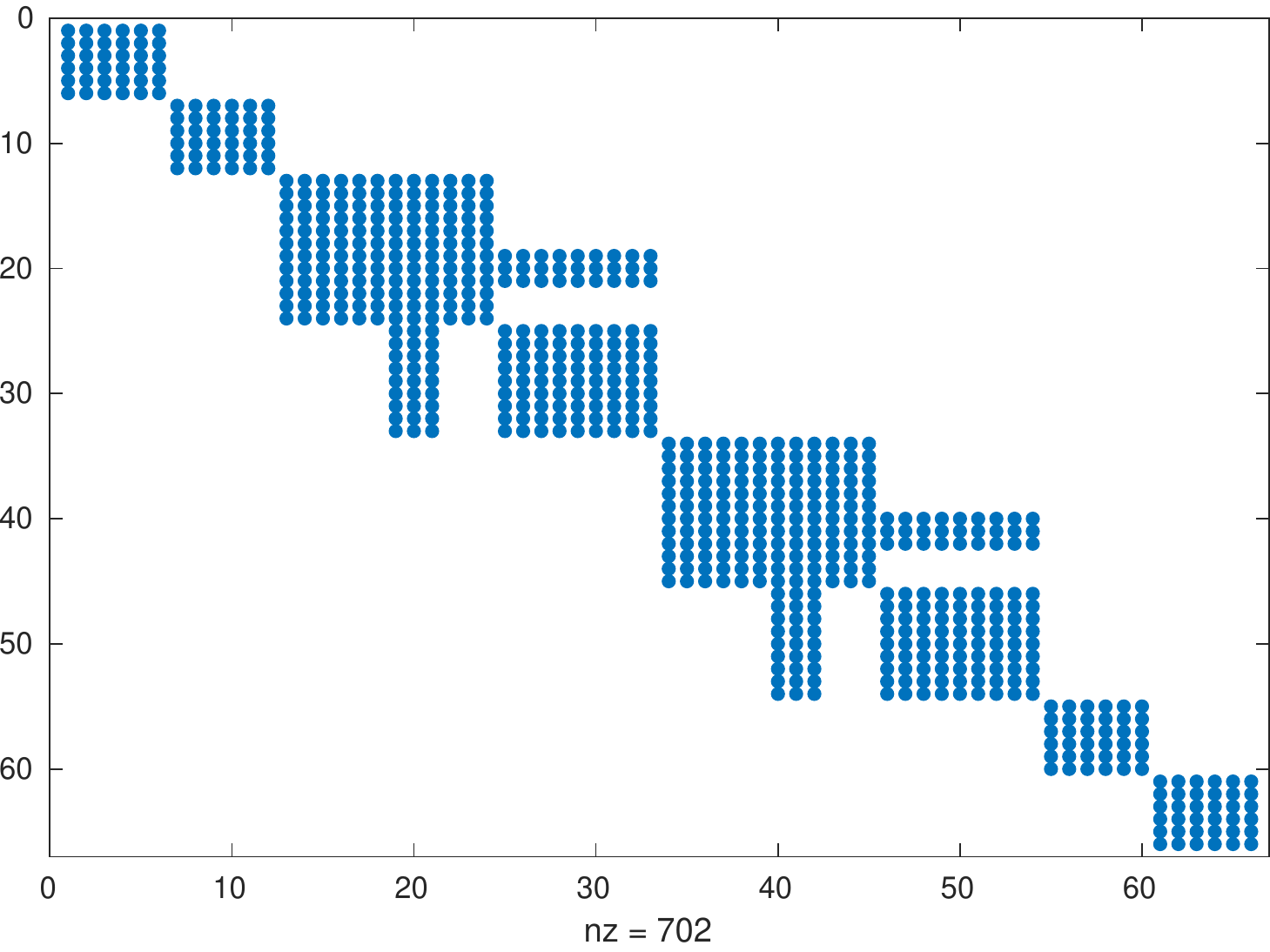}
    \includegraphics[width=0.25\textwidth,height=0.25\textwidth]{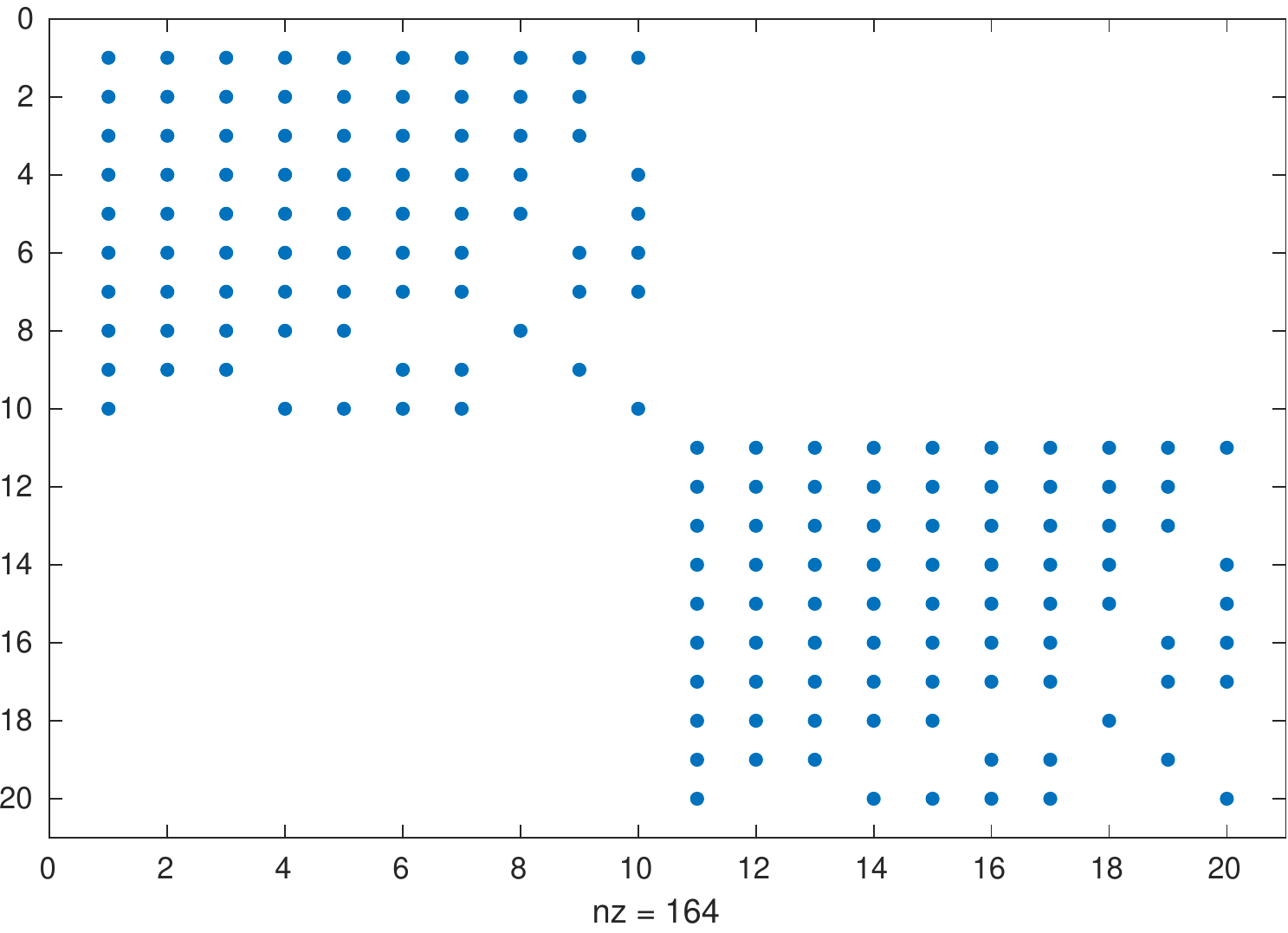}\\
    \centering
    \begin{minipage}{0.1\textwidth}
        \vspace{-3cm}
        $p=3$
    \end{minipage}
    \begin{minipage}{0.3\textwidth}
    \vspace{-3cm}
    \includegraphics[width=\textwidth]{spymesh1.pdf}
    \end{minipage}
    \includegraphics[width=0.25\textwidth,height=0.25\textwidth]{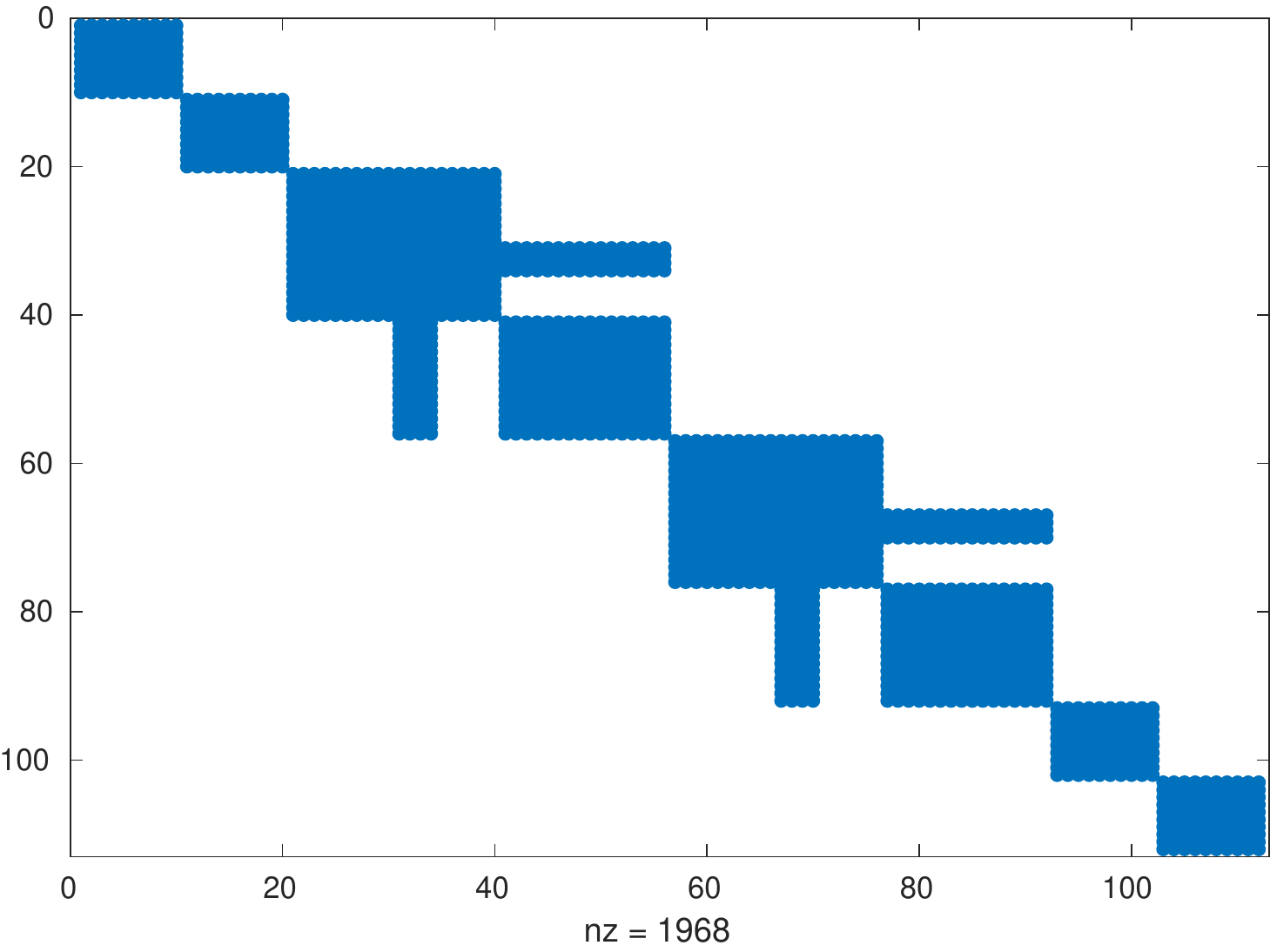}
    \includegraphics[width=0.25\textwidth,height=0.25\textwidth]{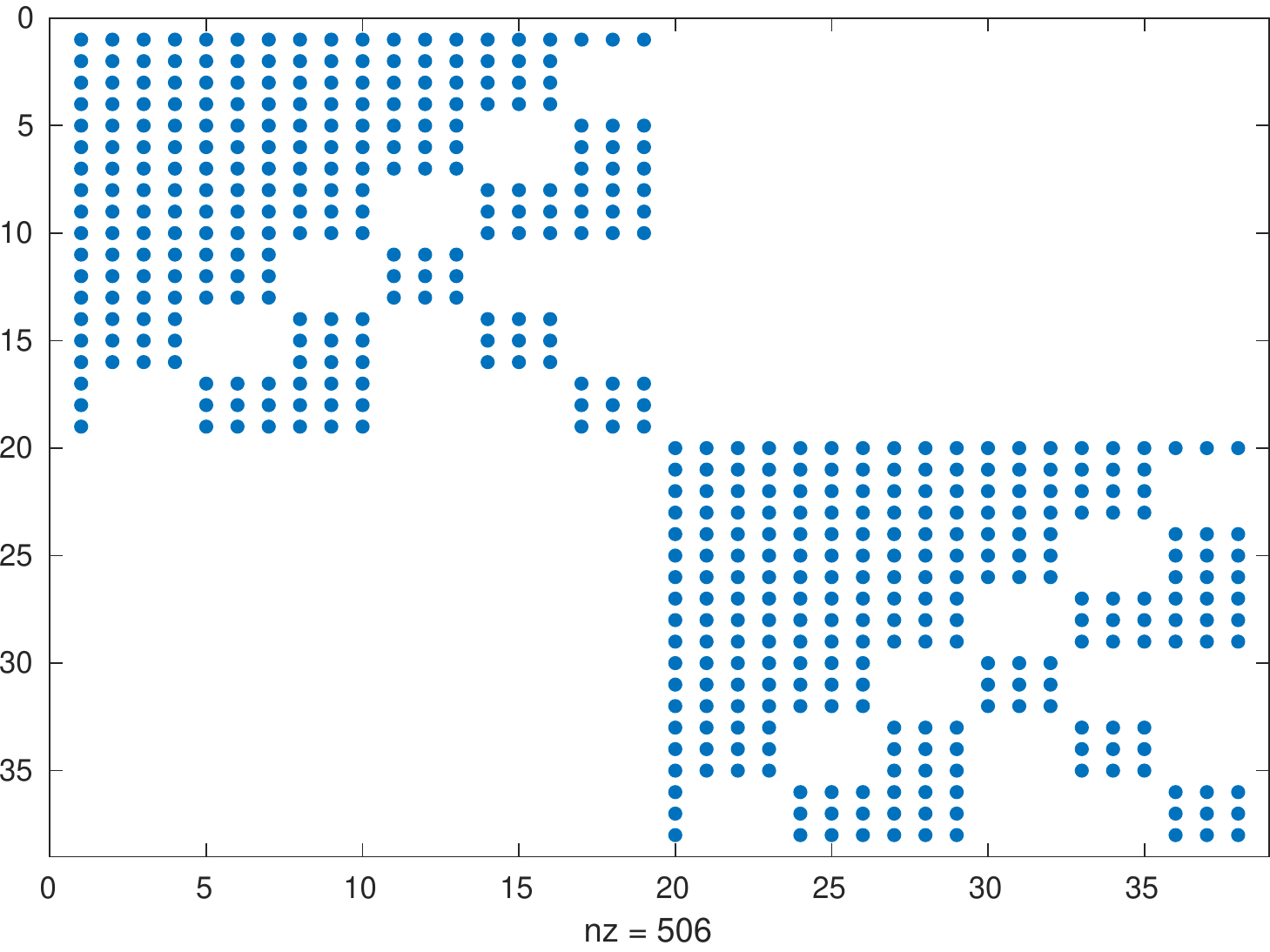}
    \caption{The sparsity pattern for the mass-matrices $\tilde{\mathbf{M}}_p^\varepsilon$ (second column) and $\mathbf{M}_p^\mu$ (third column) can also be studied under uniform $p$-refinement, i.e. the meshes remain unchanged in size, but the polynomial order is increased, namely we have $p=1,2,3$. The label $\mathrm{nz}$ again denotes the number of non-zero entries.}
    \label{fig:sparsity_p_check}
\end{figure}

The formula in (\ref{eq:weak_curl_algebra}) is arguably more impressive than (\ref{eq:mass_matrix_algebra}), since no dependence on the geometry of the mesh is left after all algebraic manipulations. More in detail, all non-zero entries in the $\mathbf{C}_p$ matrix are copies\footnote{up to orientation of edges and triangles, from which a very small number of equivalence classes for $\hat{\mathbf{C}}_p$ can be derived.} of entries of a local, entirely topological, template $\hat{M}\times\hat{N}$ matrix $\hat{\mathbf{C}}_p$, where
\begin{align*}
  & \hat{M} = \begin{pmatrix*}[c] p+2 \\2\end{pmatrix*} = \frac{(p+2)(p+1)}{2}, \;\;
    \hat{N} = 2\hat{M},
\end{align*}
\noindent are the dimensions of local scalar and vector shape-function spaces. As a consequence, with limited additional bookkeeping effort, no sparse and huge discrete curl-matrix needs to be stored in memory, and the ultra-weak curl operator can be applied efficiently, via its pre-computed low-storage representation, throughout the time integration of the problem. We remark that a similar result is achievable in more conventional DG formulations, as demonstrated in \cite{christophs, warburton_matrix_free, bk_js}, yet the fact that this feat can be achieved even when using the presented novel formulation on barycentric-dual complexes was highly non-trivial.

A final important hint on implementation of the proposed discrete formulation is motivated by the fact that we found, by direct computation, the following remarkable identity to hold:
\begin{align}
  & \int_{\hat{K}} \xi_1^r\xi_2^s\,\mathrm{d}\hat{\bm{r}} =
  \frac{B_{\frac{1}{3}}(r+1,s+1)}{2^{s+1}(r+s+2)} + \frac{B_{\frac{1}{3}}(s+1,r+1)}{2^{r+1}(r+s+2)},\label{eq:closed_form_ints}
\end{align}
\noindent for arbitrary non-negative integers $r$, $s$, where $B_{\alpha}(a,b)$ is the incomplete $\beta$-function (also known as the Euler integral of the first kind), defined as
\begin{align*}
  B_{\alpha}(a,b) = \int_0^\alpha z^{a-1}(1-z)^{b-1}\,\mathrm{d}z,
\end{align*}
\noindent the values of which can be computed to arbitrary precision and stored for all needed positive integers values of $a$, $b$ and for the particular value $\alpha=1/3$. Since all double integrals in the discrete formulation can be shown to reduce to linear combinations of terms equivalent to the l.h.s. of (\ref{eq:closed_form_ints}), no need for numerical integration arises (as long as the material parameters are piecewise-constant).

\subsection{The lowest order element and the cell method}
It is known that, both for conforming discretisations relying on finite elements of the Nedelec \cite{nedelec1980} type and for DG formulations based on central fluxes, the requirement is to have basis-functions which are piecewise-polynomials of degree $p$ and $p+1$ for the magnetic and electric field respectively (or vice-versa). This leads to sub-optimal convergence rates in the electromagnetic energy norm and also implies that, for the lowest admissible order, we need basis-functions which are piecewise-affine for one of the two unknown fields. For the proposed method instead, the two unknowns are approximated up to the same polynomial degree, and the lowest admissible one is $p=0$, i.e. piecewise-constant fields.
\begin{figure}[!h]
  \label{fig:low_order_incidence}
    \centering
    \begin{tikzpicture}[thick,scale=7.0, every node/.style={scale=7.0}]
      \coordinate (a) at (0.0,0.0);
      \coordinate (b) at (1/2,1/2);
      \coordinate (c) at (0.0,1/2);
      \coordinate (d) at (0.0,1/4);
      \coordinate (e) at (1/4,1/4);
      \coordinate (f) at (1/4,1/2);
      \coordinate (g) at (1/6,1/3);
      \coordinate (i) at (1/2,0.0);
      \coordinate (j) at (1/3,1/6);
      \coordinate (k) at (1/4,0.0);
      \coordinate (l) at (1/2,1/4);
      \node[scale=0.2] at (-1/20,-1/20) {$\bm{v}_1$};
      \node[scale=0.2] at (1/2+1/20,-1/20) {$\bm{v}_2$};
      \node[scale=0.2] at (1/2+1/18,1/2+1/18) {$\bm{v}_4$};
      \node[scale=0.2] at (-1/18,1/2+1/18) {$\bm{v}_3$};
      \node[scale=0.2,blue] at (1/8,-1/18) {$u_1$};
      \node[scale=0.2,blue] at (1/8+1/18,1/8) {$u_2$};
      \node[scale=0.2,blue] at (-1/18,1/8) {$u_3$};
      \node[scale=0.2,blue] at (3/8,-1/18) {$u_4$};
      \node[scale=0.2,blue] at (1/2+1/18,1/8) {$u_5$};
      \node[scale=0.2,blue] at (3/8+1/18,3/8) {$u_8$};
      \node[scale=0.2,blue] at (1/2+1/18,3/8) {$u_9$};
      \node[scale=0.2,blue] at (3/8,1/2+1/18) {$u_{10}$};
      \node[scale=0.2,blue] at (1/8,1/2+1/18) {$u_7$};
      \node[scale=0.2,blue] at (-1/18,3/8) {$u_6$};
      \node[scale=0.2,red] at (1/3+1/36,1/6+1/18) {$f_1$};
      \node[scale=0.2,red] at (1/6+1/18,1/3+1/36) {$f_2$};
      \draw[color=black,thick] (a) -- (b) -- (c) -- (a);
      \draw[color=black,thick] (a) -- (i);
      \draw[color=black,thick] (i) -- (b);
      \draw[color=black,dashed] (d) -- (g);
      \draw[color=black,dashed] (e) -- (g);
      \draw[color=black,dashed] (f) -- (g);
      \draw[color=black,dashed] (j) -- (e);
      \draw[color=black,dashed] (j) -- (k);
      \draw[color=black,dashed] (j) -- (l);
      \draw[-latex,color=red] (2/7,1/8) arc (-150:180:1mm) node[near start,left] {};
      \draw[-latex,color=red] (1/12,2/7) arc (-150:180:1mm) node[near start,left] {};
      \draw[draw,blue,->] (a) -- (1/8,0.0);
      \draw[draw,blue,->] (k) -- (3/8,0.0);
      \draw[draw,blue,->] (i) -- (1/2,1/8);
      \draw[draw,blue,->] (a) -- (1/8,1/8);
      \draw[draw,blue,->] (a) -- (0.0,1/8);
      \draw[draw,blue,->] (e) -- (3/8,3/8);
      \draw[draw,blue,->] (f) -- (3/8,1/2);
      \draw[draw,blue,->] (d) -- (0.0,3/8);
      \draw[draw,blue,->] (c) -- (1/8,1/2);
      \draw[draw,blue,->] (l) -- (1/2,3/8);
  \end{tikzpicture}
    \caption{The $p=0$ method at work on a mesh consisting of two triangles $\mathcal{T}_1$ (with vertices $\bm{v}_1$,$\bm{v}_2$,$\bm{v}_4$) and $\mathcal{T}_2$ (with vertices $\bm{v}_1$, $\bm{v}_3$, $\bm{v}_4$).}
\end{figure}
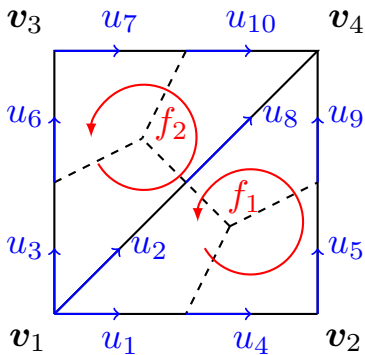

We may in fact take a closer look at the dimensions of the spaces in (\ref{eq:dimvec}) and (\ref{eq:dimscal}). We notice that, if we set $p=0$, we still have a non-empty basis. 
Specifically, we are left with one degree of freedom (DoF) per triangle for the pseudo-vector ${H}^{h,0}(\bm{r},t)$ and two DoFs per each triangle edge in the primal complex for the vector field $\bm{E}^{h,0}(\bm{r},t)$.
A simple illustrating example is given in Fig.~\ref{fig:low_order_incidence} for a mesh consisting of two triangles $\mathcal{T}_1$, and $\mathcal{T}_2$: we have there ten DoFs for the electric field $u_{n=1,2,\dots,10}$ and two DoFs (one per triangle) for the magnetic field $f_{m=1,2}$. Their indexing is induced strictly from the ordering of vertices in the primal complex:  it is easy to prove that the $\bm{u}(t)$ DoFs are line-integrals of the electric fields along edges $e\in\mathcal{K}_1^\Omega\cap\mathcal{S}(\mathcal{C}^\Omega)$ while the $\bm{f}(t)$ DoFs are fluxes of the $B^{h,0}(\bm{r},t) := \mu {H}^{h,0}(\bm{r},t)$ pseudo-vector field across the triangles $\mathcal{T}\in\mathcal{K}_2^\Omega$. Given an edge $e\in\{\mathcal{K}_1^\Omega\cap\mathcal{S}(\mathcal{C}^\Omega)\}$, it holds in fact:
\begin{align*}
\int_{e} \bm{E}^{h,0} (\bm{r},t)\cdot \hat{\bm{t}}(\ell)\,\mathrm{d}\ell &= 
\int_{e} u_e(t)\bm{w}_{e}^0(\bm{r})\cdot \hat{\bm{t}}(\ell)\,\mathrm{d}\ell = \nonumber\\
&= u_e(t) \int_{e}  C_{00l} \left(\mathbf{A}_K^{-\mathrm{T}} \hat{\nabla}{\xi}_{l} \right)\cdot \hat{\bm{t}}(\ell)\,\mathrm{d}\ell=\nonumber\\
&= u_e(t),
\end{align*}
\noindent where we abuse the notation again by using $e$ both as an index and as the integration domain, and $K \in\mathcal{K}_2^\Omega$ s.t. ${e} \subset \{\partial{K}\}, l = l(e,K)$. The fact that only one basis-function has non-null tangential component on the edge $e$ was also exploited, by setting $C_{00l}=1/|e|$. Very similar steps are easily computable for the magnetic field approximation ${H}^{h,0}$.

The discrete operator $\mathbf{C}_0$ and its transpose are instead exactly incidence matrices: this can be proved by direct computation of (\ref{eq:weak_curl_algebra}) where, as the reader may notice, the double integrals on the kite vanish identically (since $p=0$) but the line-integrals do not. 
The left-multiplication of DoFs vectors with the incidence matrix $\mathbf{C}_0 \bm{u}$ equates to the Stokes theorem: 
\begin{align*}\int_{\mathcal{T}_m} \partial_t(\mu{H}^{h,0})\,\mathrm{d}\bm{r} = \oint_{\partial\mathcal{T}_m} \bm{E}^{h,0}\cdot\hat{\bm{t}}(\ell)\,\mathrm{d}\ell, & \;\;\;m=1,2.\end{align*}

If PEC boundary conditions are enforced, we note that the number of unconstrained DoFs becomes equal (to two) for both unknowns $\bm{E}^{h,0}$ and ${H}^{h,0}$ (as predicted, for example, in \cite{he_tex_dofs}). We finally again stress that the $\bm{E}^{h,0}(\bm{r},t)$ field is allowed to be fully discontinuous on the dashed dual edges.

We remark that this is exactly the Cell Method's framework advocated by Tonti\cite{tonti,marrone}, while also being a generalization of the Yee algorithm to unstructured meshes. The peculiarity of having to split each $E\in\mathcal{C}_1^\Omega$ into two segments (while still preserving the physical interpretation of DoFs) is also not new, but was instead studied by some of the authors in the most general 3D setting in \cite{codecasa_politi,dgatap,dgagpu}, where tetrahedral meshes are used. In fact the (covariantly mapped) function 
\begin{align*}\mathbf{A}_K^{-\mathrm{T}} \hat{\nabla}{\xi}_{l}\end{align*}
is a one-form which coincides exactly with the basis-functions introduced in \cite{codecasa_politi} directly in the global coordinates.
As anticipated in the introduction, a recent equivalence proof (in \cite{dga_as_dg}) between the lowest order 3D CM and a DG approach was a leading cause for the present developments. 


\section{Numerical Results}\label{sec:num}
We shall here validate the method through numerical experiments. All computations are in natural units, i.e. physical units have been rescaled such that the speed of light in a vacuum is normalized to one, which means in practice $\varepsilon = \varepsilon_r$ and $\mu = \mu_r$.
\subsection{MIBVP results}
To test the transient behaviour of the method we use a manufactured time domain problem, with solution already available in closed form in \cite{dgatap}, where the computational domain is a waveguide $\Omega = [0,1]\!\times\![0,2]$. The fundamental propagating mode\footnote{the waveguide-mode with the lowest cut-off frequency.} is enforced as a time-dependent boundary condition at the entrance $y=0$ of the waveguide, while all other segments in $\partial\Omega$ are set to PEC.
To simulate an invariant structure in the $z$ direction (a needed assumption for true 2D problems) we need a transverse-electric (TE) mode, i.e. only one component of the electric field is not identically zero. It is very convenient for the purpose to swap the field approximation spaces with respect to the theory and make the $\bm{E}$ field a pseudo-vector. This poses no real hardships, as the input field can be injected as an equivalent magnetic current by projecting it on the vector-valued trial-space (which requires 1D Gauss integration on mesh edges at $y=0$).
\begin{figure}[!h]
  \centering
  \includegraphics[width=1.0\textwidth]{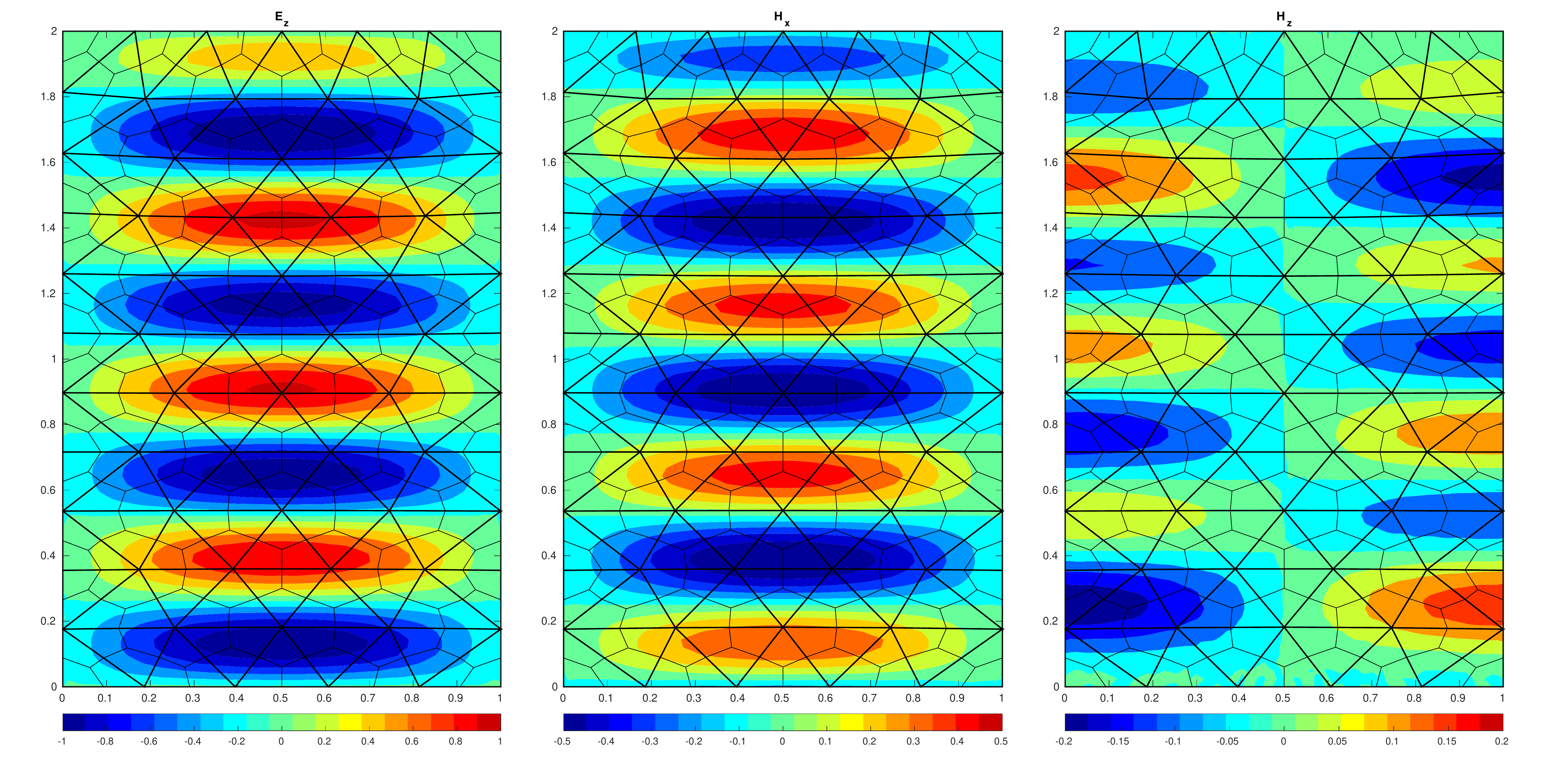}
  \caption{The transient field in the waveguide at $t=2$.}
  \label{fig:contourf}
\end{figure}

The behaviour in the whole waveguide for the three non-zero components of the electromagnetic field is shown in Fig.~\ref{fig:contourf}, for polynomial degree $p=5$, average mesh size $h=0.2$ and at time $t=2$ (again in natural units).
Due to the reflections at $y=2$, the $z$-aligned field is not everywhere continuously differentiable in time. The presence of critical points in the temporal behaviour is visible in Fig.~\ref{fig:td_full}, which shows how the various polynomial orders behave for the same mesh, chosen to be rather coarse with a maximum mesh size of $h=0.2$. All polynomial degrees in the bases are tested with the leap-frog time-stepping scheme using $\tau$ equal to the upper limit for stability (the usual practical choice). The qualitatively better approximation properties of the higher order versions of the method are clearly visible.

We do not make any claim to have programmed the fastest possible version of the method, yet it is useful to remark that for $p=6$ and the given mesh size $h$, the method requires $21\,472$ DoFs ($6\,464$ for the scalar-valued unknown, $15\,008$ for the vector-valued one), the maximum allowed time-step is $\tau=3.833\times 10^{-3}$, and the computation reaches a yield of $129.633$ time-steps per second (in wall-time, averaged over simulations with $10^5$ time-steps) on a modest laptop computer (Intel Core i7-6500U CPU, clocked at 2.50GHz with 4 physical cores, 8 GB of RAM), which amounts to roughly $2.783\times 10^6$ DoFs/second of average performance.

\begin{figure}[h!]
  \centering
    \centering
    \includegraphics[width=0.49\textwidth]{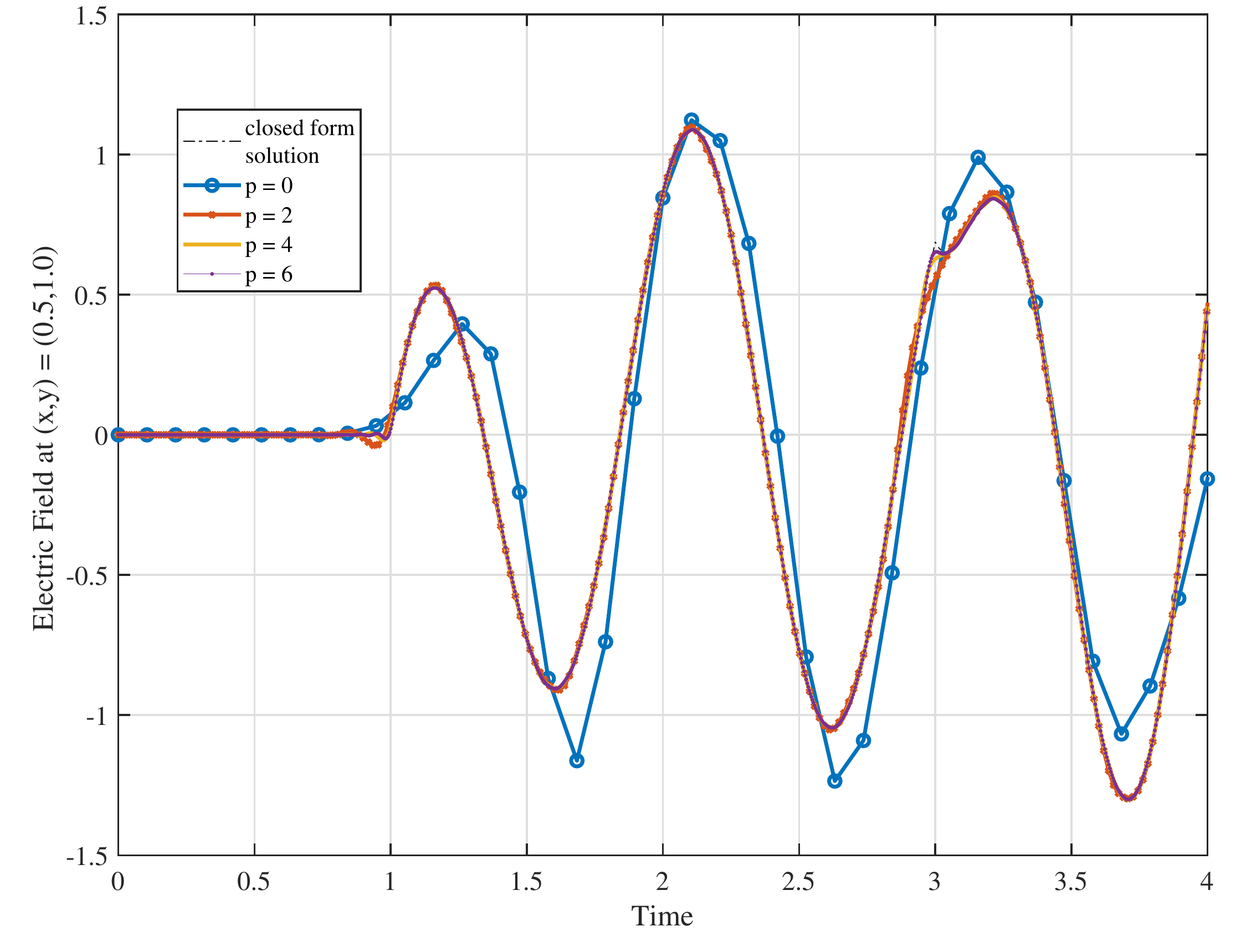}
    \includegraphics[width=0.49\textwidth]{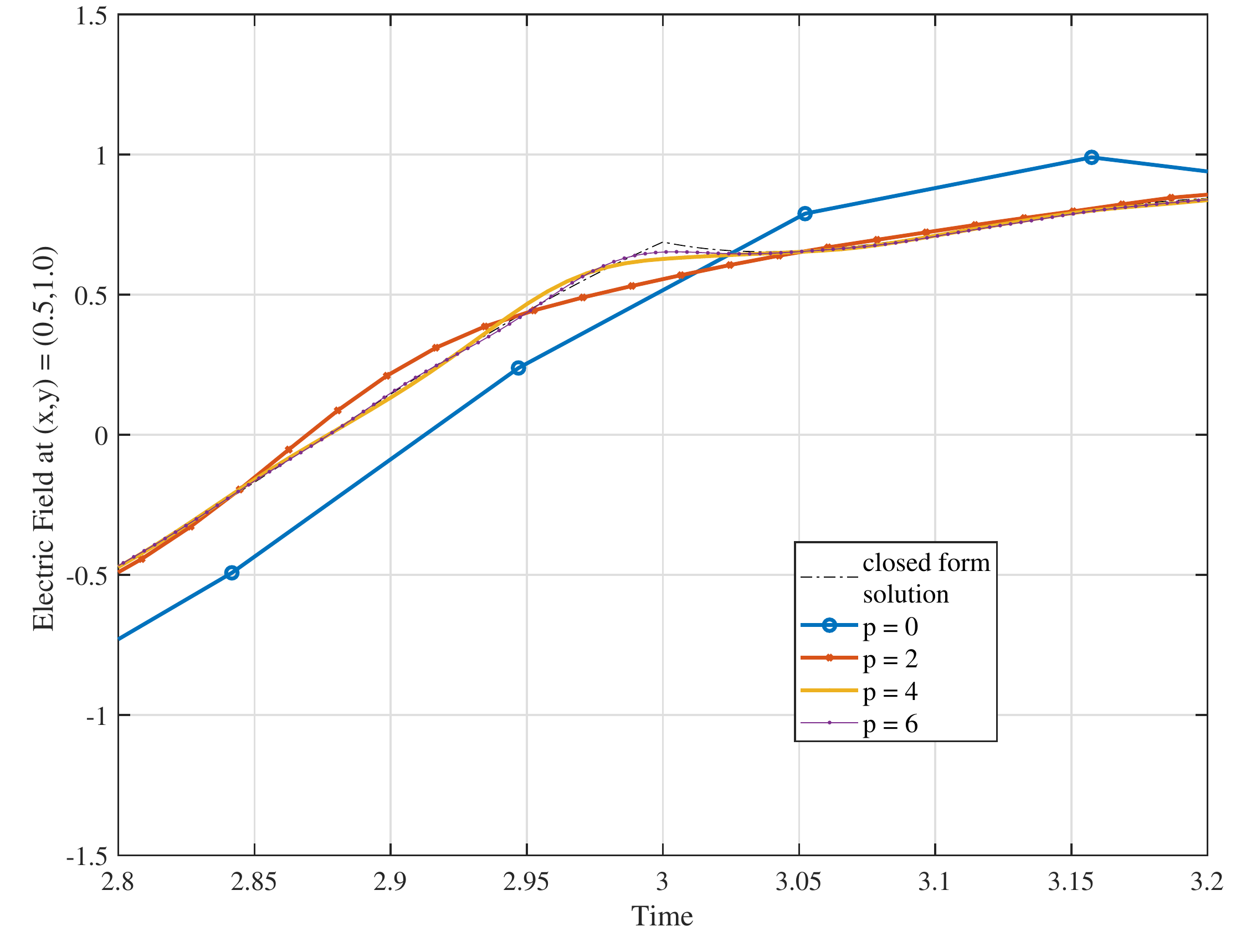}
  \caption{The left panel shows the time-dependent solution to the waveguide problem. The field is measured at the centroid of the domain, as can be inferred by the delay in the propagation at the start. In the right panel a blow-up of a small interval around $t = 3.0$ is shown, where a critical point in the solution must be approximated.}
  \label{fig:td_full}
\end{figure}
\subsection{Spectral accuracy}
Due to low regularity of the true solution for the transient waveguide problem, we cannot expect to observe the theoretical order of convergence for the method.
A good way to assess the superiority in terms of approximation properties when using higher order basis-functions is to use the proposed method to solve an associated generalized eigenvalue problem. In fact, since we are using a kind of discontinuous Galerkin approach, the spectral accuracy of the method is interesting in its own right, and not just as a mean to study convergence, since we have no formal guarantee for the absence of spurious modes, which would tarnish the appeal of any new numerical method. By acting directly on the semi-discrete system of (\ref{eq:semidiscrete_cmp}) and making it time-harmonic ($\partial_t \mapsto -i\omega$, where now $i=\sqrt{-1}$) we arrive at the two following ``dual'' formulations:
\begin{align}
  & \mathbf{C}_p (\mathbf{M}_p^\varepsilon)^{-1} \mathbf{C}_p^{\mathrm{T}} \hat{\mathbf{f}} = \lambda \mathbf{M}_p^\mu \hat{\mathbf{f}},\label{eq:mevp}\\
  & \mathbf{C}_p^{\mathrm{T}} (\mathbf{M}_p^\mu)^{-1} \mathbf{C}_p \hat{\mathbf{u}} = \lambda^* \mathbf{M}_p^\varepsilon \hat{\mathbf{u}},\label{eq:eevp}
\end{align}
\noindent where the hat super-script denotes the time-harmonic solutions and $\lambda$, $\lambda^*$ are the squared eigenfrequencies. Depending on boundary conditions, (\ref{eq:mevp})--(\ref{eq:eevp}) approximate either the Dirichlet MEP or the Neumann one. 
We choose to work with (\ref{eq:mevp}) since the $H^{\bm{curl}}(\Omega)$ space, as defined in Section \ref{sec:funspaces}, coincides in two dimensions with the standard Sobolev space $H^1(\Omega)$, i.e. we are basically approximating the Laplace operator with the matrix $\mathbf{C}_p (\mathbf{M}_p^\varepsilon)^{-1} \mathbf{C}_p^{\mathrm{T}}$.

As a first example we take the unit square domain $\Omega=[0,1]\times[0,1]$ with uniform material coefficients $\mu_r=1$ and $\varepsilon_r=1$. In this case the eigenvalues are of the form $\lambda = (a^2+b^2)\pi^2$ where $a,b \in \mathbb{N}^+$ for Dirichlet boundary conditions on the $H$ field and $a,b\in\mathbb{N}_0$ for Neumann ones.

\begin{figure}[!h]
  \centering
  \includegraphics[width=0.65\textwidth]{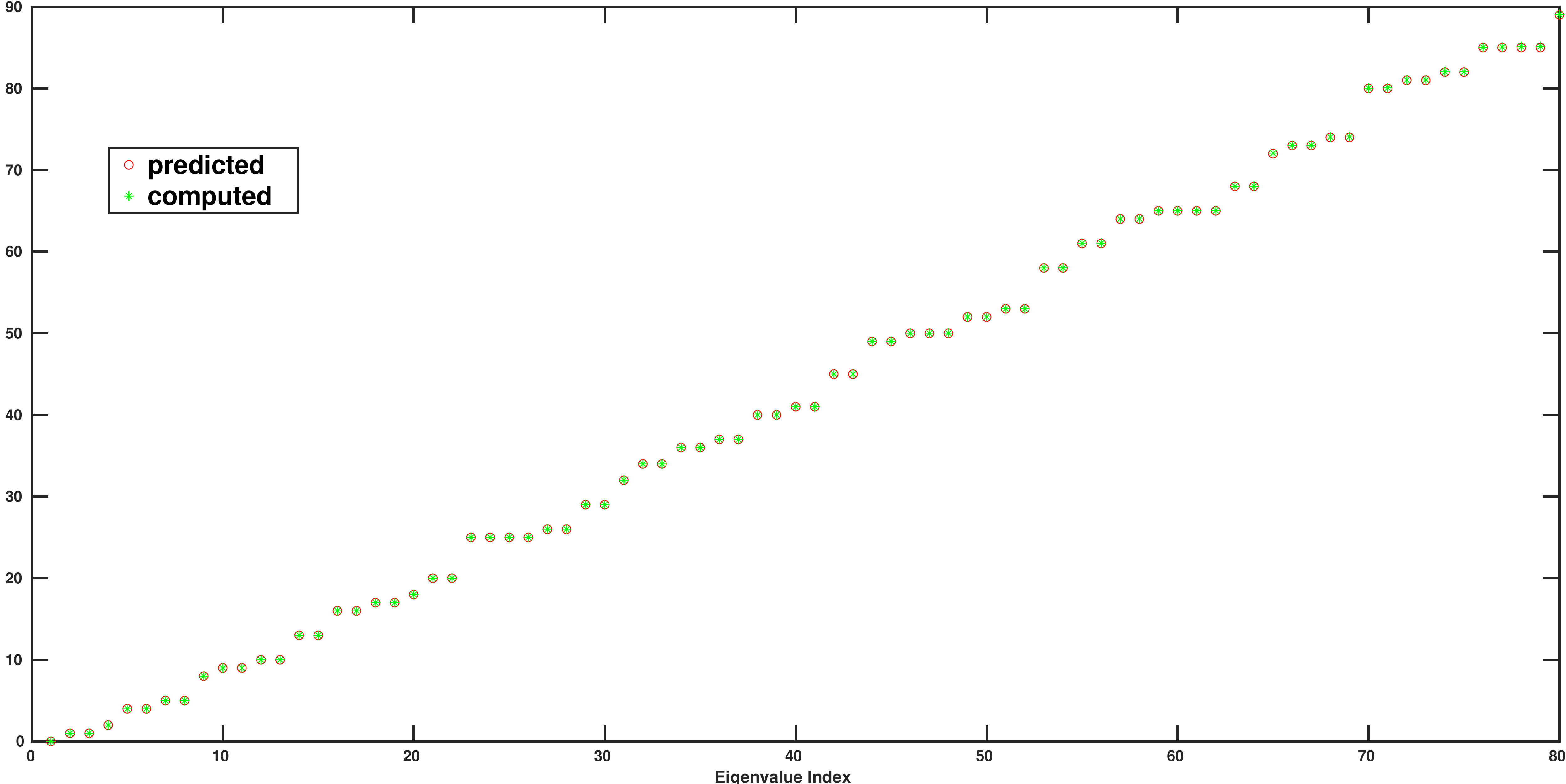}\\
  \includegraphics[width=0.65\textwidth]{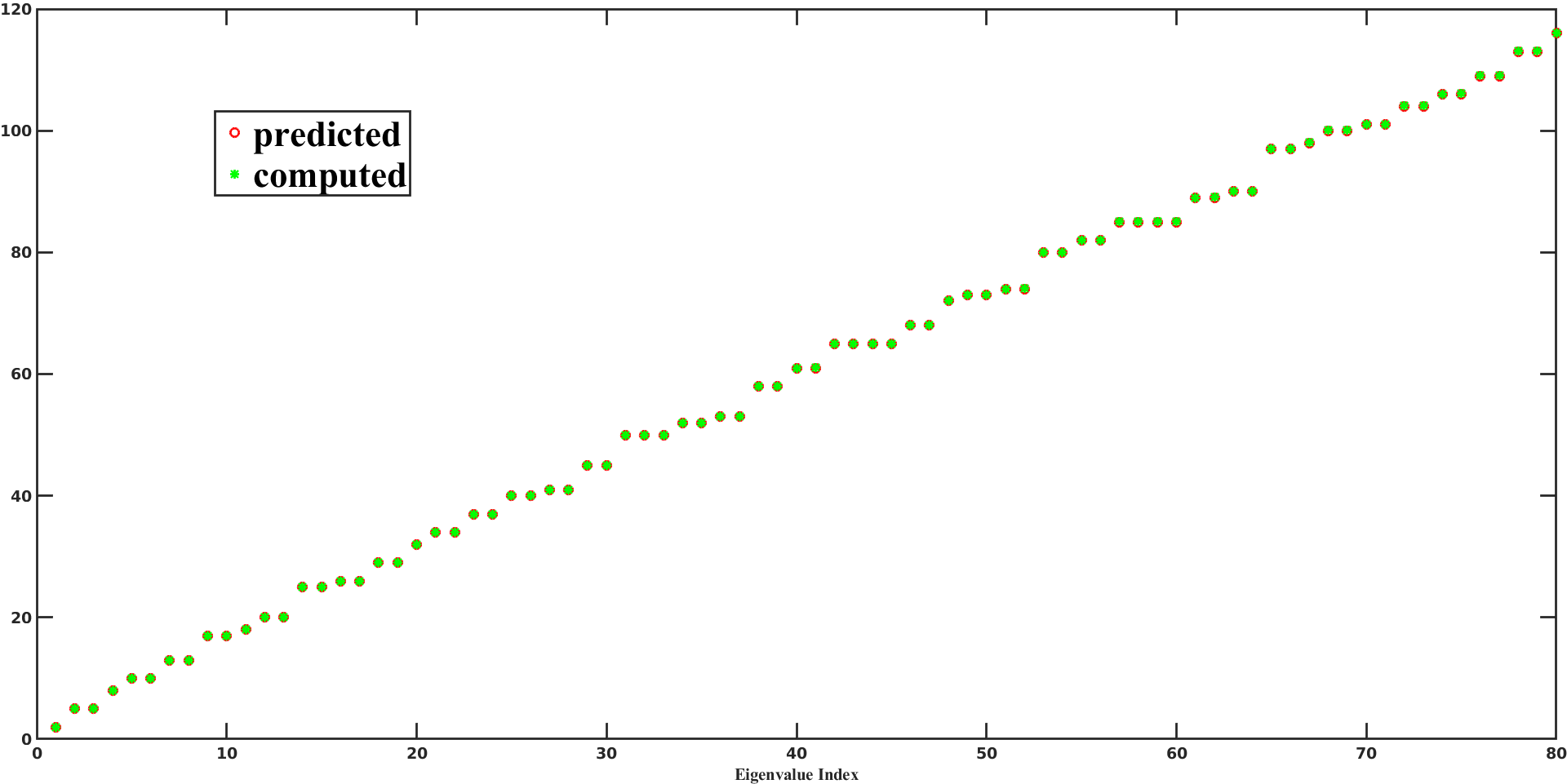}
  \caption{We show spectral correctness of the proposed method when solving the generalized eigenvalue problem in (\ref{eq:mevp}) for Neumann (note the one zero eigenvalue in the upper panel), and Dirichlet boundary conditions in lower panel. Here $\mu_r=\varepsilon_r=1$ holds on the whole domain $\Omega$.}
  \label{fig:eigvals_uniform_material}
\end{figure}
\noindent Fig.~\ref{fig:eigvals_uniform_material} shows the first 80 eigenvalues (all scaled by $\pi^2$) for both cases, computed with $p=4$ and $h=0.2$ using the \textbf{eigs} function \cite{matlab_eigs} in MATLAB. No spurious eigenvalues appear (we note there is exactly one zero eigenvalue for the Neumann problem). Thorough testing for all $p<8$ and various mesh sizes confirms the absence of spurious eigenmodes due to the method. The accuracy is quite impressive for the shown test, for which we also present the first ten computed eigenfunctions in Fig.~\ref{fig:eigfuns_uniform_material}, where we note that, when the associated eigenvalue has algebraic multiplicity bigger than one, we cannot easily force the chosen solver to yield the appropriate mutually orthogonal eigenfunctions instead of a pair of their linear combinations.
\begin{figure}[h!]
    \centering
  \includegraphics[width=0.48\textwidth]{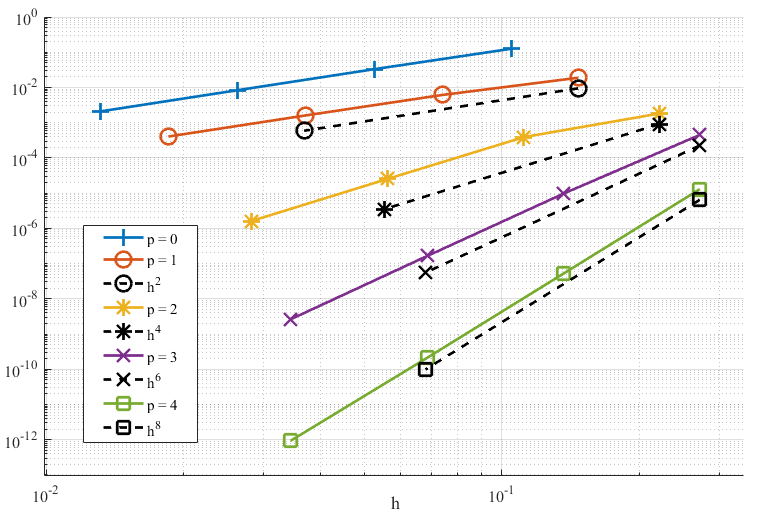}
  \includegraphics[width=0.48\textwidth]{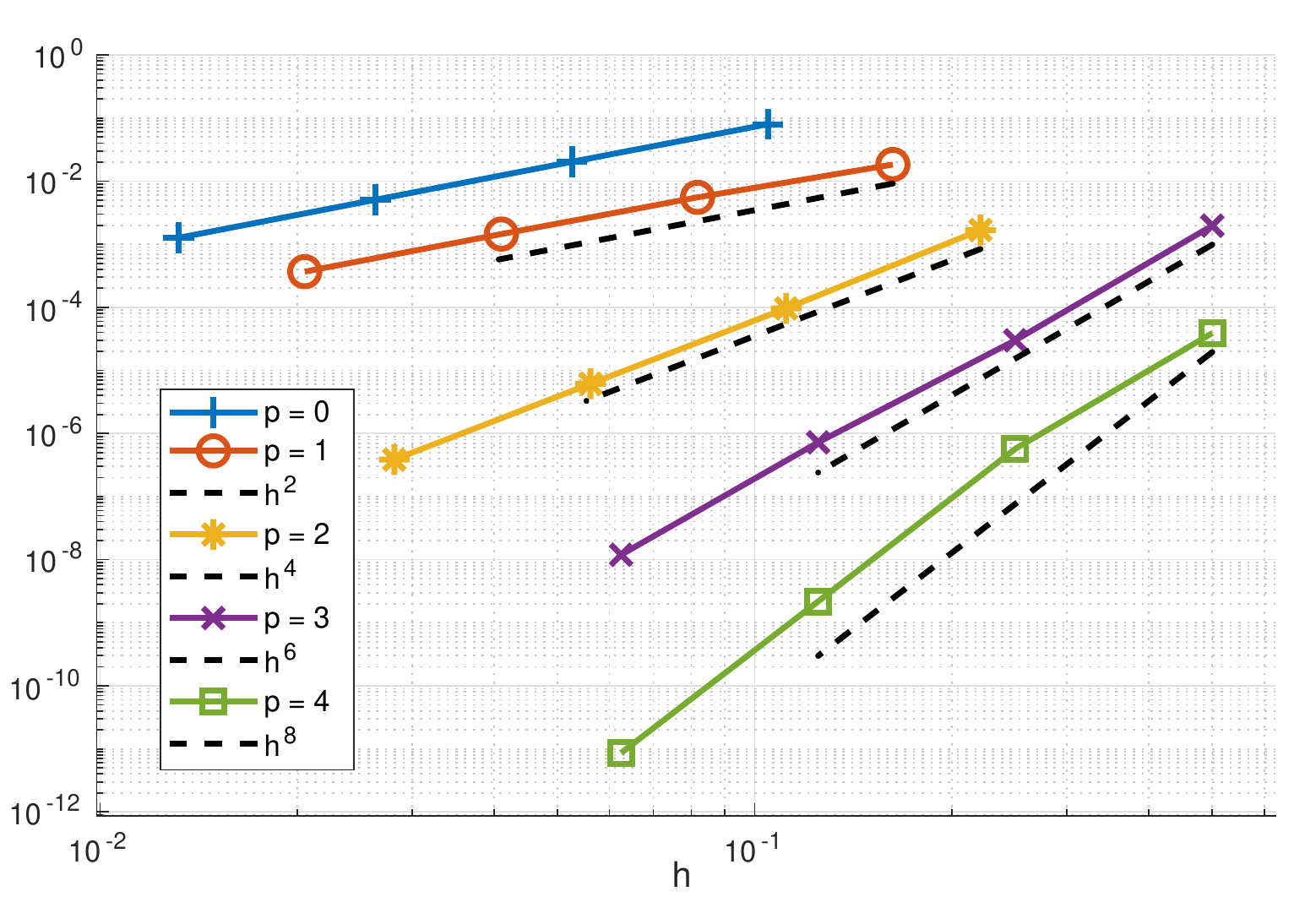}
  \caption{The error in approximating the 16th eigenvalue of the Dirichlet (left panel) and Neumann (right panel) problem with respect to the mesh size $h$ vanishes with the expected rate for the various tested polynomial orders $p$ for the case of a uniformly filled cavity.}
  \label{fig:h_conv_uniform_material}
\end{figure}
A more formal study of convergence is shown in Fig.~\ref{fig:h_conv_uniform_material}, which reveals $\mathcal{O}(h^{2p})$ convergence when polynomial degree $p$ is used and the mesh-size $h$ vanishes. This has been found to hold for the eigenvalues of both generalized problems (\ref{eq:mevp})--(\ref{eq:eevp}). The obtained rates are in perfect agreement with the theoretical studies of Buffa \& Perugia in \cite{perugia_buffa} for DG methods. We nevertheless stress that the analysis therein relies on the introduction of (mesh and polynomial degree dependent) penalty parameters, which should be big enough to ensure coercivity of the bilinear form on the l.h.s. of the weak formulation of the eigenvalue problem. No free parameters are instead present in the herein proposed formulation. 
\begin{figure}[h!]
  \centering
  \includegraphics[width=0.9\textwidth]{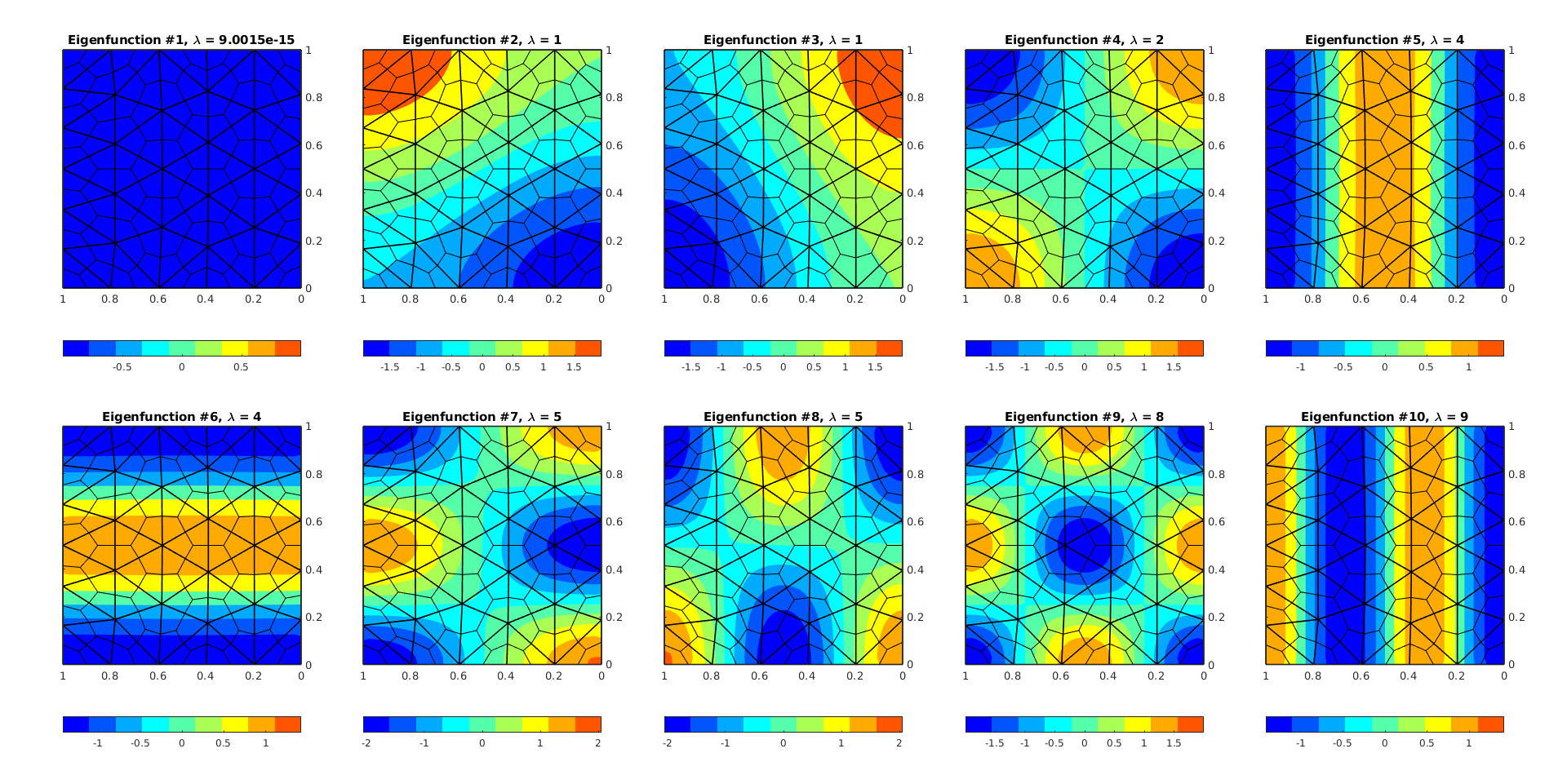}\\
  \includegraphics[width=0.9\textwidth]{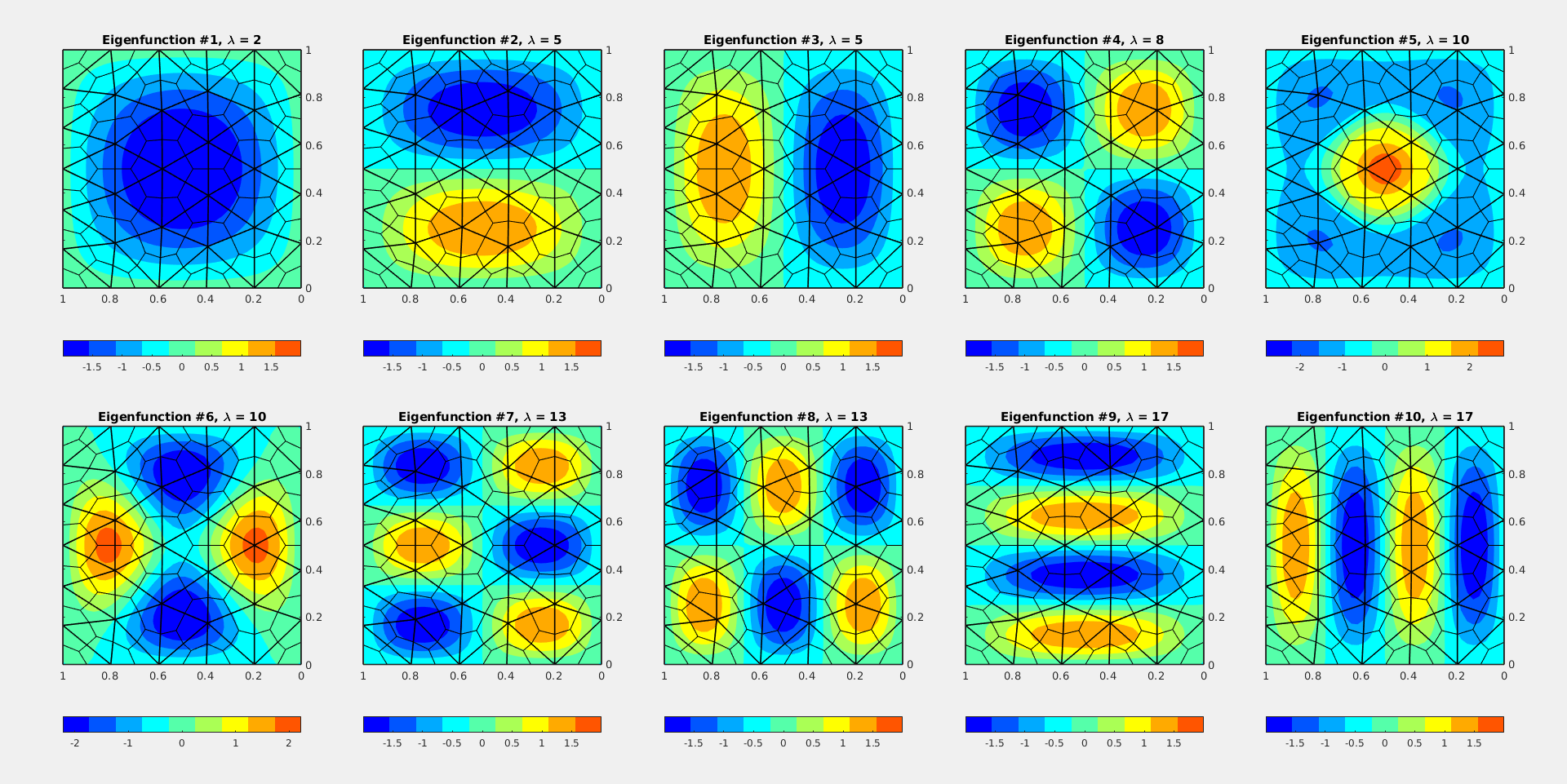}
  \caption{The first ten computed eigenfunctions for the case of the uniformly filled cavity: Neumann and Dirichlet case.}
  \label{fig:eigfuns_uniform_material}
\end{figure}
We furthermore remark that the $p=0$ version of the method shows $\mathcal{O}(h^{2p+2})$ convergence rate, but this super-convergence phenomenon is not translated to higher polynomial degrees, at least for the proposed sets of basis-functions. This fact clearly begs for further theoretical investigation.

As a more testing setup, we split our square $\Omega$ exactly into two halves, with a discontinuity aligned with the $y$ axis. We fill the left half of the cavity $\Omega_1 =[0,1/2]\!\times\![0,1]$ with a higher index material $\varepsilon_1=4$, which corresponds to halving the speed of light with respect to the vacuum parameters, which we keep intact on $\Omega_2 = \Omega\setminus\Omega_1$.
The exact values of the Neumann eigenvalues are not easily computable with pen and paper any more, as one needs to solve a transcendental equation (see \cite{pincherle}) involving hyperbolic functions. Yet, using any symbolic mathematics toolbox, we can estimate their values with arbitrary precision.
\begin{figure}[h!]
  \centering
  \includegraphics[width=\textwidth]{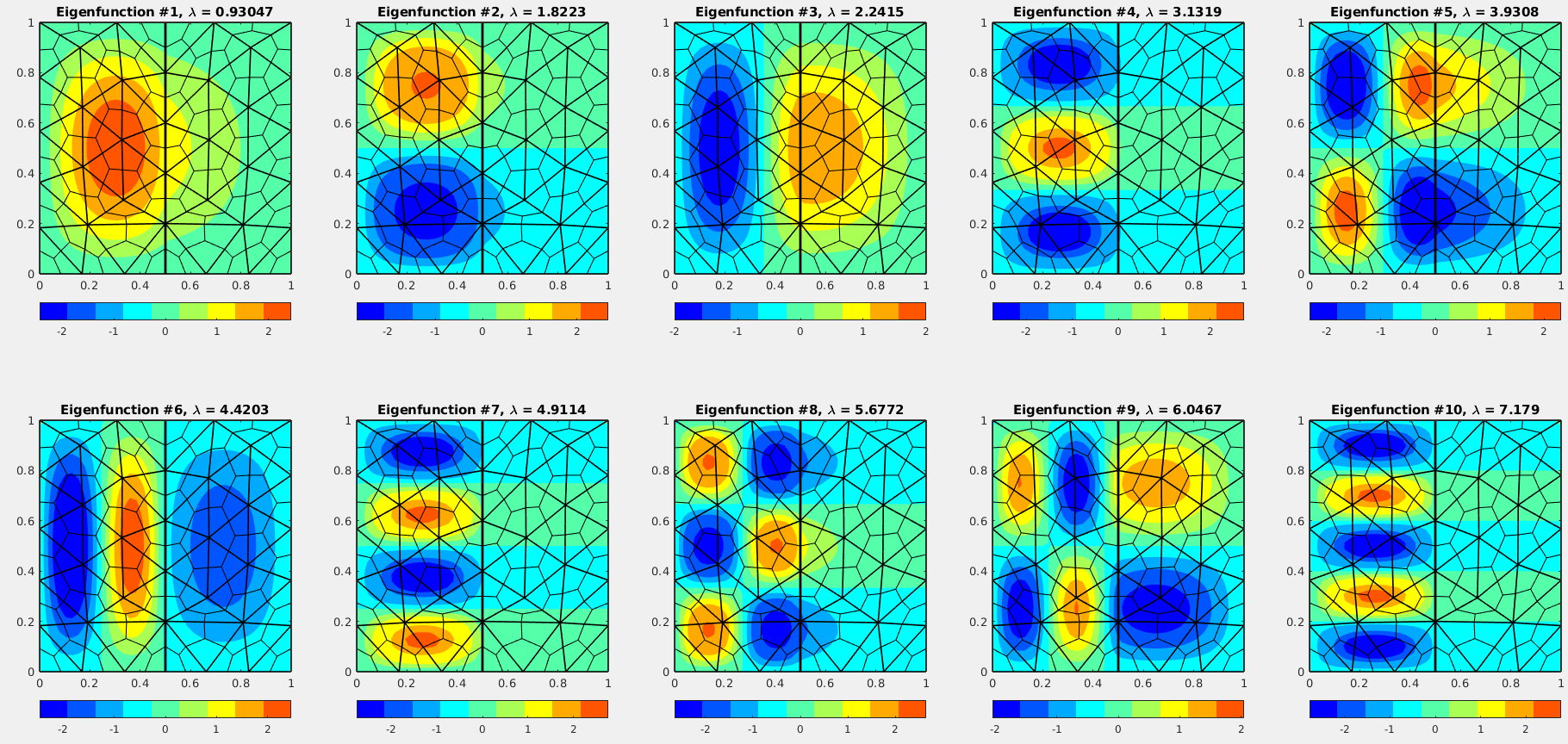}
  \caption{Eigenfunctions for the cavity with discontinuous permittivity.}
  \label{fig:eigfuns_2materials}
\end{figure}
We show the first ten eigenfunctions we computed (again with $p=4$ and $h=0.2$) as a reference in Fig.~\ref{fig:eigfuns_2materials}, where we stress the fact that ``partially evanescent'' modes are clearly visible: more formally these are modes with real wave-number $k = \left(k_x^2 + k_y^2\right)^{\frac{1}{2}}$ (due to the positive-definiteness property of the Laplace operator) but imaginary $k_x$.
\begin{figure}[h!]
  \centering
  \includegraphics[width=0.45\textwidth]{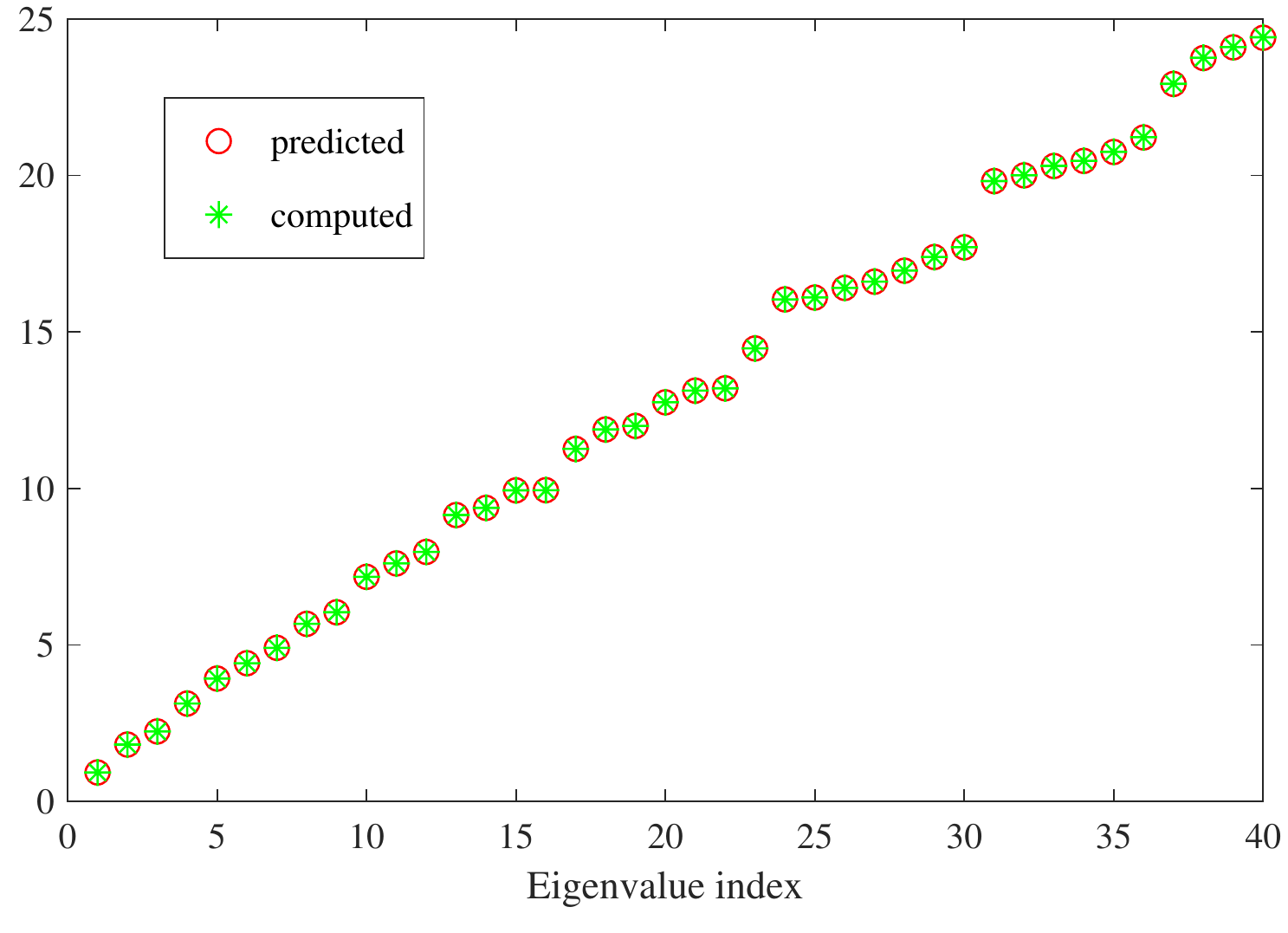}
  \includegraphics[width=0.50\textwidth]{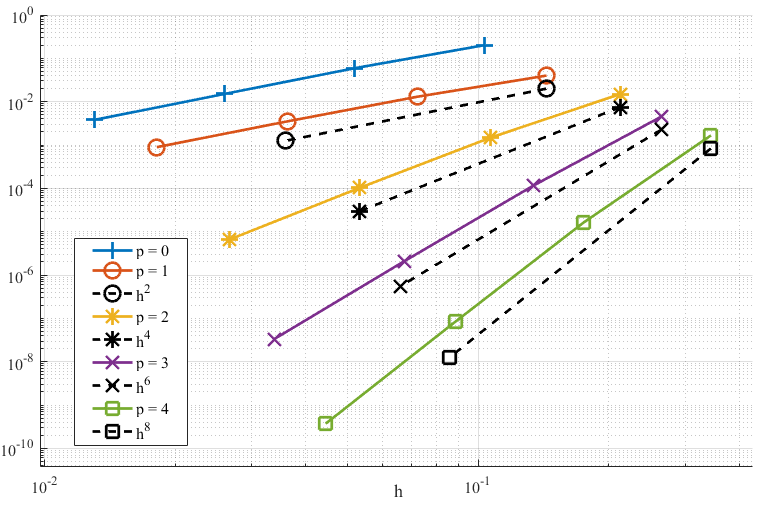}
  \caption{The method remains spectrally correct when approximating the Neumann problem for the inhomogenously filled square. The error in approximating the 20th eigenvalue is shown on the right to also still vanish with the optimal rate, with respect to the mesh size $h$, for the various tested polynomial orders $p$.}
  \label{fig:h_conv_2materials}
\end{figure}
This behaviour is confirmed by the distribution of eigenvalues in Fig.~\ref{fig:h_conv_2materials} (leftmost panel, which again shows no spurious solutions), where the eigenvalues are shown to be perturbed closer together towards zero. The optimal order of convergence with varying polynomial degree is also again confirmed for the discontinuous material case in Fig.~\ref{fig:h_conv_2materials} (right panel). 
\begin{figure}[h!]
  \centering
  \includegraphics[width=0.75\textwidth]{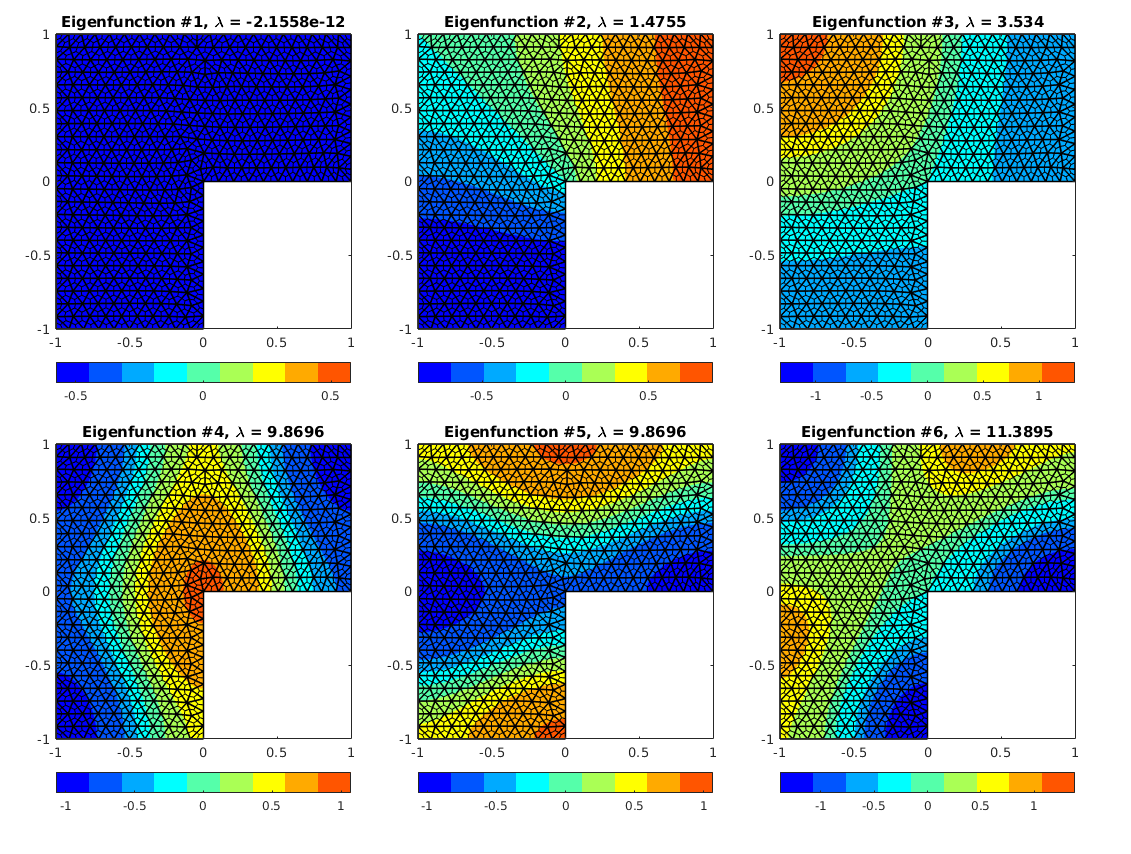}
  \caption{The first six eigenfunctions for the $L$-shape domain.}
  \label{fig:eigfuns_lshape}
\end{figure}
\begin{figure}[!h]
  \centering
  \includegraphics[width=0.48\textwidth]{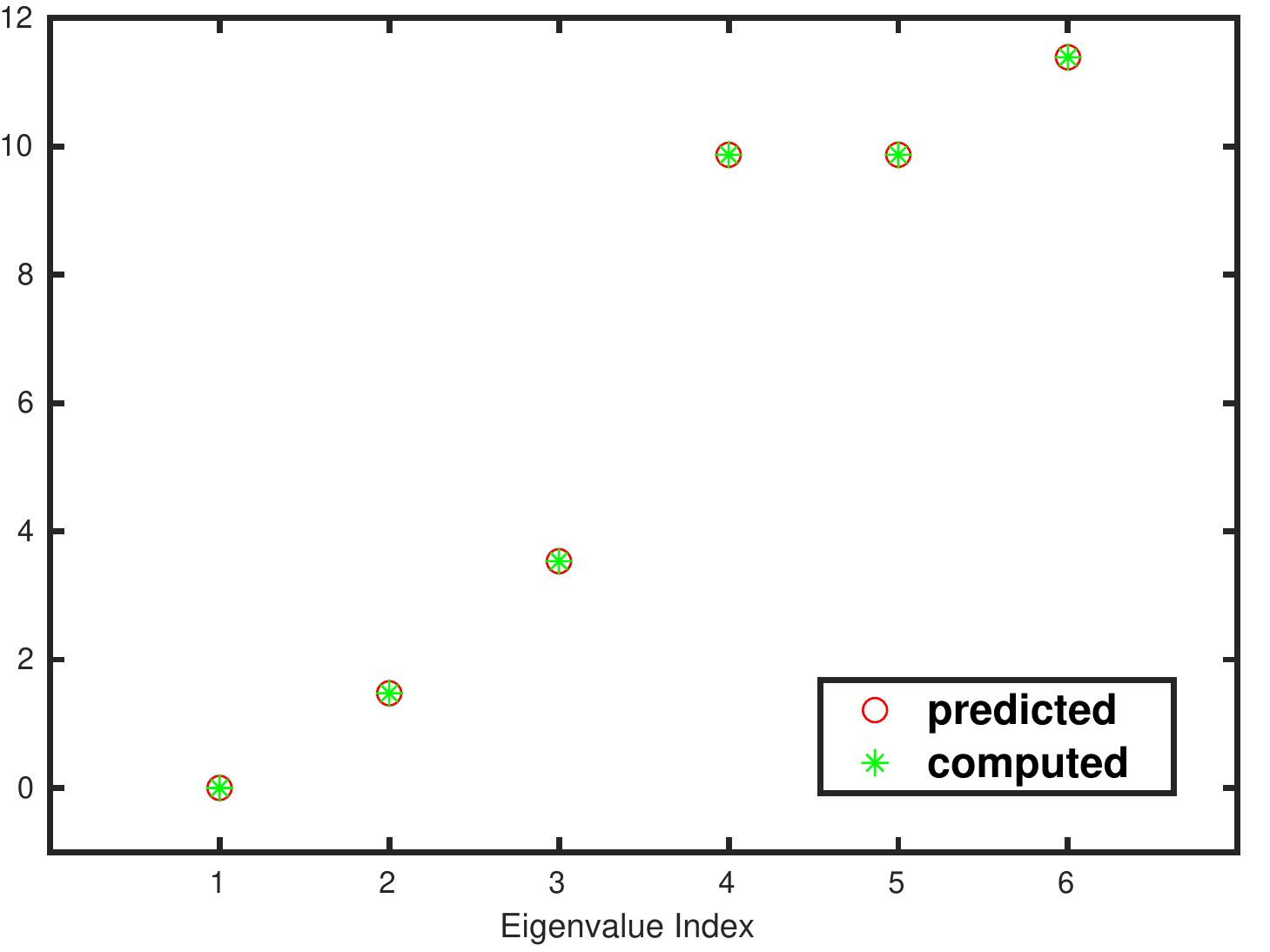}
  \includegraphics[width=0.48\textwidth]{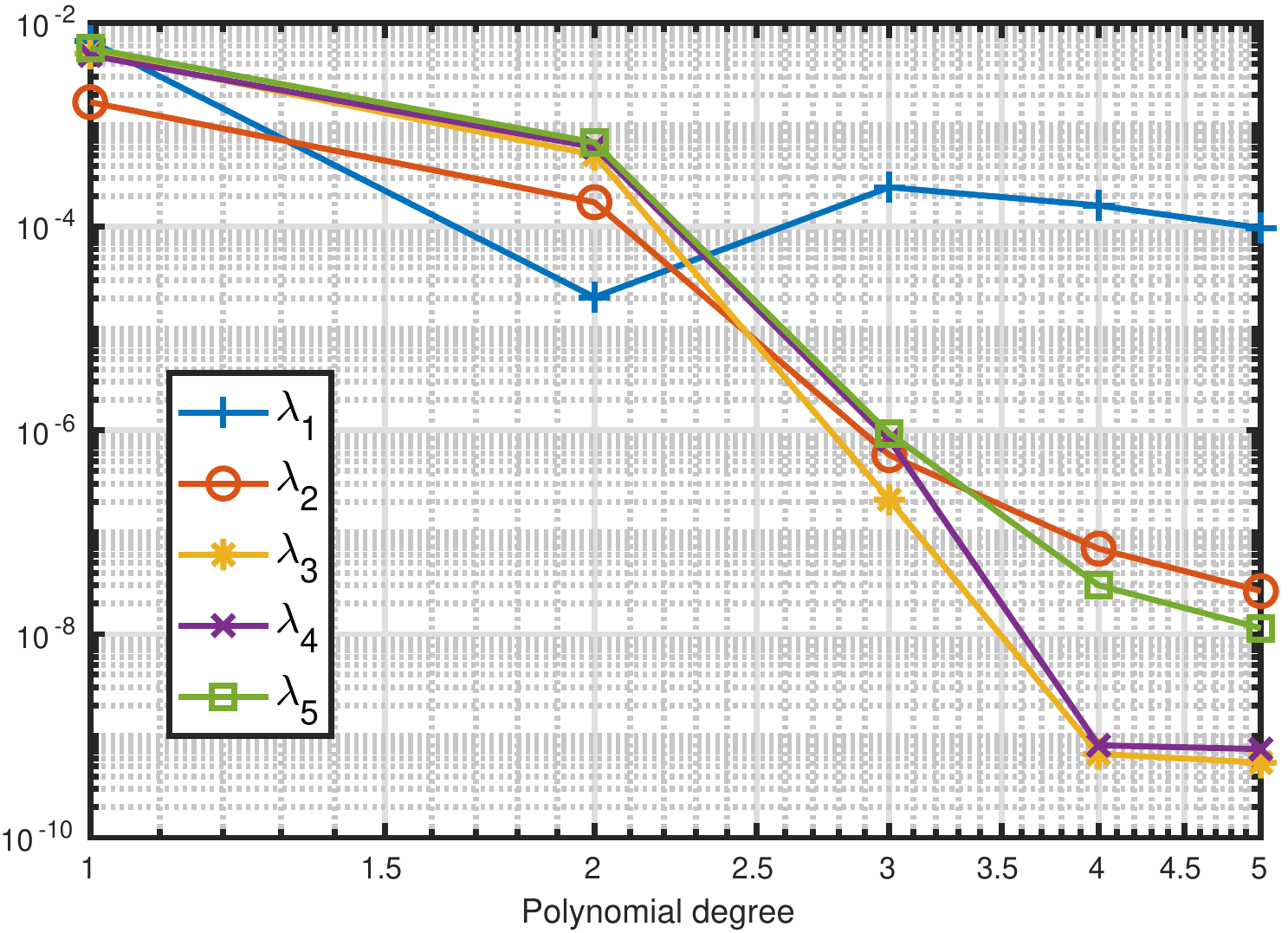}
  \caption{Spectral correctness check for the L-shape domain: first six eigenvalues (on the left). On the right we show convergence under $p$ refinement for the approximation of the first five non-zero eigenvalues.}
  \label{fig:eigvals_lshape}
\end{figure}

As a final test we show how the method behaves when singular solutions are expected. To this end we use the celebrated $L$-shaped domain: $\Omega = \{[-1,1]\times[-1,1]\}\setminus \{[0,1]\times[-1,0]\}$, for which the first six eigenfunctions when solving the Neumann problem are shown in Fig.~\ref{fig:eigfuns_lshape}, computed with a fine mesh. We show the six associated eigenvalues in \ref{fig:eigvals_lshape}, where values from \cite{dauge} (numerically estimated with the standard FEM, with eleven digits expected to be correct) are taken as a reference solution. Again no spurious solutions are observed. Naturally, optimal convergence cannot be expected (at least not with a naive mesh-refinement strategy) for the second and for the sixth eigenvalue, as the associated eigenfunctions have a strong unbounded singularity at the origin. Restoring optimal convergence by appropriate $hp$--refinement goes outside of the scope of the present contribution, while again providing an obvious research direction for future work.
\section{Conclusions}\label{sec:conclusio}
The proposed method presents very promising approximation properties, as shown both by theoretical and numerical investigations. Its potential for high-performance is preserved by the block-diagonal structure of the mass-matrices. Furthermore, the arbitrary order version also preserves the explicit splitting of the involved discrete operators into topological and geometric ones.
It has also not escaped our notice that, with slight modifications in the definitions of shape-functions and transformation rules, a method applicable to the acoustic wave equation (in the velocity--pressure first order formulation) instead of the Maxwell system can be obtained.
In lack of a complete theory, we hope that its extension to three spatial dimensions, which is currently being carried out and will be the topic of a subsequent submission, will further show its effectiveness as a fast solver for the time-dependent Maxwell equations.
Nevertheless, a more thorough theoretical analysis for the introduced functional spaces and the development of a spectral theory for the involved operators is a mandatory question to be investigated by researchers.

On a more critical note, we remark that the proposed local shape-functions, although in principle of arbitrary degree and hierarchical, are not practical for polynomial degrees $p>5$, since bases consisting of scaled monomials on a subset of the unit square will quickly yield ill-conditioned mass-matrix blocks (see also \cite{poptimal_basis}). This can be mended by partial orthonormalization techniques which do not pose any drastic theoretical hardships.

We finally remark that a reduction of the present high-order method to Cartesian-orthogonal meshes is straightforward (via the same barycentric subdivision procedure), and the resulting scheme degenerates to Yee's algorithm when piecewise-constant bases are used.

\section*{Acknowledgements}
Author Bernard Kapidani was financially supported by the Austrian Science Fund (FWF) under grant number F65: \emph{Taming Complexity in Partial Differential Equations}.

\appendix
\section{Explicit construction of the barycentric-dual cellular complex}\label{sec:app1}
For a the primal complex to be a conforming triangulation of $\Omega$, the following axioms are required to hold:
\begin{align}
& \forall \sigma_k \!\in\!\mathcal{C}_k^\Omega, \;\; \exists \sigma_{k+1} \!\in\! \mathcal{C}_{k+1}^\Omega \; s.t. \; \sigma\!\subset\!\partial \sigma_{k+1}, \;
( k\!\in\!\{0,1\} ), \label{eq:bnd_operator}\\
& \sigma' \cap \sigma'' \subset \mathcal{S}^{k-1}(\mathcal{C}^\Omega), \;\;\; \forall \sigma',\sigma''\!\in\! \mathcal{C}_k^\Omega,\, \sigma'\neq\sigma'', \label{eq:non_intersection}\\
& \bar{\Omega} = \bigcup_{k=0}^{k=2}\! \mathcal{C}_k^\Omega, \label{eq:closure_of_omega}
\end{align}
\noindent where $\bar{\Omega}$ denotes the closure of $\Omega$. In plain words: no hanging edges and nodes are allowed, and no overlap of equal dimensional simplexes.
The setup for the process of barycentric subdivision requires the definition of centroid (or barycenter) for a $k$-simplex:
\begin{align*}
&\overline{\mathbf{r}}(\sigma) =
\frac{1}{k\!+\!1} \hspace{-5mm} \sum\limits_{\mathbf{v} \in \{\mathcal{C}_0^\Omega \cap\partial\sigma\} } \hspace{-5mm}\!\mathbf{v},\;\;
\forall \sigma \!\in\!\mathcal{C}_k^\Omega,
\end{align*}
\noindent where we explicitly identify 0-simplexes with the associated Euclidean position vectors. With some more harmless abuse of notation we also define the half-open oriented line segment from point $\mathbf{r}_1$ to point $\mathbf{r}_2$ as 
\begin{align*}
& ]\mathbf{r}_1,\mathbf{r}_2] = \{ (1-\ell)\mathbf{r}_1+\ell\mathbf{r}_2 \; s.t. \; \ell\!\in ]0,1]\},
\end{align*}
\noindent and we remark that the extension of the above definition to closed straight segments $[\mathbf{r}_1,\mathbf{r}_2]$ and open straight segments $]\mathbf{r}_1,\mathbf{r}_2[$ is trivially deduced. We can now introduce the sets involved in the barycentric-dual cellular complex. We define first the duality mapping:
\begin{align*}
D_2\,:\,\mathcal{C}_2^\Omega &\mapsto \tilde{\mathcal{C}}_0^\Omega,\\
\mathcal{T} &\mapsto \overline{\mathbf{r}}(T),
\end{align*}
\noindent i.e. $\tilde{\mathcal{C}}_0^\Omega$ is the set of centroids of triangles, followed by
\begin{align*}
D_1\,:\,\mathcal{C}_1^\Omega  &\mapsto \tilde{\mathcal{C}}_1^\Omega,\\
E &\mapsto \bigcup_{\substack{\mathcal{T} \in {\mathcal{C}}_2^\Omega, \\  E\subset\partial\mathcal{T}} } ]\overline{\mathbf{r}}(T),\overline{\mathbf{r}}(E)],
\end{align*}
\noindent where some more ingenuity was needed: for edges $E\in\mathcal{C}_1^\Omega$ which lie in the interior of $\Omega$, we will always find a pair of triangles $\mathcal{T}$, $\mathcal{T}'$ which share $E$ in their boundaries, but on $\partial\Omega$ we are left with \emph{halved} dual edges (se for example \cite{dgatap}). The above definition of $\tilde{\mathcal{C}}_{\Omega}^1$ clearly accommodates both cases by ``looping'' first over edges and then over triangles with the given edge in their boundary.
With a bit more involved notation we finally introduce
\begin{align*}
D_0\,:\,\mathcal{C}_0^\Omega  &\mapsto \tilde{\mathcal{C}}_2^\Omega,\\
\mathbf{v} &\mapsto \bigcup_{\substack{E \in {\mathcal{C}}_1^\Omega, \\  \mathbf{v}\subset\partial{e}} }
\bigcup_{\substack{\mathcal{T} \in {\mathcal{C}}_2^\Omega, \\  E\subset\partial\mathcal{T}} }
\text{Conv}\{\overline{\mathbf{r}}(T),\overline{\mathbf{r}}(E),\mathbf{v}\},
\end{align*}
\noindent where, again, $\text{Conv}\{\cdot,\dots,\cdot\}$ denotes the convex hull of its arguments.
Proving that the $D_k$ mappings we have introduced are one-to-one is a matter of straightforward counting. 
It is straight-forward to see that (\ref{eq:bnd_operator}), (\ref{eq:non_intersection}) and (\ref{eq:closure_of_omega}) hold also for $\tilde{\mathcal{C}}^\Omega$. What is trickier is the fact that $\partial\Omega$ becomes part of the boundary of dual 2-cells of $\tilde{\mathcal{C}}_2^\Omega$ without actually being the image under $D_1$ of any edge in $\mathcal{C}_1^\Omega$. This is nevertheless not inconsistent with the definition of a cellular complex, but has practical implications for the definition of natural and essential boundary conditions, as further discussed in Section \ref{sec:funspaces} (the reader may also see \cite{refundation} for treatments of the subject for zero-order versions of the CM).

\bibliographystyle{abbrvnat}
\bibliography{mybibfile}

\begin{thebibliography}{47}
\providecommand{\natexlab}[1]{#1}
\providecommand{\url}[1]{\texttt{#1}}
\expandafter\ifx\csname urlstyle\endcsname\relax
  \providecommand{\doi}[1]{doi: #1}\else
  \providecommand{\doi}{doi: \begingroup \urlstyle{rm}\Url}\fi

\bibitem[{Alotto} et~al.(2006){Alotto}, {De Cian}, and {Molinari}]{pigitd}
P.~{Alotto}, A.~{De Cian}, and G.~{Molinari}.
\newblock A time-domain 3-d full-maxwell solver based on the cell method.
\newblock \emph{IEEE Transactions on Magnetics}, 42\penalty0 (4):\penalty0
  799--802, April 2006.
\newblock ISSN 1941-0069.
\newblock \doi{10.1109/TMAG.2006.871381}.

\bibitem[{Auchmann} and {Kurz}(2006)]{auchmann}
B.~{Auchmann} and S.~{Kurz}.
\newblock A geometrically defined discrete hodge operator on simplicial cells.
\newblock \emph{IEEE Transactions on Magnetics}, 42\penalty0 (4):\penalty0
  643--646, April 2006.
\newblock ISSN 1941-0069.
\newblock \doi{10.1109/TMAG.2006.870932}.

\bibitem[Babu{\v s}ka et~al.(1989)Babu{\v s}ka, Griebel, and
  Pitk{\"a}ranta]{poptimal_basis}
I.~Babu{\v s}ka, M.~Griebel, and J.~Pitk{\"a}ranta.
\newblock The problem of selecting the shape functions for a p-type finite
  element.
\newblock \emph{International Journal for Numerical Methods in Engineering},
  28\penalty0 (8):\penalty0 1891--1908, 1989.
\newblock \doi{10.1002/nme.1620280813}.

\bibitem[{Bossavit}(1990)]{spurious_modes}
A.~{Bossavit}.
\newblock Solving maxwell equations in a closed cavity, and the question of
  'spurious modes'.
\newblock \emph{IEEE Transactions on Magnetics}, 26\penalty0 (2):\penalty0
  702--705, March 1990.
\newblock ISSN 1941-0069.
\newblock \doi{10.1109/20.106414}.

\bibitem[{Bossavit} and {Kettunen}(2000)]{bossavit_yee}
A.~{Bossavit} and L.~{Kettunen}.
\newblock Yee-like schemes on staggered cellular grids: a synthesis between fit
  and fem approaches.
\newblock \emph{IEEE Transactions on Magnetics}, 36\penalty0 (4):\penalty0
  861--867, July 2000.
\newblock ISSN 1941-0069.
\newblock \doi{10.1109/20.877580}.

\bibitem[Buffa and Perugia(2006)]{perugia_buffa}
A.~Buffa and I.~Perugia.
\newblock Discontinuous galerkin approximation of the maxwell eigenproblem.
\newblock \emph{SIAM Journal on Numerical Analysis}, 44\penalty0 (5):\penalty0
  2198--2226, 2006.
\newblock \doi{10.1137/050636887}.

\bibitem[Chung et~al.(2013)Chung, Ciarlet, and Yu]{chung}
E.~T. Chung, P.~Ciarlet, and T.~F. Yu.
\newblock Convergence and superconvergence of staggered discontinuous galerkin
  methods for the three-dimensional maxwell's equations on cartesian grids.
\newblock \emph{Journal of Computational Physics}, 235:\penalty0 14 -- 31,
  2013.
\newblock ISSN 0021-9991.
\newblock \doi{10.1016/j.jcp.2012.10.019}.

\bibitem[Cicuttin et~al.(2018)Cicuttin, Codecasa, Kapidani, Specogna, and
  Trevisan]{dgagpu}
M.~Cicuttin, L.~Codecasa, B.~Kapidani, R.~Specogna, and F.~Trevisan.
\newblock Gpu accelerated time-domain discrete geometric approach method for
  maxwell's equations on tetrahedral grids.
\newblock \emph{IEEE Transactions on Magnetics}, 54\penalty0 (3):\penalty0
  1--4, 2018.

\bibitem[{Codecasa}(2014)]{refundation}
L.~{Codecasa}.
\newblock Refoundation of the cell method using augmented dual grids.
\newblock \emph{IEEE Transactions on Magnetics}, 50\penalty0 (2):\penalty0
  497--500, Feb 2014.
\newblock ISSN 1941-0069.
\newblock \doi{10.1109/TMAG.2013.2280504}.

\bibitem[{Codecasa} and {Politi}(2008)]{codecasa_politi}
L.~{Codecasa} and M.~{Politi}.
\newblock Explicit, consistent, and conditionally stable extension of fd-td to
  tetrahedral grids by fit.
\newblock \emph{IEEE Transactions on Magnetics}, 44\penalty0 (6):\penalty0
  1258--1261, June 2008.
\newblock \doi{10.1109/TMAG.2007.916310}.

\bibitem[{Codecasa} et~al.(2018){Codecasa}, {Kapidani}, {Specogna}, and
  {Trevisan}]{dgatap}
L.~{Codecasa}, B.~{Kapidani}, R.~{Specogna}, and F.~{Trevisan}.
\newblock Novel fdtd technique over tetrahedral grids for conductive media.
\newblock \emph{IEEE Transactions on Antennas and Propagation}, 66\penalty0
  (10):\penalty0 5387--5396, Oct 2018.
\newblock ISSN 1558-2221.
\newblock \doi{10.1109/TAP.2018.2862244}.

\bibitem[Dauge(2004)]{dauge}
M.~Dauge.
\newblock Personal website, 2004.
\newblock URL \url{https://perso.univ-rennes1.fr/monique.dauge/benchmax.html}.
\newblock accessed in October 2019.

\bibitem[Egger et~al.(2015)Egger, Kretzschmar, Schnepp, and Weiland]{egger}
H.~Egger, F.~Kretzschmar, S.~M. Schnepp, and T.~Weiland.
\newblock A space-time discontinuous galerkin trefftz method for time dependent
  maxwell's equations.
\newblock \emph{SIAM Journal on Scientific Computing}, 37\penalty0
  (5):\penalty0 B689--B711, 2015.
\newblock \doi{10.1137/140999323}.

\bibitem[Forest and Ruth(1990)]{RuthForest}
E.~Forest and R.~D. Ruth.
\newblock Fourth-order symplectic integration.
\newblock \emph{Physica D: Nonlinear Phenomena}, 43\penalty0 (1):\penalty0 105
  -- 117, 1990.
\newblock ISSN 0167-2789.
\newblock \doi{https://doi.org/10.1016/0167-2789(90)90019-L}.

\bibitem[Grote and Mitkova(2010)]{grote}
M.~J. Grote and T.~Mitkova.
\newblock Explicit local time-stepping methods for maxwell's equations.
\newblock \emph{Journal of Computational and Applied Mathematics}, 234\penalty0
  (12):\penalty0 3283 -- 3302, 2010.
\newblock ISSN 0377-0427.
\newblock \doi{https://doi.org/10.1016/j.cam.2010.04.028}.

\bibitem[He and Teixeira(2005)]{he_tex_dofs}
B.~He and F.~Teixeira.
\newblock On the degrees of freedom of lattice electrodynamics.
\newblock \emph{Physics Letters A}, 336\penalty0 (1):\penalty0 1 -- 7, 2005.
\newblock ISSN 0375-9601.
\newblock \doi{https://doi.org/10.1016/j.physleta.2005.01.001}.

\bibitem[Heaviside(1892)]{heaviside1}
O.~Heaviside.
\newblock Xi. on the forces, stresses, and fluxes of energy in the
  electromagnetic field.
\newblock \emph{Philosophical Transactions of the Royal Society of London.
  (A.)}, 183:\penalty0 423--480, 1892.
\newblock \doi{10.1098/rsta.1892.0011}.

\bibitem[Hesthaven and Warburton(2002)]{warburton_jcp}
J.~Hesthaven and T.~Warburton.
\newblock Nodal high-order methods on unstructured grids: I. time-domain
  solution of maxwell's equations.
\newblock \emph{Journal of Computational Physics}, 181\penalty0 (1):\penalty0
  186 -- 221, 2002.
\newblock ISSN 0021-9991.
\newblock \doi{https://doi.org/10.1006/jcph.2002.7118}.

\bibitem[Hiptmair(2001)]{hiptmair2001}
R.~Hiptmair.
\newblock Discrete hodge operators.
\newblock \emph{Numerische Mathematik}, 90\penalty0 (2):\penalty0 265--289, Dec
  2001.
\newblock ISSN 0945-3245.
\newblock \doi{10.1007/s002110100295}.

\bibitem[Huang et~al.(2011)Huang, Li, and Yang]{huang_sip}
Y.~Huang, J.~Li, and W.~Yang.
\newblock Interior penalty dg methods for maxwell's equations in dispersive
  media.
\newblock \emph{Journal of Computational Physics}, 230\penalty0 (12):\penalty0
  4559 -- 4570, 2011.
\newblock ISSN 0021-9991.
\newblock \doi{https://doi.org/10.1016/j.jcp.2011.02.031}.

\bibitem[Jin(2014)]{jin}
J.~Jin.
\newblock \emph{The Finite Element Method in Electromagnetics}.
\newblock Wiley-IEEE Press, 3rd edition, 2014.
\newblock ISBN 1118571363, 9781118571361.

\bibitem[Kapidani and Sch{\"o}berl(2020)]{bk_js}
B.~Kapidani and J.~Sch{\"o}berl.
\newblock A matrix-free discontinuous galerkin method for the time dependent
  maxwell equations in unbounded domains.
\newblock to be submitted, 2020.

\bibitem[{Kapidani} et~al.(2020){Kapidani}, {Codecasa}, and
  {Specogna}]{dga_as_dg}
B.~{Kapidani}, L.~{Codecasa}, and R.~{Specogna}.
\newblock The time-domain cell method is a coupling of two explicit
  discontinuous galerkin schemes with continuous fluxes.
\newblock \emph{IEEE Transactions on Magnetics}, 56\penalty0 (1):\penalty0
  1--4, Jan 2020.
\newblock ISSN 1941-0069.
\newblock \doi{10.1109/TMAG.2019.2952015}.

\bibitem[Koutschan et~al.(2012)Koutschan, Lehrenfeld, and
  Sch{\"o}berl]{christophs}
C.~Koutschan, C.~Lehrenfeld, and J.~Sch{\"o}berl.
\newblock \emph{Computer Algebra Meets Finite Elements: An Efficient
  Implementation for Maxwell's Equations}, pages 105--121.
\newblock Springer Vienna, Vienna, 2012.
\newblock ISBN 978-3-7091-0794-2.
\newblock \doi{10.1007/978-3-7091-0794-2_6}.

\bibitem[LeVeque(2002)]{leveque_2002}
R.~J. LeVeque.
\newblock \emph{Finite Volume Methods for Hyperbolic Problems}.
\newblock Cambridge Texts in Applied Mathematics. Cambridge University Press,
  2002.
\newblock \doi{10.1017/CBO9780511791253}.

\bibitem[Marrone(2001)]{marrone}
M.~Marrone.
\newblock Computational aspects of the cell method in electrodynamics.
\newblock \emph{Progress In Electromagnetics Research, PIER}, 32:\penalty0
  317--356, 2001.

\bibitem[{Matsuo}(2011)]{matsuo}
T.~{Matsuo}.
\newblock Space-time finite integration method for electromagnetic field
  computation.
\newblock \emph{IEEE Transactions on Magnetics}, 47\penalty0 (5):\penalty0
  1530--1533, May 2011.
\newblock ISSN 1941-0069.
\newblock \doi{10.1109/TMAG.2010.2096503}.

\bibitem[{Maxwell}(1862)]{maxwell1}
J.~{Maxwell}.
\newblock On physical lines of force.
\newblock \emph{Philosophical Magazine}, 90\penalty0 (sup1):\penalty0 11--23,
  1862.
\newblock \doi{10.1080/14786431003659180}.

\bibitem[Monk(2003)]{monk}
P.~Monk.
\newblock \emph{Finite Element Methods for Maxwell's Equations (Numerical
  Mathematics and Scientific Computation)}.
\newblock Clarendon Press, June 2003.

\bibitem[Munkres(1984)]{munkres}
J.~Munkres.
\newblock \emph{Elements Of Algebraic Topology}.
\newblock Boca Raton: CRC Press, 1984.
\newblock \doi{https://doi.org/10.1201/9780429493911}.

\bibitem[Nedelec(1980)]{nedelec1980}
J.~C. Nedelec.
\newblock Mixed finite elements in $\mathbb{R}3$.
\newblock \emph{Numerische Mathematik}, 35\penalty0 (3):\penalty0 315--341, Sep
  1980.
\newblock ISSN 0945-3245.
\newblock \doi{10.1007/BF01396415}.

\bibitem[Oskooi et~al.(2010)Oskooi, Roundy, Ibanescu, Bermel, Joannopoulos, and
  Johnson]{meep}
A.~F. Oskooi, D.~Roundy, M.~Ibanescu, P.~Bermel, J.~Joannopoulos, and S.~G.
  Johnson.
\newblock Meep: A flexible free-software package for electromagnetic
  simulations by the fdtd method.
\newblock \emph{Computer Physics Communications}, 181\penalty0 (3):\penalty0
  687 -- 702, 2010.
\newblock ISSN 0010-4655.
\newblock \doi{https://doi.org/10.1016/j.cpc.2009.11.008}.

\bibitem[Persson and Strang(2004)]{distmesh}
P.-O. Persson and G.~Strang.
\newblock A simple mesh generator in matlab.
\newblock \emph{SIAM Review}, 46\penalty0 (2):\penalty0 329--345, 2004.
\newblock \doi{10.1137/S0036144503429121}.

\bibitem[Pincherle(1944)]{pincherle}
L.~Pincherle.
\newblock Electromagnetic waves in metal tubes filled longitudinally with two
  dielectrics.
\newblock \emph{Physical Review}, 66\penalty0 (118), 1944.

\bibitem[Sch{\"o}berl(1997)]{netgen}
J.~Sch{\"o}berl.
\newblock Netgen an advancing front 2d/3d-mesh generator based on abstract
  rules.
\newblock \emph{Computing and Visualization in Science}, 1\penalty0
  (1):\penalty0 41--52, Jul 1997.
\newblock ISSN 1432-9360.
\newblock \doi{10.1007/s007910050004}.

\bibitem[Stern et~al.(2015)Stern, Tong, Desbrun, and Marsden]{Stern2015}
A.~Stern, Y.~Tong, M.~Desbrun, and J.~E. Marsden.
\newblock \emph{Geometric Computational Electrodynamics with Variational
  Integrators and Discrete Differential Forms}, pages 437--475.
\newblock Springer New York, New York, NY, 2015.
\newblock ISBN 978-1-4939-2441-7.
\newblock \doi{10.1007/978-1-4939-2441-7_19}.

\bibitem[Stewart(2002)]{matlab_eigs}
G.~W. Stewart.
\newblock A krylov--schur algorithm for large eigenproblems.
\newblock \emph{SIAM Journal on Matrix Analysis and Applications}, 23\penalty0
  (3):\penalty0 601--614, 2002.
\newblock \doi{10.1137/S0895479800371529}.

\bibitem[Sun and Tse(2011)]{jcp_symplectic}
Y.~Sun and P.~Tse.
\newblock Symplectic and multisymplectic numerical methods for maxwell's
  equations.
\newblock \emph{Journal of Computational Physics}, 230\penalty0 (5):\penalty0
  2076 -- 2094, 2011.
\newblock ISSN 0021-9991.
\newblock \doi{https://doi.org/10.1016/j.jcp.2010.12.006}.

\bibitem[Taflove and Hagness(2000)]{taflove}
A.~Taflove and S.~Hagness, editors.
\newblock \emph{Computational Electromagnetics, The Finite Difference Time
  Domain Method, 2nd Ed.}
\newblock Artech House, 2000.

\bibitem[{Tarhasaari} et~al.(1999){Tarhasaari}, {Kettunen}, and
  {Bossavit}]{kettunen_hodge}
T.~{Tarhasaari}, L.~{Kettunen}, and A.~{Bossavit}.
\newblock Some realizations of a discrete hodge operator: a reinterpretation of
  finite element techniques [for em field analysis].
\newblock \emph{IEEE Transactions on Magnetics}, 35\penalty0 (3):\penalty0
  1494--1497, May 1999.
\newblock ISSN 1941-0069.
\newblock \doi{10.1109/20.767250}.

\bibitem[Teixeira and Chew(1999)]{teixeira_lattice}
F.~L. Teixeira and W.~C. Chew.
\newblock Lattice electromagnetic theory from a topological viewpoint.
\newblock \emph{Journal of Mathematical Physics}, 40\penalty0 (1):\penalty0
  169--187, 1999.
\newblock \doi{10.1063/1.532767}.

\bibitem[Titarev and Toro(2002)]{Titarev2002}
V.~A. Titarev and E.~F. Toro.
\newblock Ader: Arbitrary high order godunov approach.
\newblock \emph{Journal of Scientific Computing}, 17\penalty0 (1):\penalty0
  609--618, Dec 2002.
\newblock ISSN 1573-7691.
\newblock \doi{10.1023/A:1015126814947}.

\bibitem[Tonti(2001)]{tonti}
E.~Tonti.
\newblock Finite formulation of the electromagnetic field.
\newblock \emph{Progress in electromagnetics research}, 32:\penalty0 1--44,
  2001.

\bibitem[Warburton(2013)]{warburton_matrix_free}
T.~Warburton.
\newblock A low-storage curvilinear discontinuous galerkin method for wave
  problems.
\newblock \emph{SIAM Journal on Scientific Computing}, 35\penalty0
  (4):\penalty0 A1987--A2012, 2013.
\newblock \doi{10.1137/120899662}.

\bibitem[Weiland(1996)]{weiland}
T.~Weiland.
\newblock Time domain electromagnetic field computation with finite difference
  methods.
\newblock \emph{International Journal of Numerical Modeling}, 9:\penalty0
  295--319, 1996.

\bibitem[{White}(1999)]{white}
D.~A. {White}.
\newblock Orthogonal vector basis functions for time domain finite element
  solution of the vector wave equation [em field analysis].
\newblock \emph{IEEE Transactions on Magnetics}, 35\penalty0 (3):\penalty0
  1458--1461, May 1999.
\newblock ISSN 1941-0069.
\newblock \doi{10.1109/20.767241}.

\bibitem[Yee(1966)]{yee}
K.~Yee.
\newblock Numerical solution of initial boundary value problems involving
  maxwell's equations in isotropic media.
\newblock \emph{IEEE Transactions on Antennas and Propagation}, 14\penalty0
  (3):\penalty0 302--307, May 1966.
\newblock ISSN 1558-2221.
\newblock \doi{10.1109/TAP.1966.1138693}.

\end{thebibliography}

\end{document}